\newcommand{\balpha}{\boldsymbol{\alpha}}
\newtheorem{theorem}{Theorem}
\newtheorem{lemma}[theorem]{Lemma}
\title{\textbf{A discomfort-informed adaptive Gibbs sampler for finite mixture models}}
\author[1,2]{Davide Fabbrico}
\author[3]{Andi Q. Wang}
\author[3,4]{Sebastiano Grazzi}
\author[3]{Alice Corbella}
\author[3]{Gareth~O.~Roberts}
\author[2,5]{Sylvia Richardson}
\author[2,3,*]{Filippo Pagani}
\author[2,*]{Paul~D.~W.~Kirk}
\date{\today}
\affil[1]{University of Florence, Department of Statistics, IT}
\affil[2]{MRC Biostatistics Unit, University of Cambridge, UK}
\affil[3]{University of Warwick, Department of Statistics, UK}
\affil[4]{Bocconi University, Department of Decision Sciences, IT }
\affil[5]{Integreat---Norwegian Centre for knowledge-driven machine learning, Department of Mathematics, University of Oslo, NO \vspace{0.2in}}
\affil[*]{Corresponding authors: Filippo Pagani and Paul D. W. Kirk, filippo.pagani@warwick.ac.uk, paul.kirk@mrc-bsu.cam.ac.uk}
\providecommand{\keywords}[1]
{
  \small	
  \textbf{\textit{Keywords---}} #1
}
\begin{document}

\maketitle

\begin{abstract}

\noindent Finite mixture models are frequently used to uncover latent structures in high-dimensional datasets (e.g.\ identifying clusters of patients in electronic health records). The inference of such structures can be performed in a Bayesian framework, and involves the use of sampling algorithms such as Gibbs samplers aimed at deriving posterior distribution of the probabilities of observations to belong to specific clusters. Unfortunately, traditional implementations of Gibbs samplers in this context often face critical challenges, such as inefficient use of computational resources and unnecessary updates for observations that are highly likely to remain in their current cluster. This paper introduces a new adaptive Gibbs sampler that improves the convergence efficiency over existing methods. 
In particular, our sampler is guided by a function that, at each iteration, uses the past of the chain to focus the updating on observations potentially misclassified in the current clustering, i.e.\ those with a low probability of belonging to their current component.
Through simulation studies and two real data analyses, we empirically demonstrate that, in terms of convergence time, our method tends to perform more efficiently compared to state-of-the-art approaches.
\end{abstract}

\keywords{Adaptive MCMC, Finite Mixture Models}

\section{Introduction}

Mixture models provide a flexible and widely-used framework for modelling data, where each observation is assumed to originate from one of up to $K$ mixture components (also known as \emph{clusters}), with $K$ potentially unknown. Each cluster is represented by a distribution function that belongs to a specified parametric family. Commonly used model classes include finite mixtures \citep[e.g.][]{Pearson1894,Everitt1981,Titterington1985,McLachlan1987}, mixtures of finite mixtures \parencite{Richardson1997,nobile2004,miller2018}, and Dirichlet process (\textcite{maceachern1994}, \textcite{escobar1995}, \textcite{neal2000}). In the former case, the number of components is fixed and specified a priori. In the latter two cases, the number of components can vary and adapt to the complexity of the dataset.

These models are commonly used in many fields, such as genomics \citep[e.g.][]{broet2002bayesian, fraley2002model} and epidemiology \citep[e.g.][]{schlattmann1993mixture, green2002hidden}. In econometrics, mixture models have proven valuable in areas such as marketing \citep[e.g.][]{jedidi1997finite, allenby1999dynamic}, macroeconomics \citep[e.g.][]{hamilton1989new, fruhwirth2001markov}, and finance \citep[e.g.][]{fruhwirth2001markov, weigend2000predicting, kaufmann2002bayesian}.

Although the modelling flexibility of mixture models is well-established, efficient computational inference remains a significant challenge, especially for large datasets where traditional approaches may become prohibitively expensive. Moreover, mixture models present particular sampling difficulties, including multi-modality and label switching (i.e.\ the likelihood being invariant to permutations of the cluster labels) \citep{fruhwirth2006finite, stephens2000label}. Parameter inference is often performed using Markov chain Monte Carlo (MCMC) algorithms, particularly the Gibbs sampler \citep{gelfand1990sampling}, which alternately updates model parameters $(\boldsymbol{\Theta}, \boldsymbol{\pi})$ and latent allocation variables $\boldsymbol{z} = (z_1, \ldots, z_n)$, where $z_i \in \{1,2,\ldots,K\}$ indicates the cluster assignment of the $i$-th observation and $n$ is the sample size.

The two most commonly used methods are \emph{Systematic Scan Gibbs} (SSG) and \emph{Random Scan Gibbs} (RSG) \citep{geman1984stochastic}. SSG updates all $n$ allocations in a predetermined fixed order, while RSG randomly selects a subset of allocations to update at each iteration \citep{amit1991comparing, fishman1996coordinate, roberts1997updating}. Both algorithms have been extensively analysed, with mixed performance depending on the target density \citep{roberts2015surprising, he2016scan, Ascolani2024, Goyal2025}.

However, both approaches spend computational effort uniformly across observations, regardless of how confidently each observation is assigned to its current cluster. This uniform updating scheme is computationally wasteful: in typical clustering scenarios, many observations may be confidently assigned, while only a subset near cluster boundaries would benefit from frequent reallocation attempts. 

To address such inefficiencies, {\em adaptive} MCMC approaches learn and optimise sampling strategies during algorithm execution \citep{haario2001adaptive, roberts2007coupling, andrieu2008tutorial}. These adaptive methods leverage information about the target distribution that accumulates over time to dynamically adjust the sampling mechanism, and include the use of non-uniform selection probabilities that have been shown to improve convergence rates \citep{latuszynski2013adaptive}. 
For mixture models specifically, adaptive approaches have been developed to address label switching \citep{bardenet2015adaptive} and to implement efficient block sampling strategies \citep{fiorentini2014efficient}.

This manuscript introduces a novel ``discomfort-informed'' adaptive Gibbs sampler that addresses the inefficiency of uniform updating in mixture models. Here, `discomfort' quantifies how poorly suited an observation is to its current cluster assignment---observations with low posterior probability of belonging to their assigned cluster receive high discomfort scores and are prioritised for updates, while confidently-assigned observations are updated less frequently. Our approach dynamically identifies observations with high uncertainty about their current cluster assignment and prioritises updating these observations over those that are confidently assigned. 

Our adaptive scheme maintains rigorous theoretical foundations, satisfying established conditions for ergodicity of adaptive MCMC algorithms \citep{roberts2007coupling}. Our selective updating strategy leads to substantial improvements in convergence speed---often achieving 5--10$\times$ faster convergence than standard methods---while maintaining the theoretical guarantees of traditional Gibbs sampling.

The paper is structured as follows. Section \ref{sec:fmm} introduces the framework for finite mixture models and reviews key algorithms commonly used for parameter inference, including a generic adaptive algorithm, along with their limitations. It also reviews some canonical sufficient conditions establishing the ergodicity of underlying stochastic processes. Section \ref{sec:discomfort} discusses the rationale and choices underlying our adaptive MCMC algorithm and provides theoretical findings on the convergence of the selection probability vector. Section \ref{sec:numExp} evaluates finite-sample performance through simulation studies, benchmarking our approach against state-of-the-art models. In Section \ref{sec:real}, we apply our methodology to diverse datasets, demonstrating its inferential capabilities and interpretability. Finally, Section \ref{sec:disc} concludes the paper with a brief discussion. Further technical discussions and proofs are included in the appendices. Working code can be found at \href{https://github.com/davidefabbrico/AdaptiveAllocation}{github.com/davidefabbrico/AdaptiveAllocation}.

\section{Finite mixture models} \label{sec:fmm}

Let the data be represented by the matrix $\bm X \in \mathbb{R}^{n \times d}$, where $d$ represents the number of variables, $n$ represents the sample size, and the $i$-th row is denoted by the vector $\bm x_i=(x_{i1}, x_{i2}, \ldots, x_{id})$, for $i = 1, \ldots, n$. We model it using a finite mixture with $K$ components and parameters $\bm \Theta = (\bm \theta_1, \bm \theta_2, \ldots, \bm \theta_K)$, which can be written as: 
\begin{equation}
    p(\bm X \mid \bm \Theta, \bm \pi) = \prod_{i=1}^n p(\boldsymbol{x}_i \mid \bm \Theta, \bm \pi) \, ,
\end{equation}
where, 
\begin{equation*}
    p(\bm x_i \mid \bm \Theta, \bm \pi) = \sum_{k=1}^K \pi_k f_x(\bm x_i \mid \bm \theta_k) \, ,
\end{equation*}
with
\begin{align*}
\bm{x}_i \mid z_i, \boldsymbol{\theta} &\sim F_x(\boldsymbol{\theta}_{z_i}),\notag\\
z_i \mid \boldsymbol{\pi}&\sim \mbox{Categorical}(\pi_1, \ldots, \pi_K) \notag \, ,\\
\pi_1, \ldots, \pi_K & \sim \mbox{Dirichlet}(a/K, \ldots, a/K) \label{2} \, , \\
\boldsymbol{\theta}_1 , \ldots, \boldsymbol{\theta}_K & \sim G^{(0)} \, . \notag
\end{align*}
Here $F_x$ is a generic distribution with density $f_x(\boldsymbol{x}_i \mid \bm \theta_k)$, defining the component-specific model. The latent variable $\boldsymbol{z} = (z_1, z_2, ..., z_n)$, where each $z_i$ takes values in $\{1,2,\ldots,K\}$, represents the component allocation of individual $i$, while $\boldsymbol{\pi} = \left(\pi_1, \ldots, \pi_K\right)$ is a collection of $K$ component weights. Finally, the scalar $a$ is a mass/concentration parameter, here assumed fixed, and $G^{(0)}$ is the prior for the component-specific parameters.

\subsection{Standard inference approaches}
To infer the unknown quantities $\boldsymbol{z}$, $\boldsymbol{\pi}$, and $\boldsymbol{\Theta}$ given the data $\boldsymbol{X}$, we employ a blocked Gibbs sampling strategy where these parameter blocks are updated sequentially at each iteration. The two most commonly used approaches are the Systematic Scan Gibbs (SSG) and Random Scan Gibbs (RSG) samplers, as described in Algorithms \ref{alg:ssgibbs} and \ref{alg:rsgibbs}, respectively.

The SSG method updates allocation variables for all $n$ observations in a predetermined order at each iteration, followed by updates to $\boldsymbol{\pi}$ and $\boldsymbol{\Theta}$. In contrast, the RSG approach updates allocation variables for only a randomly selected subset of $m \ll n$ observations at each iteration, where the subset is chosen uniformly at random with replacement from $\{1, \ldots, n\}$.

\begin{multicols}{2}
\begin{algorithm}[H]
\SetKwInOut{Input}{input}\SetKwInOut{Output}{output}
\SetAlgoLined
\Input{$T$ MCMC iterations}
\BlankLine
\For{$t = 1, \ldots, T$}{
\protect\phantom{$i_1, \ldots, i_m \sim \text{Uniform}(\{1, \ldots, n\})$}\\
\For{$i = 1, \ldots, n$}{
$z_{i,t} \sim p(z_i \mid \boldsymbol{\Theta}_{t-1} , \boldsymbol{\pi}_{t-1}, \boldsymbol{X})$ 
}
$\boldsymbol{\pi}_t \sim p(\boldsymbol{\pi} \mid \boldsymbol{\Theta}_{t-1} ,\boldsymbol{z}_{t} , \boldsymbol{X})$ \\
$\boldsymbol{\Theta}_{t} \sim p(\boldsymbol{\Theta} \mid \boldsymbol{\pi}_{t} ,\boldsymbol{z}_{t} , \boldsymbol{X})$ 
}
\caption{\small SSG}
\label{alg:ssgibbs}
\end{algorithm} 
\columnbreak 
\begin{algorithm}[H]
\SetKwInOut{Input}{input}\SetKwInOut{Output}{output}
\SetAlgoLined
\Input{$T$ iterations, subset size $m$}
\BlankLine
\For{$t = 1, \ldots, T$}{
$i_1, \ldots, i_m \overset{\text{iid}}{\sim} \text{Uniform}(\{1, \ldots, n\})$\\
\For{$j = 1, \ldots, m$}{
$ z_{i_j,t} \sim p(z_{i_j} \mid \boldsymbol{\Theta}_{t-1} , \boldsymbol{\pi}_{t-1}, \boldsymbol{X} )$ 
}
$\boldsymbol{\pi}_{t} \sim p(\boldsymbol{\pi} \mid \boldsymbol{\Theta}_{t-1} ,\boldsymbol{z}_{t} , \boldsymbol{X})$ \\
$\boldsymbol{\Theta}_{t} \sim p(\boldsymbol{\Theta} \mid \boldsymbol{\pi}_{t} ,\boldsymbol{z}_{t} , \boldsymbol{X})$ 
}
\caption{\small RSG} 
\label{alg:rsgibbs}
\end{algorithm}
\end{multicols}

\subsection{Computational limitations of standard approaches}

Both SSG and RSG suffer from inefficiencies that become more pronounced as dataset size increases.

\textbf{SSG}: When $n$ is large, computing the full conditional distributions for all allocation variables becomes computationally expensive, requiring $\mathcal{O}(nK)$ likelihood evaluations per iteration. Moreover, the algorithm wastes computational effort on observations that are already confidently assigned to their correct component and unlikely to switch, leading to redundant updates that contribute little to posterior exploration.

\textbf{RSG}: Although computationally cheaper per iteration with only $\mathcal{O}(mK)$ likelihood evaluations, RSG will require approximately $n/m$ iterations to update all observations once. Uniform random selection means that computational resources are still wasted on confidently assigned observations that have a low probability of changing their current allocation.

Both methods ignore the fact that in typical clustering analyses, observations will have varying levels of assignment uncertainty. Observations near cluster centres will be confidently assigned and rarely benefit from reallocation attempts, while the allocation of observations near cluster boundaries will be uncertain and would benefit from frequent updates. This uniform treatment represents a fundamental mismatch between computational effort and inferential need.

\subsection{A motivating example where standard methods do not converge} \label{sec:motivating}

To illustrate these limitations, we consider a finite mixture model with $K = 5$ components, and dimension $d = 2$. 
The component weight vector $\bm{\pi}$ is uniform, i.e.\ $\pi_k = 1/5$ for $k = 1, \ldots, 5$, with prior $\bm{\pi} \sim \text{Dirichlet}(a_1/K, \ldots, a_5/K)$, with $a_k = 1$ for all $k$.
Each component follows a bivariate normal distribution $f_x(\bm{x}_i \mid \bm{\theta}_k) = \mathcal{N}(\bm{\mu}_k, \bm{\Sigma}_k)$, where $\bm \Sigma_k = \mathds{I}_{d \times d}$ for all $k$, and $\bm{\theta}_k = (\bm{\mu}_k, \bm{\Sigma}_k)$. The specific mean vectors are:
\begin{equation*}
    \bm{\mu}_1 = (-10,-10), \
    \bm{\mu}_2 = (-5,-5),\ 
    \bm{\mu}_3 = (0,0),\ 
    \bm{\mu}_4 = (5,5),\ 
    \bm{\mu}_5 = (10,10) \, ,
\end{equation*}
so that the clusters all lie on the diagonal between the first and third quadrant of the $x-y$ plane, similarly to \citet{miller2018}, Section 7.1.2. The chosen parameter settings ensure that the clusters are well-separated in this example.
The component means are assigned independent Gaussian priors, $\bm{\mu}_k \sim \mathcal{N}(\bm{0}, \sigma^2 \, \mathds{I}_{2})$, where we generated the data with $\sigma^2 = 1$ for all components.
We assign to $\sigma^2$ an Inverse-Gamma prior, namely $\sigma^2 \sim \mathcal{IG}(\alpha_\sigma, \beta_\sigma)$, where conditional on the data, the hyperparameters $(\alpha_\sigma, \beta_\sigma)$ are set through the Empirical Bayes regularization method proposed by \citet{fraley2007bayesian}.

From this model we draw a dataset of $n = 500$ samples and study the performance of the algorithms described in the previous section. We run the comparison both on correctly specified models (where the number of components in the fitted model $K^{\ast}$ is equal to the number of components used to generate the data $K$, in this case $K^{\ast} = K = 5$) and overfitted models (in this case $K = 5$, and $K^{\ast} = 10$), the latter being especially relevant for practical applications where the number of clusters is unknown \citep{rousseau2011}. Well-performing algorithms should naturally identify the correct number of occupied clusters even when $K^{\ast} > K$, effectively performing model selection alongside parameter estimation. 

To ensure fair comparison, we initialise SSG, RSG, and our proposed {\em Discomfort-Informed Gibbs} (DIG) adaptive method---described in detail in Section \ref{sec:discomfort}---with identical starting values, and run all algorithms for 5000 iterations without burn-in or thinning across 20 independent replicas. 
Similarly to RSG, DIG updates only a subset of the observations of size $m \le n$ (here $m = 5$) at each iteration. Therefore DIG is computationally comparable to RSG, whereas SSG is generally more computationally intensive since $m \ll n$. 
In the numerical comparisons of Sections~\ref{sec:numExp}–\ref{sec:real}, this factor is taken into account by normalising the performance metrics considered either by computing time or by the number of epochs.
However, in this section we ignore the difference in computational cost, as our objective is to show that on a simple problem, standard algorithms may fail to recover the correct clustering structure regardless of the computational budget.

Let $\phi(\bm{x} \mid \bm{\mu}, \bm{\Sigma})$ be the multivariate Gaussian density with mean $\bm{\mu}$ and covariance $\bm{\Sigma}$ evaluated at $\bm{x}$. 
The Complete Log-Likelihood (CLL) at iteration $t$ is defined as
\begin{equation}
\label{eq:CLL}
\text{CLL}^t(\boldsymbol{z}, \bm{\mu}, \bm{\Sigma}, \bm{\pi}) = \sum_{i=1}^n \log \left( \hat{\pi}_{z_i}^t \phi(\bm{x}_i \mid \hat{\bm{\mu}}_{z_i}^t, \hat{\bm{\Sigma}}_{z_i}^t) \right) \, ,
\end{equation}
where $(\hat{\pi}_k, \hat{\bm{\mu}}_k, \hat{\bm{\Sigma}}_k)$ denote running estimates of $(\pi_k, \bm{\mu}_k, \bm{\Sigma}_k)$ obtained from the chain up to iteration $t$. Below we show convergence behaviour based on calculating the CLL at each iteration, and we use it as a diagnostic tool and as an internal measure of clustering quality.

\begin{figure}[htbp]
    \centering
    \begin{subfigure}[b]{0.48\textwidth}
        \includegraphics[width=\textwidth]{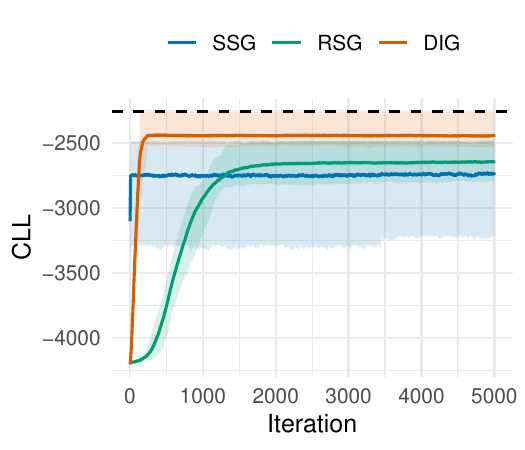}
        \caption{Convergence behaviour with correctly specified number of components: $K^{\ast} = K = 5$.}
        \label{fig:noConv_left}
    \end{subfigure}
    \hfill
    \begin{subfigure}[b]{0.48\textwidth}
        \includegraphics[width=\textwidth]{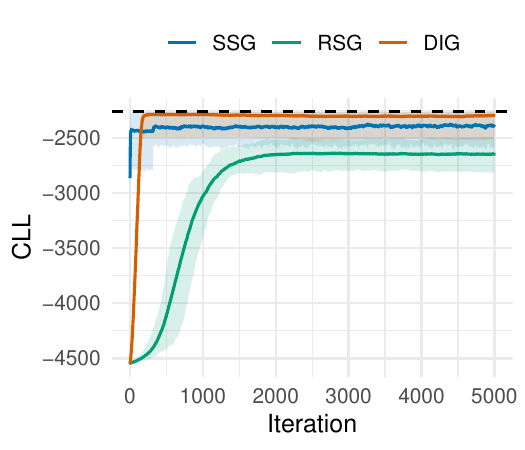}
        \caption{Convergence behaviour with over-specified number of components: $K^{\ast} = 10 > K = 5$.}
        \label{fig:noConv_right}
    \end{subfigure}
    \caption{Motivating Example: finite mixture with $K = 5$ components ($n = 500, \, d = 2$). Complete log-likelihood (CLL) for SSG, RSG and DIG methods. Solid lines indicate the mean across 20 independent replicas, with shaded regions representing the 95\% credible interval. A horizontal dashed black line indicates the CLL calculated using the true parameter values.}
    \label{fig:noConv}
\end{figure}

Figure \ref{fig:noConv} shows that for the example above, SSG stabilises quickly but becomes trapped at a substantially lower CLL in both correctly specified ($K^{\ast} = K = 5$) and overfitted ($K^{\ast} = 10 > K = 5$) scenarios, indicating poor mixing and inadequate posterior exploration. RSG achieves better CLL values than SSG in the correctly specified case, but generally converges more slowly, while DIG reaches the highest values most rapidly in both conditions. A dashed black horizontal line in the figure indicates the true CLL, computed from Equation \eqref{eq:CLL} using the true parameter values.
Importantly, when initialised from the true parameters and cluster allocations, both SSG and RSG maintain a CLL value comparable to DIG. This confirms that the cause of SSG and RSG stabilising at a lower CLL value than DIG in earlier experiments is inadequate mixing, rather than fundamental algorithmic variance.

This example demonstrates that SSG's deterministic, full-component updates can amplify autocorrelation and hinder exploration of the posterior distribution, while uniform updating strategies in general can lead to inadequate mixing. This motivates the need for adaptive approaches that intelligently prioritise observations with high assignment uncertainty.

\subsection{Adaptive Gibbs samplers for finite mixture models} \label{sec:adaptive}

A general adaptive Gibbs algorithm maintains the same structure as the RSG, but replaces uniform selection probabilities with adaptive weights $\boldsymbol{\alpha}_t = (\alpha_{1t}, \ldots, \alpha_{nt})$ that evolve according to some update rule $R_t$. Additionally, we sample without replacement to ensure each selected observation is distinct within each iteration, as shown in Algorithm \ref{alg:ARSG}.

\begin{algorithm}[!ht]
\SetKwInOut{Input}{input}\SetKwInOut{Output}{output}
\SetAlgoLined
\Input{Initial sampling weights $ \boldsymbol{\alpha}_0= (\alpha_{01}, \ldots, \alpha_{0n}) $; $ T $ iterations; $m$}
\BlankLine
\For{$t = 1, \ldots, T$}{
$\boldsymbol{\alpha}_t := R_t(\boldsymbol{\alpha}_0 , \ldots, \boldsymbol{\alpha}_{t-1}, \boldsymbol{z}_0, \ldots, \boldsymbol{z}_{t-1})$ \hfill{\scriptsize update the sampling weights}\\
Sample $m$ distinct indices $\{i_1, \ldots, i_m\}$ with probabilities $\boldsymbol{\alpha}_t$ \hfill{\scriptsize sample w/o replacement }\\
\For{$j = 1, \ldots, m$}{
$z_{i_j,t} \sim p(z_{i_j} \mid \boldsymbol{\Theta}_{t-1} , \boldsymbol{\pi}_{t-1}, \boldsymbol{X})$ \hfill{\scriptsize update the selected latent variable}
}
$\boldsymbol{\pi}_{t} \sim p(\boldsymbol{\pi}\mid\boldsymbol{\Theta}_{t-1} ,\boldsymbol{z}_{t} , \boldsymbol{X})$ \hfill{\scriptsize update the mixture weights}\\
$\boldsymbol{\Theta}_{t} \sim p(\boldsymbol{\Theta}\mid\boldsymbol{\pi}_{t} ,\boldsymbol{z}_{t} , \boldsymbol{X})$ \hfill{\scriptsize update the component profiles }
}
\caption{\small ARSG: Adaptive Random Scan Gibbs for finite mixture models}
\label{alg:ARSG}
\end{algorithm}

To develop a practical adaptive sampler based on Algorithm \ref{alg:ARSG}, we need to ensure ergodicity and develop a computationally tractable update rule $R_t$ that upweights observations that have high allocation uncertainty.

\subsection{Ergodicity of adaptive algorithms} \label{sec:theory}

For adaptive MCMC algorithms to be valid, they must satisfy conditions ensuring convergence to the correct target distribution. \textcite{latuszynski2013adaptive} provide explicit sufficient conditions for adaptive MCMC ergodicity that apply broadly to adaptive algorithms, including Algorithm~\ref{alg:ARSG} for finite mixtures.  In this context, the theorem is as follows.

\begin{theorem} \label{AdTheorem}
Let $\mathcal{Y}$ be the space of possible selection probability vectors $\boldsymbol{\alpha}_t$. Following \textcite{latuszynski2013adaptive}, assume:
\begin{itemize}
\item[(a)] \textbf{Diminishing adaptation}: $|\boldsymbol{\alpha}_t-\boldsymbol{\alpha}_{t-1}| \to 0$ in probability for fixed starting values $\boldsymbol{z}_0, \boldsymbol{\Theta}_0, \boldsymbol{\pi}_0$ and $\boldsymbol{\alpha}_0 \in \mathcal{Y}$;
\item[(b)] \textbf{Containment}: There exists $\boldsymbol{\beta} \in \mathcal{Y}$ such that ARSG with fixed probabilities $\boldsymbol{\alpha}_t = \boldsymbol{\beta}$ is uniformly ergodic.
\end{itemize}
Then the adaptive method is ergodic:
\begin{equation*}
\| P_t(\boldsymbol{z}, \boldsymbol{\theta} \mid \boldsymbol{\alpha}_0, \boldsymbol{z}_0, \boldsymbol{\Theta}_0, \boldsymbol{\pi}_0) - \nu(\boldsymbol{z}, \boldsymbol{\theta})\|_{\mathrm{TV}} \to 0, \qquad \text{as } t\to \infty \, ,
\end{equation*}
where $P_t$ denotes the distribution at time $t$, $\nu$ is the target posterior distribution, and $TV$ stands for Total Variation.
\end{theorem}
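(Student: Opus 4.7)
The statement is, by construction, a specialization of the general adaptive ergodicity theorem of \textcite{latuszynski2013adaptive} to the kernel family generated by Algorithm~\ref{alg:ARSG}, so the plan is to verify that the hypotheses of their general result fit the present setup and then invoke it directly. First, I would identify the underlying process on the augmented state space $\mathcal{X}\times\mathcal{Y}$, where $\mathcal{X}$ is the joint space of $(\boldsymbol{z},\boldsymbol{\Theta},\boldsymbol{\pi})$ and $\mathcal{Y}$ is the simplex of probability vectors on $\{1,\ldots,n\}$. For each fixed $\boldsymbol{\alpha}\in\mathcal{Y}$, freezing the weights in Algorithm~\ref{alg:ARSG} yields a time-homogeneous Markov kernel $P_{\boldsymbol{\alpha}}$: the step that samples $m$ distinct indices from $\boldsymbol{\alpha}$ and then draws each $z_{i_j}$ from its full conditional is a random-scan Gibbs move on the allocations, while the subsequent updates of $\boldsymbol{\pi}$ and $\boldsymbol{\Theta}$ are standard blocked Gibbs updates. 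Provided the support of $\boldsymbol{\alpha}$ covers all indices, $P_{\boldsymbol{\alpha}}$ is $\nu$-invariant and $\nu$-irreducible.

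Next, I would match conditions (a) and (b) to the hypotheses in \textcite{latuszynski2013adaptive}. Condition (a) coincides verbatim with their diminishing adaptation condition on the adaptation sequence $\{\boldsymbol{\alpha}_t\}$. For containment, the existence of a single $\boldsymbol{\beta}\in\mathcal{Y}$ with $P_{\boldsymbol{\beta}}$ uniformly ergodic suffices: since all $P_{\boldsymbol{\alpha}}$ share the same invariant distribution $\nu$, the $\varepsilon$-convergence times of $P_{\boldsymbol{\alpha}_t}$ are bounded in probability, which is the form of containment their framework requires. With both hypotheses in place, the conclusion $\|P_t(\boldsymbol{z},\boldsymbol{\theta}\mid\boldsymbol{\alpha}_0,\boldsymbol{z}_0,\boldsymbol{\Theta}_0,\boldsymbol{\pi}_0)-\nu(\boldsymbol{z},\boldsymbol{\theta})\|_{\mathrm{TV}}\to 0$ follows immediately from their main theorem.

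The main obstacle is the verification step behind containment, namely that some $P_{\boldsymbol{\beta}}$ is actually uniformly ergodic. Because $(\boldsymbol{\Theta},\boldsymbol{\pi})$ lives on a non-compact space, uniform ergodicity is not automatic; the standard route is to exploit the fact that the allocation vector $\boldsymbol{z}$ takes values in the finite set $\{1,\ldots,K\}^n$, marginalize out $(\boldsymbol{\Theta},\boldsymbol{\pi})$, and establish a Doeblin-type minorization for the resulting marginal chain on $\boldsymbol{z}$ under fixed weights $\boldsymbol{\beta}$ (for instance $\boldsymbol{\beta}=(1/n,\ldots,1/n)$, which recovers RSG). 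In the conjugate settings of Section~\ref{sec:motivating}, the full conditionals admit closed forms with strictly positive transition densities, so a one-step minorization bound is attainable and the uniform ergodicity transfers to the full state space via the deterministic-in-law updates of $(\boldsymbol{\pi},\boldsymbol{\Theta})$ given $\boldsymbol{z}$. Making this minorization quantitative, and delineating the subset of $\mathcal{Y}$ (namely weights bounded away from zero) on which the whole family $\{P_{\boldsymbol{\alpha}}\}$ remains uniformly ergodic, is where the bulk of the work lies; once that is done, the theorem itself is a one-line invocation of the cited adaptive MCMC result.
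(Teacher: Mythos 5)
Your proposal is correct and follows essentially the same route as the paper: the theorem is a direct specialisation of the adaptive-MCMC ergodicity result of \textcite{latuszynski2013adaptive}, and condition (b) is verified exactly as you describe, by reducing to the marginal allocation chain on the finite space $\{1,\ldots,K\}^n$ (the paper cites the duality argument of \textcite{diebolt1994estimation}) and noting that uniform ergodicity holds whenever the selection probabilities are bounded away from zero, e.g.\ at $\boldsymbol{\beta}=(1/n,\ldots,1/n)$. The only caveat is that your intermediate claim that a single uniformly ergodic $P_{\boldsymbol{\beta}}$ plus shared invariant measure already bounds the convergence times of all $P_{\boldsymbol{\alpha}_t}$ in probability is too quick as stated, but you correctly identify at the end that the real content is simultaneous uniform ergodicity over the subset of weights bounded away from zero, which is also how the paper treats it.
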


\textcite{diebolt1994estimation} establish that condition (b) is readily satisfied in mixture models. They show that the augmented chain $\{\boldsymbol{\theta}_t, \boldsymbol{z}_t\}$ is uniformly ergodic provided the allocation chain $\{\boldsymbol{z}_t\}$ is uniformly ergodic. Since $\boldsymbol{z}_t$ evolves on the finite space $\{1,2,\ldots,K\}^n$, uniform ergodicity follows from irreducibility and aperiodicity. For mixture models, this occurs when all components of the selection probabilities remain bounded away from zero. In particular, the uniform selection probabilities $\boldsymbol{\beta} = (1/n, \ldots, 1/n)$ satisfy this condition, ensuring containment is met.

With containment established, the key requirement for an adaptive algorithm is that adaptations decay over time to satisfy the diminishing adaptation condition. In the next section, we develop the Discomfort-informed Gibbs (DIG) sampler, which satisfies this condition whilst efficiently targeting observations with high assignment uncertainty.

\section{Discomfort-informed Gibbs Sampler}
\label{sec:discomfort}

In this section, we provide an overview of the choices underlying our adaptive algorithm. Section~\ref{sec:alpha_update} introduces the adaptive update rule $R_t$ for the selection weights $\boldsymbol{\alpha}_t$, while Section~\ref{discomfort} defines the discomfort function, which distinguishes between ``interesting'' and less informative observations for updating purposes. Section~\ref{weights} presents the weight functions that regulate the trade-off between exploration and exploitation. Section~\ref{sec:stoc} establishes the convergence of the selection probability vector through the use of stochastic approximation method. Finally, Section~\ref{sec:completeAlgo} details the complete algorithm.

\subsection{Update function \texorpdfstring{$R_t$}{Rt}}
\label{sec:alpha_update}

Let $\bm D_t^{\lambda_t} = (D_{1t}^{\lambda_t}, \ldots, D_{nt}^{\lambda_t})$ denote the vector of {\em discomfort} values for each observation at iteration $t$ (defined below in Section~\ref{discomfort}, Equation~\eqref{DivFun}), and $f(t)$, $g(t)$ be non-negative weight functions satisfying $f(t) + g(t) = 1$ for all $t$. 
We consider adaptive update functions, $R_t$ of the following form:
\begin{equation}
\boldsymbol{\alpha}_t = R_t(\boldsymbol{\alpha}_0 , \ldots, \boldsymbol{\alpha}_{t-1}, \boldsymbol{z}_0, \ldots, \boldsymbol{z}_{t-1}) = f(t) \cdot \bm{\alpha}_{t-1} + g(t) \cdot \bm D_t^{\lambda_t} \, ,
\label{eq:adaptrule}
\end{equation} 
where after computing $\boldsymbol{\alpha}_t$, we normalise to obtain a valid probability vector: $\alpha_{i,t}' = \alpha_{i,t} / \sum_{j=1}^n \alpha_{j,t}$.
To satisfy the diminishing adaptation condition (a) from Theorem~\ref{AdTheorem}, we require $f(t) \to 1$ and $g(t) \to 0$ as $t \to \infty$, ensuring that the influence of new information vanishes over time. We discuss these weight functions in Section~\ref{weights}.

\subsection{Discomfort function \texorpdfstring{$D^\lambda_{it}$}{Dt}}\label{discomfort}

The discomfort function serves as our mechanism for distinguishing between observations that warrant attention and those that do not need to be prioritised for allocation updating, and serves as a key component of our update function, $R_t$. Intuitively, an observation with high posterior probability of belonging to its current cluster should have low discomfort, while an observation poorly matched to its current assignment should have high discomfort.

We define the discomfort function using the negative exponential form:
\begin{equation}
D^\lambda_{it}=D^\lambda_{it}(\boldsymbol{\theta}, \boldsymbol\pi, \boldsymbol{z}) = \exp\left({-\lambda \,p_{i,z_i^t}^t}\right) , \quad \text{with }
     p_{i,z_i^t}^{t}:= \frac{\pi_{z_{i,t}} f_x\left (\bm{x}_i\mid \bm{\theta}_{z_{i,t}} \right) }{\sum_{k=1}^K\pi_{k} f_x \left (\bm{x}_i\mid \bm{\theta}_{k}\right)} \, ,
    \label{DivFun}
\end{equation}
where $\lambda>0$ represents the discomfort decay parameter, and $p_{i,z_i^t}^t$ is the posterior probability that observation $i$ belongs to its currently assigned component $z_i^t$ at iteration $t$.

This exponential form ensures that $D^\lambda_{it}$ is a decreasing function of $p_{i, z_i^t}^t$: observations with lower assignment probabilities receive higher discomfort scores and thus higher selection priority. The parameter $\lambda$ controls the sensitivity of this relationship---larger values of $\lambda$ create sharper distinctions between well-assigned and poorly-assigned observations.

Although alternative functions such as entropy-based measures or heavy-tailed distributions could also be considered (see Appendix \ref{alternatives}), the exponential form offers computational simplicity and a natural connection to statistical mechanics frameworks.  In particular, this choice has the same functional form as unnormalised Boltzmann-Gibbs weights from statistical mechanics \citep{pathria2011statistical}, where $p_i$ acts analogously to energy and $\lambda$ corresponds to inverse temperature. This analogy provides some intuition: at high ``temperatures'' (low $\lambda$), the selection mechanism becomes relatively uniform across observations, while at low temperatures (high $\lambda$), selection concentrates sharply on low-energy (poorly-assigned) observations.

Computing the discomfort vector $\bm D_t^{\lambda_t}$ requires the allocation probability matrix; an $n \times K$ matrix where entry $(i,k)$ gives the posterior probability that observation $i$ belongs to component $k$. Updating this matrix requires a full dataset pass, however, since $m \ll n$, the allocation matrix does not change appreciably from one iteration to the next.  We therefore update it only periodically, as described in Section \ref{allUp}.

\subsubsection{Adaptive discomfort decay parameter \texorpdfstring{$\lambda_t$}{lt}} \label{sec:disDec}
The parameter $\lambda$ in Equation~\eqref{DivFun} determines the discomfort function's behaviour, as illustrated in Figure~\ref{fig:lambda}. 
\begin{figure}[!ht]
    \centering
    \includegraphics[width=\linewidth]{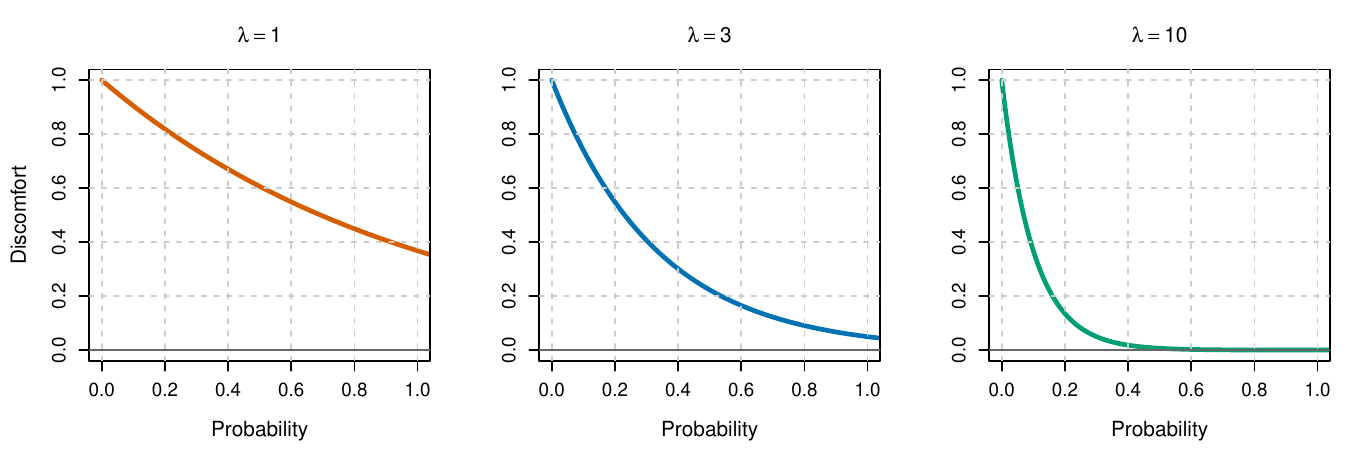}
    \caption{Exponential discomfort functions $D^{\lambda}$ for three values of the tuning parameter $\lambda \in \{1, 3, 10\}$. The discomfort $D^{\lambda}$ (on the $y$-axis) is plotted as a function of the assignment probability $p \in [0,1]$ (on the $x$-axis), showing how higher values of $\lambda$ produce sharper distinctions between well-assigned ($p \approx 1$) and poorly-assigned ($p \approx 0$) observations.}
    \label{fig:lambda}
\end{figure}
The choice of $\lambda$ creates a trade-off that can be seen by examining different regions of the probability range. High values of $\lambda$ (e.g.\ $\lambda = 10$) yield a discomfort function that strongly emphasises observations with low assignment probabilities, which is desirable for focusing on poorly-allocated cases. However, the portion of the curve between approximately $0.6$ and $1.0$ becomes nearly flat, which becomes a problem after the most poorly allocated observations have been reallocated, since the algorithm loses sensitivity to differences in assignment probabilities among reasonably well-assigned observations.

Conversely, small values of $\lambda$ (e.g.\ $\lambda = 1$) maintain sufficient slope in the higher probability region ($0.6$ to $1.0$) to distinguish among well-assigned observations. However, the lower probability region ($0$ to $0.6$) lacks the steepness needed for the algorithm to focus sharply on the most poorly-assigned observations.

This trade-off is further complicated by the data structure: well-separated clusters benefit from higher $\lambda$ values that emphasise the clear distinctions, while overlapping clusters require more moderate values to maintain sensitivity to ambiguously assigned observations. Moreover, the optimal $\lambda$ may change throughout the MCMC run as the allocation quality improves, motivating our adaptive approach described below.

Rather than attempt to specify $\lambda$ a priori, we adapt it dynamically using an effective sample size (ESS) criterion. We borrow the ESS formula from sequential Monte Carlo and apply it to our discomfort values to measure how concentrated the selection mechanism has become:
\begin{equation}
\label{eq:ESS}
\mathrm{ESS}(\lambda) = \frac{ \left( \sum_{i=1}^n \exp \left \{-\lambda p_{i,z_i^t}^t \right \} \right)^2 }{\sum_{i=1}^n \exp \left \{-2\lambda p_{i,z_i^t}^t \right \} } \, .
\end{equation}

Since we update $m$ observations per iteration, we seek the value $\hat{\lambda}$ that yields $\mathrm{ESS}(\lambda) \approx m$. This choice ensures that the selection probabilities are concentrated enough to focus on problematic observations, while maintaining sufficient diversity to avoid over-concentration on just a few instances.

This approach is heuristic but we have found it to work well in practice, automatically balancing exploration and exploitation as the algorithm progresses. Early iterations typically yield high ESS values requiring larger $\lambda$, while later iterations produce lower ESS values with correspondingly smaller $\lambda$.

For theoretical and practical validity, we constrain $\lambda_t$ to the compact interval $[1, \Lambda]$. This ensures that all selection probabilities remain bounded away from zero, preventing the algorithm from permanently excluding any observations from consideration, while also providing numerical stability for the root-finding procedure. In practice, we find $\Lambda = 100$ to work well for most applications. More details on how to select $\Lambda$ are given in Appendix \ref{app:tuning}.

The resulting adaptation schedule is:
\begin{equation}
\label{eq:lambda_schedule}
\lambda_t = 
\begin{cases} 
\hat{\lambda}_t, & t \leq s \\
1, & t > s \, 
\end{cases}, \qquad{\hat \lambda_t = \{\lambda \in [1, \Lambda] \colon \mathrm{ESS}(\lambda) - m  = 0 \}},
\end{equation}
where $s$ is a pre-specified iteration discussed in Section \ref{sec:s} and $\hat \lambda_t$ is found numerically with a root-finder method.

\subsubsection{The allocation matrix update schedule \texorpdfstring{$\xi$}{xi}} \label{allUp}
The allocation probability matrix changes slowly when $m$ is small relative to $n$, making frequent updates unnecessary and computationally wasteful. We therefore update the matrix every $\xi$ iterations, where $\xi$ varies through the course of running the algorithm. 
During the initial ``greedy'' phase when allocations change rapidly, we update frequently ($\xi = 3$ for the first 25\% of iterations). As the algorithm stabilises, we gradually increase the interval: $\xi = 6$ for iterations 25-50\%, and $\xi = 10$ for the final 50\%. 
This schedule is similar to those used in popular probabilistic computing software \citep{carpenter2017stan}, and more details and a sensitivity analysis can be found in Appendix \ref{app:free}.

This adaptive schedule maintains the correct stationary distribution since the adaptation is diminishing in the sense required by our theoretical framework, and aligns with common practice in adaptive MCMC literature \citep[e.g.][]{haario2001adaptive, roberts2009examples}.

\subsection{Weight functions \texorpdfstring{$f(t)$}{ft} and \texorpdfstring{$g(t)$}{gt}}\label{weights}

The choice of weight functions determines the algorithm's exploration-exploitation balance. We employ a two-stage approach designed to promote initial exploration during stage 1, until a switch at iteration $s$ when we enter stage 2 focused on promoting stable convergence. The value of $s$ must be pre-specified; see Section~\ref{sec:s} for guidance.

For iterations $t \leq s$, we use hyperbolic tangent functions:
\begin{equation}
\label{eq:wFunHyper}
    f(t) = \frac{1}{2} \left(1+\tanh{\left(\frac{t-s}{a} \right)}\right) \hspace{0.1cm}, \hspace{0.3cm} g(t) = \frac{1}{2} \left(1-\tanh{\left(\frac{t-s}{a} \right)}\right) \, ,
\end{equation}
where $a$ controls how rapidly the functions transition between stages and $s$ marks the switch point. This first stage emphasises the discomfort-driven term $g(t)$, strongly promoting reallocation of poorly-assigned observations.

For iterations $t > s$, we switch to polynomial decay:
\begin{equation} 
\label{eq:wFunPol}
    f(t) = \frac{t-s+1}{t-s+2} \hspace{0.1cm}, \hspace{0.3cm} g(t) = \frac{1}{t-s+2} \, ,
\end{equation}
which satisfies the diminishing adaptation requirement.

Figure \ref{fig:weightfunc} illustrates the behaviour of our weight functions before and after the point $s$.

\begin{figure}[!ht]
    \centering
    \includegraphics[scale = 0.62]{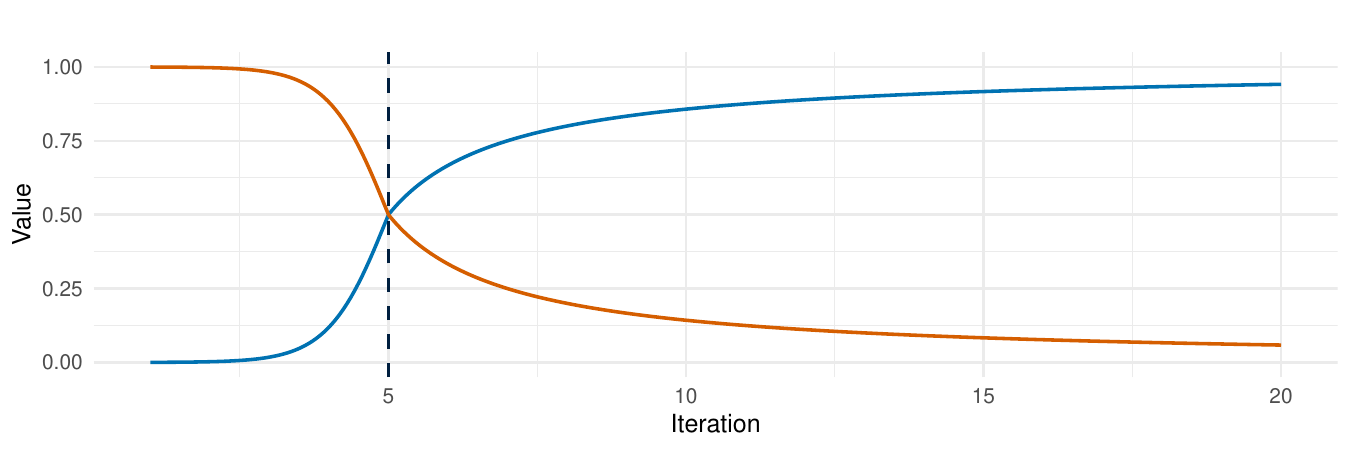}
    \caption{Visualisation of weight functions $f(t)$ (in blue) and $g(t)$ (in orange) before and after the point $s$ (dotted vertical line). For $t \leq s$, the functions are defined by the hyperbolic tangent equations (Equation \eqref{eq:wFunHyper}), while for $t > s$, the weights are governed by the polynomial equations (Equation \eqref{eq:wFunPol}). In this specific plot, $a = 1$ and $s = 5$. The transition at $s = 5$ marks the switch from a hyperbolic decay (exploration phase) to a polynomial decay (convergence phase).}
    \label{fig:weightfunc}
\end{figure}

We discuss alternative options for the weight functions in Appendix \ref{sec:weFunction}.

\subsubsection{Selection of transition point \texorpdfstring{$s$}{s}}
\label{sec:s}

The DIG algorithm operates in two phases: an initial exploration phase where we aggressively reallocate poorly-assigned observations, followed by a convergence phase where adaptation diminishes. The transition point $s$ determines when to switch between these phases. Ideally, most observations should be correctly allocated by iteration $s$.

Consider a simplified algorithm which selects at each iteration 1 observation uniformly at random, and suppose that each observation is allocated to the correct mixture component according to an independent Bernoulli random variable with probability $1/K$ (uniform over the number of components). The number of iterations $S$ needed to allocate all observations to their respective component can be approximated by a negative binomial distribution with parameters $r = n$ and $p = 1/K$, which has mean $\mu = n(K-1)$ and variance $\sigma^{2} = nK(K-1)$. Since DIG selects $m$ observations at each iteration, we then set $s$ such that $ms = \mu + \sigma Z_{0.99}$, i.e.\ the 99\% quantile of the normal approximation to the confidence interval of a negative binomial, to give the chain a little more time to adjust during the greedy adaptation phase of the algorithm. Thus,
\begin{equation} \label{eq:s}
s = \frac{n(K-1) + \sqrt{nK(K-1)} \cdot Z_{0.99}}{m} \, .
\end{equation}

Note that while this model assumes independent allocation attempts with uniform probabilities and the existence of a unique correct allocation for each observation, the actual DIG algorithm uses adaptive probabilities and may encounter cases where allocation is genuinely ambiguous due to overlapping clusters or model misspecification. Nevertheless, this approach provides effective guidance for setting $s$, as demonstrated in our numerical experiments (Section~\ref{sec:numExp}).

\subsection{Stochastic approximation analysis} \label{sec:stoc}

After iteration $s$, when $\lambda_t = 1$ and the polynomial weights take effect, our adaptive scheme fits naturally into the stochastic approximation framework. This allows us to establish convergence of the selection probabilities to a well-defined limit.

The polynomial weights from Equation~\eqref{eq:wFunPol} can be rewritten in the canonical stochastic approximation form:
\begin{equation*}
    \boldsymbol{\alpha}_t = \boldsymbol{\alpha}_{t-1} + \frac{1}{t-s+2} \left( \boldsymbol{D}_t^{1} - \boldsymbol{\alpha}_{t-1}  \right),
\end{equation*} 
which has step size $1/(t-s+2)$ and ``target'' function $\boldsymbol{D}_t^{1}$.

\begin{theorem}\label{thm:SA_conv}
For finite Gaussian mixtures under our adaptive scheme, the selection probabilities converge almost surely:
\begin{equation}
    \boldsymbol{\alpha}_t \to \mathbb E_{\nu^{\ast}}[\boldsymbol{D}^1] \quad \text{as } t\to \infty \, ,
    \label{eq:conv}
\end{equation}
where $\nu^{\ast}$ denotes the target posterior distribution and $\boldsymbol{D}^1$ is the discomfort vector in \eqref{DivFun} with $\lambda = 1$.
\end{theorem}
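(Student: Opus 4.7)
The plan is to reduce the claim to an ergodic-averages statement for the underlying (adaptive) chain and then invoke a strong law of large numbers tailored to the adaptive MCMC setting. First, I would solve the linear recursion explicitly. For $t > s$ the step size is $\gamma_t = 1/(t-s+2)$ and $1-\gamma_k = (k-s+1)/(k-s+2)$, so the products $\prod_{k=s+1}^{t}(1-\gamma_k)$ telescope to $2/(t-s+2)$ and $\gamma_k \prod_{j=k+1}^{t}(1-\gamma_j)$ telescopes to $1/(t-s+2)$. Unwinding \eqref{eq:wFunPol} therefore gives the closed form
\begin{equation*}
\boldsymbol{\alpha}_t \;=\; \frac{2}{t-s+2}\,\boldsymbol{\alpha}_s \;+\; \frac{1}{t-s+2}\sum_{k=s+1}^{t}\boldsymbol{D}_k^{1},
\end{equation*}
so up to an asymptotically negligible boundary term, $\boldsymbol{\alpha}_t$ is the Cesaro average of the past discomfort vectors along the chain. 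Since $\boldsymbol{\alpha}_s$ lies in the compact simplex, the boundary piece vanishes, and the theorem reduces to the ergodic-average statement
\begin{equation*}
\frac{1}{t-s}\sum_{k=s+1}^{t}\boldsymbol{D}_k^{1} \;\xrightarrow{\mathrm{a.s.}}\; \mathbb{E}_{\nu^{\ast}}\!\left[\boldsymbol{D}^{1}\right].
\end{equation*}

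Next I would verify the hypotheses needed for a law of large numbers for adaptive MCMC. Each coordinate of the discomfort is uniformly bounded, $D^{1}_{it} = \exp(-p_{i,z_i^t}^t) \in [e^{-1},1]$, which makes integrability and moment conditions trivial. Containment for the underlying $(\boldsymbol{z}_t, \boldsymbol{\theta}_t, \boldsymbol{\pi}_t)$ chain was already established in Section \ref{sec:theory} via the Diebolt--Robert argument, and after iteration $s$ the adaptation is diminishing: indeed $|\boldsymbol{\alpha}_t - \boldsymbol{\alpha}_{t-1}| = \gamma_t |\boldsymbol{D}_t^{1} - \boldsymbol{\alpha}_{t-1}| = O(1/t)$, which is square-summable. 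Under these two conditions, together with boundedness of the test function, a strong law for bounded functionals of adaptive MCMC (as in \textcite{roberts2007coupling} and \textcite{latuszynski2013adaptive}) yields the required almost sure convergence of the Cesaro average to $\mathbb{E}_{\nu^{\ast}}[\boldsymbol{D}^{1}]$, completing the proof.

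The step I expect to be the main obstacle is the last one: the adaptive chain is not Markovian in the original state variables, so a direct appeal to Birkhoff's ergodic theorem is unavailable. The standard remedy is a Poisson-equation/martingale decomposition. For each fixed selection vector $\boldsymbol{\alpha}$, let $g_{\boldsymbol{\alpha}}$ solve the Poisson equation $g_{\boldsymbol{\alpha}} - P_{\boldsymbol{\alpha}} g_{\boldsymbol{\alpha}} = \boldsymbol{D}^{1} - \mathbb{E}_{\nu^{\ast}}[\boldsymbol{D}^{1}]$ with respect to the non-adaptive kernel $P_{\boldsymbol{\alpha}}$. Uniform ergodicity (from containment) combined with boundedness of $\boldsymbol{D}^{1}$ yields a uniform-in-$\boldsymbol{\alpha}$ bound on $\|g_{\boldsymbol{\alpha}}\|_\infty$. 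Telescoping along the trajectory and carefully accounting for the change in the kernel from step to step produces a martingale term (controlled by boundedness) and a perturbation term (controlled by diminishing adaptation, since $\|g_{\boldsymbol{\alpha}_t} - g_{\boldsymbol{\alpha}_{t-1}}\|_\infty$ can be bounded by $C|\boldsymbol{\alpha}_t - \boldsymbol{\alpha}_{t-1}|$). Summability of both pieces, via $\sum \gamma_t^2 < \infty$ and the dominated convergence, delivers the almost sure convergence. A minor bookkeeping point is that $\mathbb{E}_{\nu^{\ast}}[\boldsymbol{D}^{1}]$ in \eqref{eq:conv} is the posterior expectation of $\boldsymbol{D}^{1}$ viewed as a measurable function of $(\boldsymbol{z},\boldsymbol{\Theta},\boldsymbol{\pi})$, which is well defined because $\boldsymbol{D}^{1}$ depends on the state only through these variables.
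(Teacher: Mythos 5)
Your proposal is correct, but it takes a genuinely different route from the paper. The paper does not solve the recursion explicitly: it recasts the update as a canonical Robbins--Monro scheme with affine mean field $h(\boldsymbol{\alpha}) = \mathbb{E}_{\nu^{\ast}}[\boldsymbol{D}] - \boldsymbol{\alpha}$ (the negative gradient of a quadratic Lyapunov function with unique stationary point $\mathbb{E}_{\nu^{\ast}}[\boldsymbol{D}]$), and then verifies conditions (A1)--(A4) of Theorem~5.5 of \citet{Andrieu2005}; the substantive work there is establishing the drift/minorisation condition (DRI) for each frozen kernel $P_{\boldsymbol{\alpha}}$, which the paper obtains by showing the sampler is a two-component Gibbs sampler (via conditional independence of $\boldsymbol{\Theta}$ and $\boldsymbol{\pi}$ given $\boldsymbol{Z},\boldsymbol{X}$), so that uniform ergodicity of the finite-state $\boldsymbol{Z}$-marginal chain transfers to the joint chain by duality, together with a separate compactness lemma for $\boldsymbol{\alpha}_t$. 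Your telescoping computation is right ($\boldsymbol{\alpha}_t = \tfrac{2}{t-s+2}\boldsymbol{\alpha}_s + \tfrac{1}{t-s+2}\sum_{k=s+1}^{t}\boldsymbol{D}_k^{1}$), and it buys a more transparent reduction: the theorem becomes exactly a strong law for Cesaro averages of a bounded functional along the adaptive chain, so the general stochastic approximation machinery is unnecessary because the mean field is affine. The cost is that you must then prove that strong law yourself: the results you cite from \citet{roberts2007coupling} and \citet{latuszynski2013adaptive} give a \emph{weak} law for bounded functionals under diminishing adaptation and containment, not an almost sure one, so the Poisson-equation/martingale decomposition you sketch is not optional bookkeeping but the actual load-bearing step. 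To close it you need (i) a minorisation constant uniform over the compact set of admissible $\boldsymbol{\alpha}$ (available here since all selection probabilities are bounded below by a constant times $e^{-\Lambda}/n$ and $\boldsymbol{Z}$ lives on a finite space), and (ii) Lipschitz dependence of $\boldsymbol{\alpha}\mapsto P_{\boldsymbol{\alpha}}$, hence of the Poisson solution $g_{\boldsymbol{\alpha}}$, so that the perturbation term $\sum_k \lVert g_{\boldsymbol{\alpha}_k}-g_{\boldsymbol{\alpha}_{k-1}}\rVert_\infty = O(\log t)$ is killed by the $1/t$ normalisation; both hold in this model (the selection-without-replacement kernel is polynomial in $\boldsymbol{\alpha}$), and these are essentially the same ingredients the paper packages as (DRI). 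In short: same hypotheses, different wrapper --- the paper outsources the convergence to Andrieu--Moulines, while you would prove an ergodic theorem directly from the closed-form solution.
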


\noindent \textbf{Proof:} The result follows from the stochastic approximation framework of \citet{Andrieu2005}. The polynomial step sizes satisfy $\sum \rho_t = \infty$ and $\sum \rho_t^2 < \infty$, the target function $h(\boldsymbol{\alpha}) = \mathbb E_{\nu^{\ast}}[\boldsymbol{D}^1] - \boldsymbol{\alpha}$ has unique stationary point $\mathbb E_{\nu^{\ast}}[\boldsymbol{D}^1]$, and the noise terms satisfy appropriate martingale conditions. Full details are provided in Appendix~\ref{app:SA}.

This result establishes that the adaptive weights converge to the expected discomfort under the target distribution---precisely the quantity we seek to approximate. Observations that remain poorly assigned under the posterior will maintain high limiting weights, whilst well-assigned observations will have low limiting weights.  Empirical evidence for this convergence is provided in Figure \ref{fig:resultConv} in Appendix~\ref{app:SA}.

\subsection{Complete algorithm} \label{sec:completeAlgo}

The complete DIG sampler which integrates all the adaptive components described above is provided in  Appendix~\ref{completeAlgorithm} (Algorithm \ref{alg:ARSG4}).

We note that the DIG sampler requires $\mathcal{O}(mK)$ likelihood evaluations per iteration (same as RSG), plus periodic $\mathcal{O}(nK)$ evaluations for updating the allocation matrix. The root-finding for $\lambda_t$ adds negligible overhead in practice. Overall computational cost remains substantially lower than SSG's $\mathcal{O}(nK)$ per iteration when $m \ll n$.

\section{Simulation study} \label{sec:numExp}

In this section, we empirically investigate our DIG algorithm and compare its convergence rate using the CLL as a performance metric. Section \ref{Hyper} details and justifies the choice of hyperparameters for the proposed model. Section \ref{simStud} presents results across different scenarios. In Section \ref{over}, we consider the case of an overestimated number of clusters, while Section \ref{modMiss} reports results under model misspecification. In particular, we benchmark our model against established alternatives, including SSG and RSG.

\subsection{Hyperparameters and free parameters settings} \label{Hyper} 

The update size $m$ is set to $1\%$, $2\%$, and $3\%$ of the total number of observations $n$ for $n = 1000$, $n = 5000$, and $n = 10{,}000$, respectively. This increasing proportion is chosen to ensure that, as the dataset grows, the sampler retains sufficient information at each iteration to maintain adequate mixing and approximation quality. For smaller datasets, a larger $m$ would provide limited computational gain, while for larger $n$, updating a very small subset would result in slow convergence and reduced performance compared to SSG method. The allocation probability matrix is updated following the schedule described in Section \ref{allUp}. During simulations, the decay parameter $\lambda$ is dynamically adjusted via Equation \eqref{eq:lambda_schedule} concurrently with each matrix update, subject to a maximum $\Lambda = 100$ as established in Section \ref{sec:disDec}. As highlighted in Section \ref{sec:theory}, we employ two distinct pairs of functions before and after the intersection point $s$, which we define in Equation \eqref{eq:s}. Specifically, before $s$, we apply the hyperbolic tangent functions, as in Equation \eqref{eq:wFunHyper}, with $a = 1$. After $s$, we switch to the polynomial pair of functions described in Equation \eqref{eq:wFunPol}.
We adopt the same prior specification and hyperparameters, as in Section \ref{sec:motivating}.
For more details on hyperparameter settings and input parameter choices, see Appendix \ref{app:free}.

\subsection{Gaussian mixtures} 
\label{simStud}

We follow the data-generating mechanism proposed by \citet{miller2018}, i.e.\ a Gaussian mixture with $K^*=K=3$, with centres $(-3/\sqrt{d}, \; 0, \; 3/\sqrt{d})$, spherical covariance, and a varying number of observations and dimensions. 
We run DIG, SSG, and RSG for 20 independent replicas, without applying burn-in or thinning. 
In order to control for randomness, we initialise each replica of a single algorithm with a different random seed. The same 20 seeds are then re-utilised for the 20 replicas of every algorithm studied.

To evaluate the time to convergence, we adopted an approach inspired by the Geweke diagnostic \citep{geweke1991evaluating} that uses the SSG sampler as a reference standard. Specifically, we computed a reference value from the 20 SSG chains by taking the mean CLL over the final 1000 iterations of each chain, then averaging these means across all 20 chains. For each RSG and DIG chain, we used a sliding window of 1000 iterations and calculated the mean CLL within each window. We computed a z-score comparing each window's mean to the SSG reference value, accounting for their respective variances. We declared convergence when $|z| < 1.96$, indicating failure to reject equality at the 95\% confidence level.

Table \ref{resultsARI} in the appendix shows the CLL value for the experiments conducted in this section. All the values are very similar, suggesting that all the algorithms converge to the same configuration.

\begin{table}[htbp]
\centering
\renewcommand{\arraystretch}{1.2}
\begin{tabular}{@{}clcccc@{}}
\toprule
& & \multicolumn{4}{c}{\textbf{Mean Time to Converge (s)}} \\
\cmidrule(lr){3-6}
\textbf{Sample Size} & \textbf{Method} & $d = 2$ & $d = 5$ & $d = 10$ & $d = 20$ \\
\midrule
\multirow{3}{*}{$n = 1000$} 
& SSG & 1.703 (1.318) & 2.528 (2.879) & 2.894 (3.297) & 1.107 (1.984) \\
& RSG & 1.187 (0.434) & 1.197 (0.522) & 1.315 (0.625) & 2.748 (1.295) \\
& \textbf{DIG} & \textbf{0.005} (0.005) & \textbf{0.015} (0.020) & \textbf{0.023} (0.007) & \textbf{0.038} (0.028) \\
\midrule
\multirow{3}{*}{$n = 5000$}
& SSG & 4.566 (4.796) & 5.051 (6.194) & 10.075 (20.509) & 4.977 (10.201) \\
& RSG & 2.769 (0.866) & 2.996 (1.909) & 4.176 (3.306) & 10.914 (8.446) \\
& \textbf{DIG} & \textbf{0.530} (0.657) & \textbf{0.945} (2.242) & \textbf{1.159} (2.070) & \textbf{1.043} (1.775) \\
\midrule
\multirow{3}{*}{$n = 10000$}
& SSG & 12.815 (15.669) & 30.276 (27.168) & 41.309 (61.458) & 10.475 (18.722) \\
& RSG & 5.686 (2.694) & 6.210 (4.961) & 6.565 (5.200) & 12.511 (9.357) \\
& \textbf{DIG} & \textbf{1.298} (1.431) & \textbf{0.791} (0.811) & \textbf{0.758} (0.675) & \textbf{1.970} (2.698) \\
\bottomrule
\end{tabular}
\caption{Miller-Harrison synthetic datasets across different scenarios: Mean Time to Convergence (in seconds) over 20 replicas (with standard errors reported in parentheses) for the different scenarios across the SSG, RSG, and DIG algorithms. In each scenario, the best performance is highlighted in bold.}
\label{resultsTime}
\end{table}

In Table \ref{resultsTime} we report the average time to convergence (in seconds) over 20 replicas for the SSG, RSG, and DIG algorithms, across different values of sample size $n$ and dimensionality $d$. 
Overall, DIG shows the most robust performance, achieving faster convergence in all settings (by at least one order of magnitude). Its efficiency is particularly noticeable as the dimensionality increases, where standard methods like SSG—which updates all allocations at each iteration—suffer from significant slowdowns. RSG, which randomly updates a subset of observations, offers an intermediate trade-off: it improves upon SSG in several cases but remains sensitive to the choice of $m$ and shows more variability.

\begin{figure}[htbp]
    \centering
\begin{subfigure}[c]{0.47\textwidth}  
    \includegraphics[width=\textwidth]{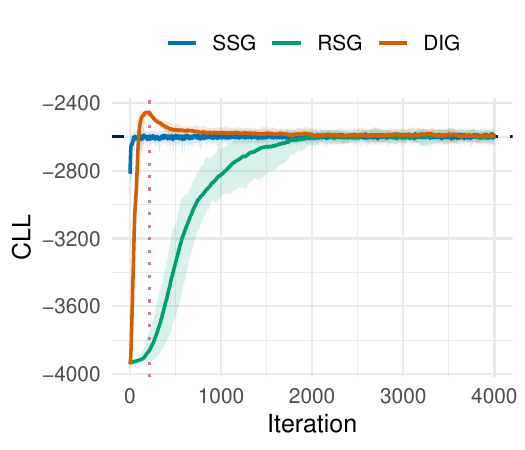}
    \caption{Complete Log-Likelihood (CLL) of SSG, RSG and DIG. Solid lines indicate the mean across 20 independent replicas, with shaded regions representing the 95\% credible interval. The dashed black line indicates the average computed over the last 1000 iterations of the SSG. The dotted vertical line marks the intersection point $s$.}
    \label{fig:CLLScen}
\end{subfigure}
\hspace{0.04\textwidth}  
\begin{subfigure}[c]{0.47\textwidth}  
    \includegraphics[width=\textwidth]{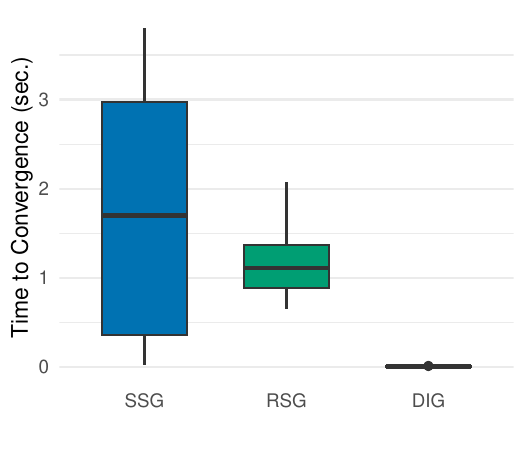}
    \caption{Box plots of the time to convergence (in seconds) for SSG, RSG, and DIG across 20 independent replicas.}
    \label{fig:TimeScen}
\end{subfigure}
    \caption{Miller-Harrison synthetic dataset with $n=1000$, $d=2$ and $K=3$. 
    }
    \label{fig:resultToy}
\end{figure}

Figure \ref{fig:CLLScen} illustrates the results for the CLL and convergence time in the scenario where $n = 1000$ and $d = 2$. The left panel shows that SSG and DIG converge very quickly compared to RSG. 
Interestingly, DIG initially reaches a higher CLL before converging to the correct level.  We note that this behaviour was not observed in our motivating example based on \citet{miller2018} in Section \ref{sec:motivating} in which the clusters were well separated and the correct allocations had little uncertainty associated with them. In contrast, in the present example, the degree of overlap between clusters is much greater, which leads to higher uncertainty in allocations, and decreases the sojourn time of the chain at the true allocation. However, during its initial greedy phase, an adaptive algorithm like DIG will often find allocations that have higher CLL than the configurations in the typical set of the chain, and will then visit these configurations less frequently once the adaptation weights start settling.
Figure \ref{fig:TimeScen} shows the wall clock convergence time of the algorithms, measured in seconds, as this accounts for the difference in computational cost. We observe that DIG achieves the fastest convergence time compared to its competitors, followed by RSG and then SSG.

It is worth noting that the SSG algorithm updates all allocation variables at each iteration, thereby exploring the entire dataset simultaneously. In contrast, RSG and DIG update only a subset of $m$ observations per iteration, requiring approximately $n/m$ iterations to cover all data points. For this reason, in Figure \ref{fig:resultToyEpoch}, we report the CLL as a function of the number of epochs---where one epoch corresponds to $n/m$ iterations--for the case $n = 1000$, $d = 2$. This representation enables a more balanced comparison among the three methods by normalizing the total number of updates performed.

\begin{figure}[htbp]
    \centering
    \includegraphics[scale = 0.63]{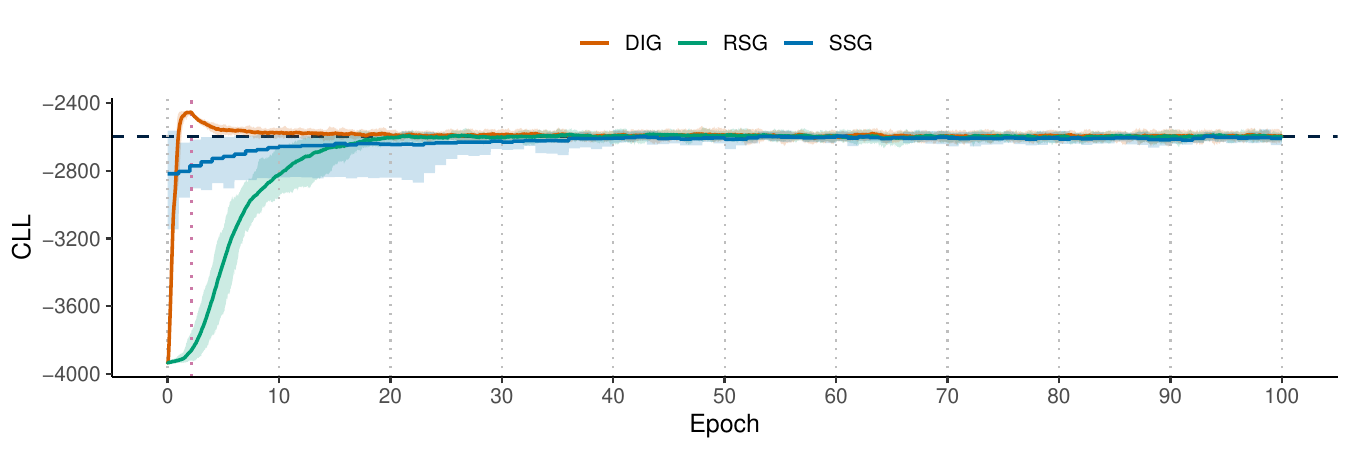}
    \caption{Complete Log-Likelihood (CLL) for the dataset proposed by \citet{miller2018}, with $n = 1000$ and $d = 2$, comparing SSG, RSG, and DIG. Solid lines indicate the mean across 20 independent replicas, with shaded regions representing the 95\% credible interval. The dashed black line indicates the average computed over the last 1000 iterations of the SSG. The vertical lines indicate epochs, where a single iteration of SSG corresponds to $n/m$ iterations of RSG and DIG. The dashed red vertical line represent the intersection point $s$.} 
    \label{fig:resultToyEpoch}
\end{figure}

\noindent Both RSG and DIG exhibit convergence in terms of the CLL, using the SSG method as a reference, across all considered scenarios. This behaviour is consistent as the sample size increases and the dimensionality becomes more complex, confirming the stability of the methods. Regarding the time required to reach convergence, and again taking SSG as a benchmark, the DIG method shows significantly better computational performance compared to the other two approaches. This trend remains evident across different data dimensions and sample sizes.

\subsection{Overfitted Gaussian mixtures} 
\label{over}

Using the data generated by the mechanism proposed in \citet{miller2018}, we set the number of components $K^{\ast}$, to be greater than the number of clusters, $K$. We compare the DIG algorithm with the two previously discussed algorithms by generating $n = 1000$ observations with $d = 2$, running the MCMC for 20 independent replicas. The remaining input parameters and hyperparameters are consistent with those reported in Section \ref{Hyper}. Recalling that in this case the number of clusters is three, i.e.\ $K = 3$, we show in Table \ref{resMiss} the results in terms of CLL and time to convergence for $K^{\ast} = \{5, 20\}$.

\begin{table}[htbp]
\renewcommand{\arraystretch}{1.4}
\centering
\begin{tabular}{clcc}
\hline
       & \multicolumn{1}{c}{} & CLL & T2C \\ \cline{3-4} 
$K^{\ast} = 5$  & SSG                  & \textbf{-2601.592} (6.086)    &  0.337 (0.393)   \\
       & RSG                  & -2620.822 (4.076)    & 14.230 (5.075)    \\
       & DIG             & -2603.689 (2.457)    & \textbf{0.115} (0.036)    \\ \hline
$K^{\ast}= 20$ & SSG                  & -2610.879 (8.251)    & 1.651 (1.820)    \\
       & RSG                  & -2759.763 (5.070)    & 62.027 (11.749)    \\
       & DIG             & \textbf{-2603.820} (5.205)    & \textbf{1.090} (0.123)   \\ \hline
       \end{tabular}
    \caption{Performance with overfitted models ($K^{\ast} > K$, $n = 1000$, and $d = 2$): Mean Complete Log-Likelihood (CLL) and Mean Time to Convergence (T2C, in seconds) over 20 replications, with standard errors in parentheses, for the Zeisel dataset across the SSG, RSG, and DIG algorithms. The best performance is highlighted in bold.}
    \label{resMiss}
\end{table}

\noindent Table \ref{resMiss} reports the values of the complete log-likelihood (CLL) and the average time to convergence for the three algorithms, evaluated across different values of $K^\ast$. The CLL results are very close for all methods, indicating that each algorithm is capable of reaching similar posterior regions.

Looking at the convergence time, DIG shows consistently faster performance (by at least one order of magnitude) in both scenarios, suggesting that its informed updates help the chain mix more efficiently. SSG also shows good performance, especially when the model complexity is lower. Conversely, RSG generally requires more time, and this is particularly evident when $K^\ast = 20$, likely because it lacks a mechanism to prioritize more informative updates.

Overall, DIG seems to strike a good balance between convergence speed and estimation quality, while SSG remains a solid and reliable baseline, especially in less complex settings.

\begin{figure}[H]
    \centering
    \includegraphics[scale = 0.54]{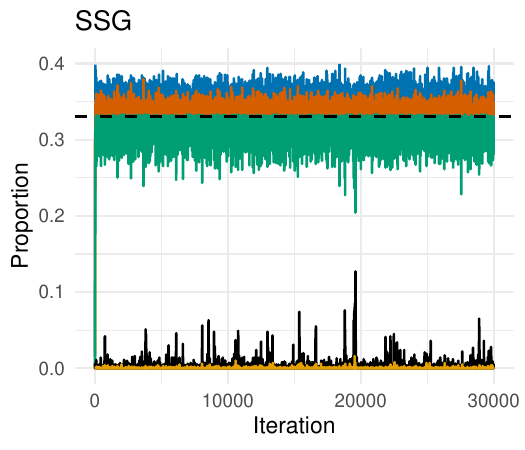}
    \includegraphics[scale = 0.54]{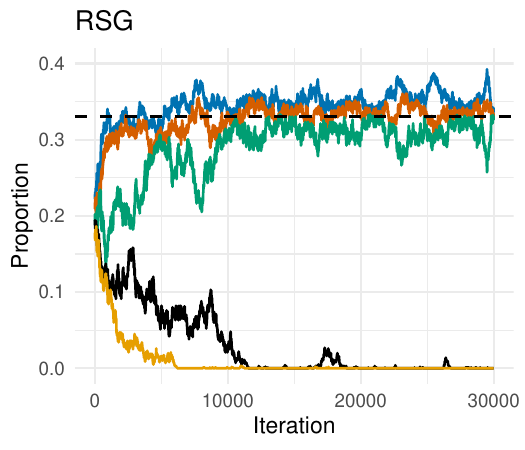}
    \includegraphics[scale = 0.54]{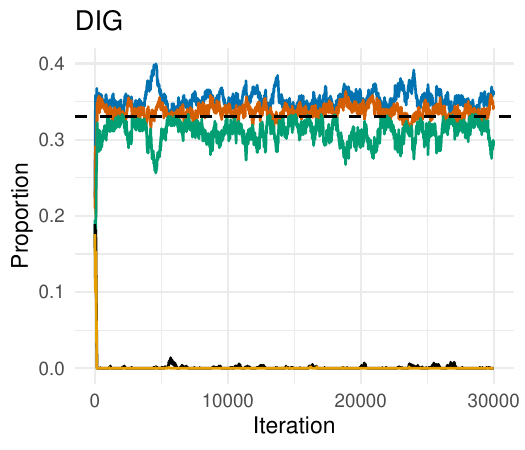}
    \includegraphics[scale = 0.54]{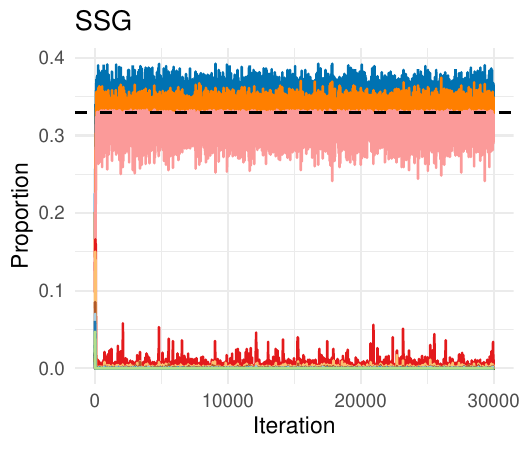}
    \includegraphics[scale = 0.54]{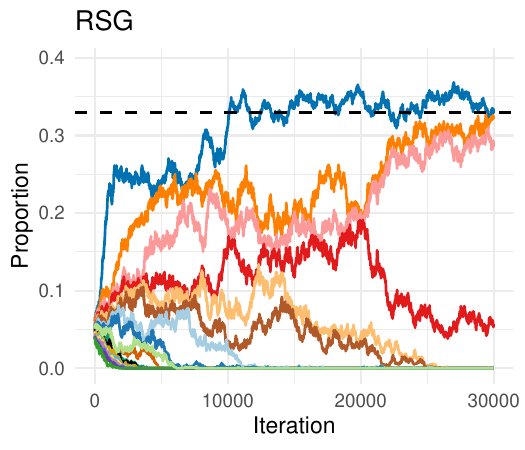}
    \includegraphics[scale = 0.54]{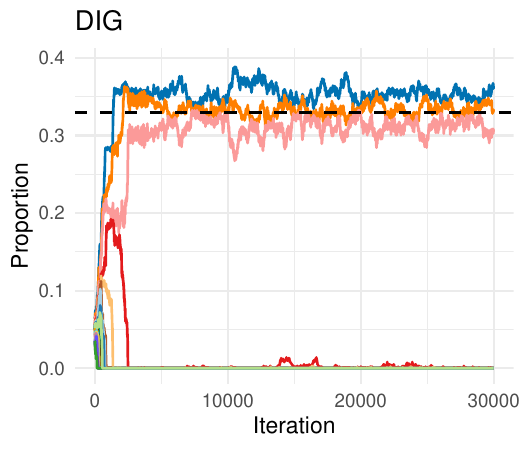}
    \caption{Component proportion over iterations for one MCMC chain of SSG, RSG, and DIG on the Miller-Harrison dataset ($K = 3$). \textit{Upper row}: $K^{\ast} = 5$; \textit{Lower row}: $K^{\ast} = 20$. Each colored line represents a distinct estimated component, while the black dotted line indicates the true cluster proportions.} 
    \label{fig:proportionPlot}
\end{figure}

\noindent Figure \ref{fig:proportionPlot} shows the proportion of component occupancy over the course of the iterations. The plots in the first row correspond to the three methods: SSG, RSG, and DIG, respectively, with $K^{\ast} = 5$, while those in the second row correspond to $K^{\ast} = 20$. We observe that SSG and DIG are able to identify the three main clusters early on, setting the proportions of the remaining components to zero within the initial iterations. In contrast, RSG requires more iterations to recognise the presence of only three clusters.

\subsection{Model misspecification} \label{modMiss}

Our model, as well as the SSG and RSG methods, assumes a diagonal variance-covariance matrix. To assess the robustness of these approaches, we generate synthetic data using a finite mixture model inspired by \citet{miller2018} with four clusters; see Figure \ref{fig:Misspecification}. The synthetic bivariate observations are distributed according to:
\begin{align*}
z_i \mid \bm \pi &\sim \text{Categorical}(\bm \pi),\notag \quad i = 1, \ldots, n, \\
x_i &\sim \mathcal{N}(\mu_{z_i}, \Sigma_{z_i}) \notag \quad i = 1, \ldots, n, 
\end{align*}
where
\begin{align*}
& \bm \pi  = (0.44, \, 0.3, \, 0.25, \, 0.01), \quad
\mu_1 = (4, \, 4)^{\top} , \mu_2 = (7, \, 4)^{\top}, \mu_3 = (6, \, 2)^{\top}, \mu_4 = (8, \, 10)^{\top} \, , \\
& \Sigma_1 = \begin{pmatrix}
1 & 0 \\
0 & 1
\end{pmatrix}, \, 
\Sigma_2 = R
\begin{pmatrix}
2.5 & 0 \\
0 & 0.2
\end{pmatrix} 
R^{\top}, \,
\Sigma_3 = \begin{pmatrix}
3 & 0 \\
0 & 0.1
\end{pmatrix}, \,
\Sigma_4 = 
\begin{pmatrix}
0.1 & 0 \\
0 & 0.1
\end{pmatrix} \, ,
\end{align*}

with 
\begin{equation*}
R =  \begin{pmatrix}
\cos(\pi/4) & - \sin(\pi/4) \\
\sin(\pi/4) & \cos(\pi/4)
\end{pmatrix} \, .
\end{equation*}

\begin{figure}[!ht]
    \centering
    \includegraphics[scale = 1]{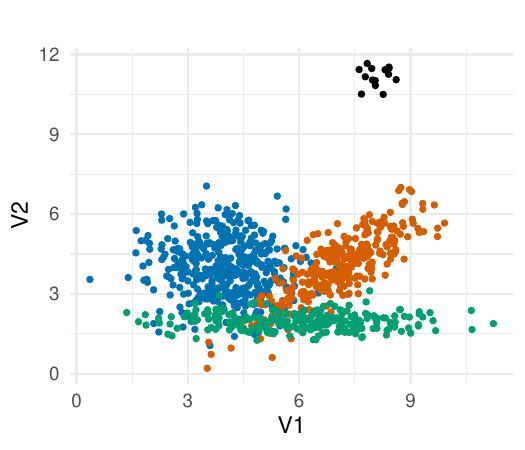}
    \caption{Synthetic Gaussian data with four clusters ($K = 4$), $n = 1000$ observations and $d = 2$ for model misspecification testing.} 
    \label{fig:Misspecification}
\end{figure}

\begin{figure}[htbp]
    \centering
    \begin{subfigure}[c]{0.48\textwidth}
        \includegraphics[width=\textwidth]{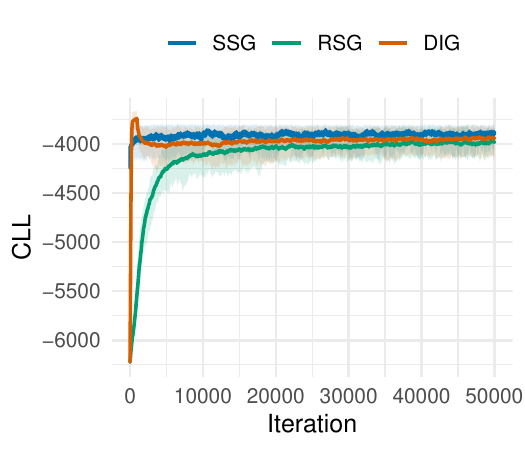}
        \caption{Complete Log-Likelihood (CLL) of SSG, RSG and DIG. Solid lines indicate the mean across 20 independent replicas, with shaded regions representing the 95\% credible interval.}
        \label{fig:misspec_cll}
    \end{subfigure}
    \hfill
    \begin{subfigure}[c]{0.48\textwidth}
        \includegraphics[width=\textwidth]{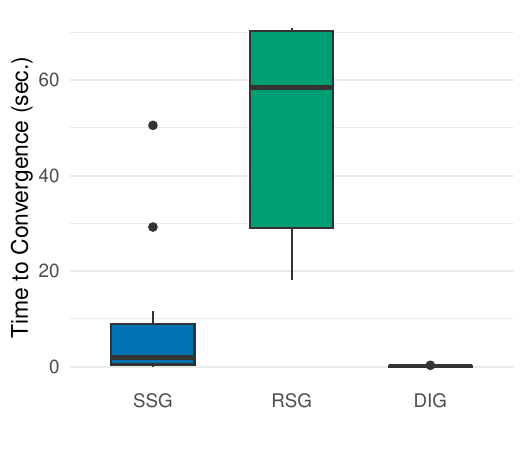}
        \caption{Box plots of the time to convergence (in seconds) for SSG, RSG, and DIG across 20 independent replicas}
        \label{fig:misspec_time}
    \end{subfigure}
    \caption{Model performance under misspecification using synthetic 4-cluster dataset with n = 1000 and d = 2. 
    }
    \label{fig:MisspecificationResult}
\end{figure}

\begin{figure}[!ht]
    \centering
    \includegraphics[scale = 0.54]{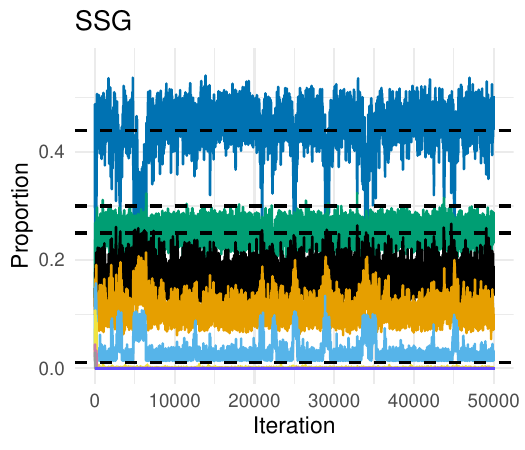}
    \includegraphics[scale = 0.54]{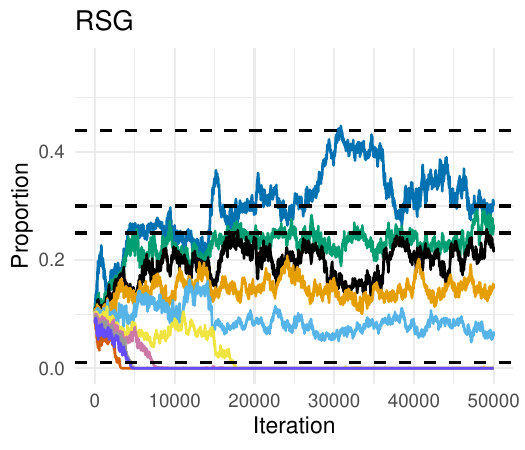}
    \includegraphics[scale = 0.54]{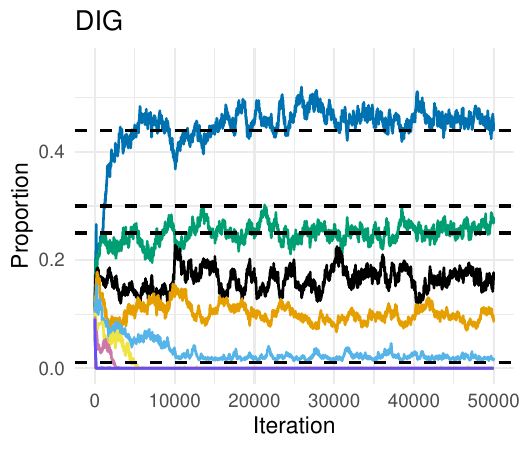}
    \caption{Component evolution for the misspecified model on the 4-cluster dataset, fitted with $K^{\ast} = 10$. The plot shows component occupancy proportions over iterations for one chain of SSG, RSG, and DIG. Each colored line represents a distinct estimated component, while the black dotted line indicates the true cluster proportions. Since we estimate the model using diagonal covariance matrices, we are not able to fully capture the true covariance structure of the data. As a result, we expect to observe a greater number of active components than the true number of clusters. In this case, we note that all algorithms consistently identify 5 clusters.} 
    \label{fig:MisspecificationProp}
\end{figure}

\noindent In Table \ref{tab:resMiss} and Figure \ref{fig:MisspecificationResult}, we observe that, while the DIG method achieves similar CLL values to the other two approaches, it stands out for its significantly faster convergence time. The SSG method shows comparable computational efficiency, whereas RSG reaches convergence in terms of CLL much later. \\
Furthermore, given that the data are modelled with a total of 10 components, i.e.\ $K^{\ast} = 10$, Figure \ref{fig:MisspecificationProp} illustrates how all methods identify the presence of 5 distinct components. Although the true data-generating process involves a mixture of 4 components, this discrepancy can be considered reasonable given that the models do not employ a full covariance matrix.

\begin{table}[htbp]
\renewcommand{\arraystretch}{1.4}
\centering
\begin{tabular}{clcc}
\hline
       &  & CLL & T2C \\ \cline{3-4}   & SSG                  & \textbf{-3894.962} (7.512)    &  7.001 (12.344)   \\
       & RSG                  & -3963.987 (5.161)    & 51.356 (20.900)    \\
       & DIG             & -3940.558 (3.872)    & \textbf{0.090} (0.067)    \\ \hline
       \end{tabular}
    \caption{Performance under model misspecification: Mean Complete Log-Likelihood (CLL) and Mean Time to Convergence (T2C, in seconds) over 20 replications, with standard errors in parentheses, for the Zeisel dataset across the SSG, RSG, and DIG algorithms. The best performance is highlighted in bold.}
    \label{tab:resMiss}
\end{table}

\section{Real world applications} \label{sec:real}

In this section, we illustrate the practical utility of the proposed approach through the analysis of two real datasets, in comparison with well-established and alternative approaches. We compare the proposed model with SSG and RSG in terms of the time to convergence measured in the CLL, and, to evaluate clustering performance, we also report the Adjusted Rand Index (ARI) \citep{hubert1985comparing}, since ground-truth labels are available for both datasets. The ARI is a measure used to evaluate the similarity between two data clusterings defined as follows:
\begin{equation*}
    ARI = \frac{\sum_{ij} \binom{n_{ij}}{2} - \left[ \sum_i \binom{a_i}{2} \sum_j \binom{b_j}{2} \right] / \binom{n}{2}}{ \frac{1}{2} \left[ \sum_i \binom{a_i}{2} + \sum_j \binom{b_j}{2} \right] - \left[ \sum_i \binom{a_i}{2} \sum_j \binom{b_j}{2} \right] / \binom{n}{2}} \, ,
\end{equation*}
where, $n_{ij}$ is the number of observations in cluster $i$ of the first clustering and in cluster $j$ of the second clustering, $a_i$ is the number of observations in cluster $i$ of the first clustering, and $b_j$ is the number of observations in cluster $j$ of the second clustering. For consistency, we employ the same prior specification and hyperparameter settings as those used in the simulation studies for the proposed model.
The purpose of the applications presented in this section is to demonstrate how the DIG model can be applied to a wide range of real-world scenarios.
As an overall comment, the proposed DIG demonstrates better performance compared to the other methods. In particular, DIG shows good performance in the two datasets, with low $n$ and low $d$ and high $n$ and high $d$.

\subsection{Single-Cell RNA-Seq Data from Somatosensory Cortex Cells} \label{sec:real-soma}

We analyse gene expression data from single-cell RNA sequencing of mouse somatosensory cortex and hippocampus cells as reported by \citet{zeisel2015cell}. While the original study identified 47 distinct cell subclasses across 3,005 cells and 19,972 genes, we follow \citet{prabhakaran2016dirichlet} in aggregating these into 7 major cell type categories for clustering evaluation: (1) somatosensory pyramidal neurons, (2) CA1 pyramidal neurons, (3) interneurons, (4) oligodendrocytes, (5) astrocytes, (6) microglia, and (7) endothelial cells. This aggregation combines biologically related subclasses (e.g., the 16 interneuron subtypes identified by Zeisel {\em et al.} are treated as a single interneuron category) to create ground truth cluster labels that capture major cell type distinctions.  We preprocess the data following the approach of \citet{prabhakaran2016dirichlet}, but as in \citet{chaumeny2022bayesian}, we limit the analysis to the $d = 10$ genes with the highest standard deviations. We model the data assuming $K^{\ast} = K=7$, corresponding to these major cell type categories, and adopt a spherical covariance structure.

\begin{table}[htbp]
\renewcommand{\arraystretch}{1.4}
\centering
\begin{tabular}{clccc}
\hline
       &  & CLL & ARI & T2C \\ \cline{3-5}
       & SSG  & -47848.129 (8.907) &  0.305 (0.001) & 621.680 (769.366) \\
       & RSG  & \textbf{-47584.913} (7.1209) &  0.330 (0.001) & 178.883 (119.127) \\
       & DIG  & -47682.224 (7.7999) &  \textbf{0.343} (0.001) & \textbf{87.881} (168.306) \\ \hline
\end{tabular}
\caption{Performance on Mouse Cortex scRNA-seq dataset: Mean Complete Log-Likelihood (CLL), Adjusted Rand Index (ARI), and Mean Time to Convergence (T2C, in seconds) over 20 replications, with standard errors in parentheses, across the SSG, RSG, and DIG algorithms. The best performance is highlighted in bold.}
\label{resZei}
\end{table}

\noindent In Table \ref{resZei}, we report the results for the CLL and the time to convergence. On average, the three methods achieve comparable CLL values, indicating similar clustering performance. However, the time required to reach convergence is substantially shorter for the DIG method compared to both SSG and RSG.
Figure \ref{fig:ZeiData} further illustrates the behaviour of the three algorithms in terms of CLL evolution and convergence times. Notably, some replications of the SSG method fail to achieve convergence within the allocated iterations, while all replications of RSG and DIG successfully converge. This highlights a potential limitation of SSG in this scenario, and reinforces the robustness and efficiency of DIG under the same conditions.

\begin{figure}[htbp]
    \centering
    \begin{subfigure}[c]{0.48\textwidth}
        \includegraphics[width=\textwidth]{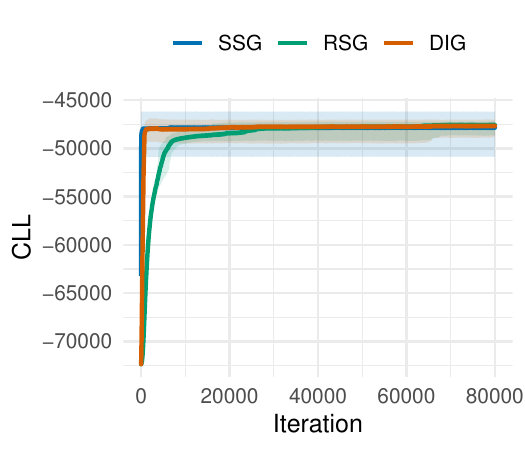}
        \caption{Complete Log-Likelihood (CLL) of SSG, RSG and DIG. Solid lines indicate the mean across 20 independent replicas, with shaded regions representing the 95\% credible interval. Some SSG chains do not converge.}
        \label{fig:gene_cll}
    \end{subfigure}
    \hfill
    \begin{subfigure}[c]{0.48\textwidth}
        \includegraphics[width=\textwidth]{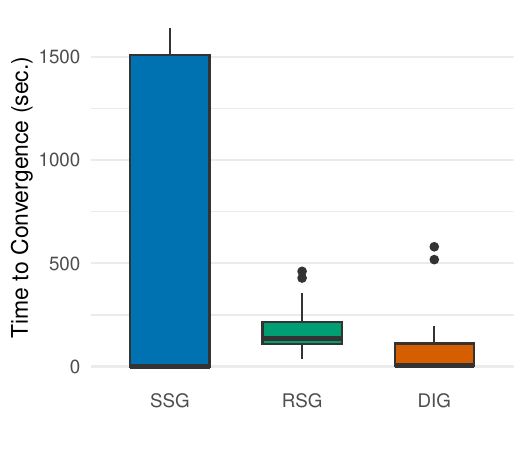}
        \caption{Boxplots of the time to convergence (in seconds) for SSG, RSG, and DIG across 20 independent replicas.}
        \label{fig:gene_time}
    \end{subfigure}
    \caption{Mouse Cortex scRNA-seq dataset performance (\citet{zeisel2015cell}: n=3005, d=10). 
    }
    \label{fig:ZeiData}
\end{figure}

In Figure \ref{fig:matrixGENE}, Appendix \ref{app:heatmaps}, we show the Posterior Similarity Matrix (PSM) for a single MCMC chain (upper panel) and a heatmap comparing the clusters recovered by the DIG algorithm with the true labels (lower panel).
The Posterior Similarity Matrix (PSM) reveals a well-defined block-diagonal pattern, indicating the presence of 7 clusters as inferred by the model. Among these, two clusters are relatively large, while the remaining five are smaller in size. Since the number of components was fixed to 7 in the estimation procedure, this result is expected. However, the sharpness of the blocks indicates that the sampler provides stable assignments within clusters across iterations.

\subsection{Breast Cancer Molecular Subtypes via the PAM50 Gene Signature} \label{sec:real-pam50}

The PAM50 classifier, introduced by \citet{parker2009supervised}, uses expression levels of 50 genes to categorise breast tumours into intrinsic molecular subtypes:
\emph{Luminal A}, \emph{Luminal B}, \emph{HER2-enriched}, \emph{Basal-like}, and \emph{Normal-like}.
This subtype classification consistently proves to be a strong, independent predictor of patient survival, even when considered together with all the standard clinical and pathological factors \citep{parker2009supervised, hu2006molecular, nielsen2010comparison, cheang2012responsiveness, dowsett2013comparison, gnant2014predicting, prat2015clinical}.
The PAM50 gene set has been widely adopted in research and clinical practice, mainly because it demonstrates strong concordance with the much larger and more complex intrinsic gene sets \citep{perou2000molecular, sorlie2003repeated, nielsen2014analytical}.
The aim of our analysis is to identify clusters that, based on the characteristics of the observations, correspond to the molecular subtypes in the dataset under examination. In particular, we analyse a breast cancer gene expression dataset from the \citet{cancer2012tcga}, processed as in \citet{Lock2013} This encompasses $n = 348$ samples and the $d = 50$ PAM50 genes, providing a comprehensive foundation for understanding these molecular distinctions. We model the data assuming $K^{\ast} = K=5$, and a spherical covariance structure.

\begin{table}[htbp]
\renewcommand{\arraystretch}{1.4}
\centering
\begin{tabular}{clccc}
\hline
       &  & CLL & ARI & T2C \\ \cline{3-5}
       & SSG  & \textbf{-17943.844} (4.123) &  0.426 (0.002) & 7.611 (11.336) \\
       & RSG  & -17983.424 (4.117) &  0.452 (0.002) & 6.351 (4.267) \\
       & DIG  & -17957.468 (3.985) &  \textbf{0.453} (0.002) & \textbf{2.912} (5.541) \\ \hline
\end{tabular}
\caption{Performance on PAM50 breast cancer data: Mean Complete Log-Likelihood (CLL), Adjusted Rand Index (ARI), and Mean Time to Convergence (T2C, in seconds) over 20 replications, with standard errors in parentheses, across the SSG, RSG, and DIG algorithms. The best performance is highlighted in bold.}
\label{resPAM}
\end{table}

Table \ref{resPAM} reports the results obtained by the three methods in terms of CLL and time to convergence, based on 20 replications. The results show that the DIG method converges significantly faster than both SSG and RSG, reducing the average convergence time by about half. This computational advantage does not come at the cost of solution quality: all three methods achieve very similar CLL values, indicating that they explore the posterior distribution effectively. Figure \ref{fig:ariPAM} displays the evolution of the CLL over time, along with the observed convergence times for each method, highlighting the differences in efficiency.

\begin{figure}[htbp]
    \centering
    \begin{subfigure}[c]{0.48\textwidth}
        \includegraphics[width=\textwidth]{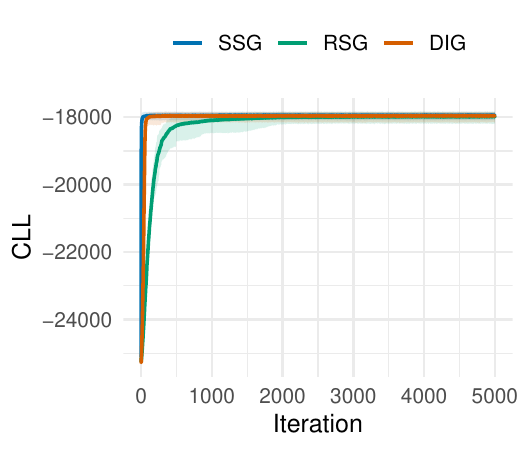}
        \caption{Complete Log-Likelihood (CLL) of SSG, RSG and DIG. Solid lines indicate the mean across 20 independent replicas, with shaded regions representing the 95\% credible interval. Some chains of SSG does not converge.}
        \label{fig:pam50_cll}
    \end{subfigure}
    \hfill
    \begin{subfigure}[c]{0.48\textwidth}
        \includegraphics[width=\textwidth]{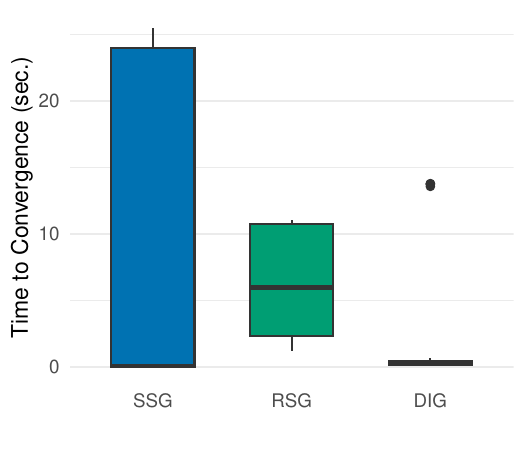}
        \caption{Box plots of the time to convergence (in seconds) for SSG, RSG, and DIG across 20 independent replicas.}
        \label{fig:pam50_time}
    \end{subfigure}
    \caption{PAM50 breast cancer dataset performance (n=348 samples, d=50 genes). 
    }
    \label{fig:ariPAM}
\end{figure}

In Figure \ref{fig:PAMData}, Appendix \ref{app:heatmaps}, we show the PSM from a single MCMC chain (upper panel) alongside a heatmap comparing the clusters identified by the DIG algorithm with the true labels (lower panel). The results show that the method successfully detects 5 clusters, consisting of 4 large clusters and 1 very small cluster.

\section{Conclusion} \label{sec:disc}

We have proposed an adaptive Gibbs sampling algorithm for finite mixture models, designed to improve convergence efficiency and computational tractability. Unlike standard approaches that rely on uniform selection of observations, our method dynamically adjusts the selection mechanism to prioritise observations that are more uncertain with respect to their current component allocation. This adaptive strategy increases the sampler’s ability to explore the posterior distribution efficiently, particularly during the initial convergence phase of the algorithm.

The selection mechanism evolves over time through a weighting scheme that balances past selection probabilities with information about current allocation uncertainty. Early iterations emphasise poorly allocated observations, while later updates become more stable as the model reaches a more reliable configuration. This dynamic adjustment is controlled by a decay parameter that gradually reduces the impact of the adaptive component, ensuring long-term stability.

We showed that the adaptive weights remain valid throughout the sampling process, maintaining positive probabilities for all observations. This property ensures that the Markov chain preserves desirable convergence properties and satisfies the theoretical conditions required for ergodicity.

Through simulation studies and data applications, we demonstrated that the proposed method achieves improved performance in terms of convergence speed and clustering accuracy when compared to standard approaches. In particular, the adaptive mechanism allows the algorithm to focus computational effort where it is most needed, leading to a more efficient use of sampling iterations. Notably, our adaptive approach is essentially automatic, it does not require any user input, and it does not require any extra calculations once one has derived the required conditional distributions for a Systematic Gibbs Sampler.

Although our focus has been on finite mixture models, the general idea of adaptively modulating selection probabilities based on model uncertainty can be extended to other classes of latent variable models. Future developments may explore more general adaptation strategies, alternative decay schedules, and the automatic tuning of algorithmic parameters to further enhance the method’s applicability and computational performance.

\section*{Acknowledgments} 

The authors thank the members of Bayes4Helath and CoSInES for fruitful discussions.
DF was supported by the Italian Ministry of University and Research (MUR), 
Department of Excellence project 2023-2027 ReDS 'Rethinking Data Science' - 
Department of Statistics, Computer Science, Applications - University of Florence.
AQW acknowledges financial support from a Discipline Hopping Award awarded by the EPSRC Prob AI Hub.
SG acknowledges support from the European Union (ERC), through the Starting grant ‘PrSc-HDBayLe’ (101076564).
AC was supported by EPSRC grantEP/R018561/1, New Approaches to BayesianData Science: Tackling Challenges from the Health Sciences and by the Royal Society (Dimension Supercharged Projective Sampling project).
GOR acknowledges financial support from UKRI grant EP/Y014650/1
as part of the ERC Synergy project OCEAN,
by EPSRC grants Bayes for Health (R018561), CoSInES (R034710),
PINCODE (EP/X028119/1), and EP/V009478/1.
SR was supported by the Research Council of Norway, Integreat - Norwegian Centre for knowledge-driven machine learning, project number 332645, and by EPSRC grant Bayes4Health,
‘New Approaches to Bayesian Data Science: Tackling Challenges from the Health Sciences’ (EP/R018561/1).
FP was supported by the UKRI grant EP/Y014650/1
as part of the ERC Synergy project OCEAN, and by EPSRC grant Bayes4Health,
‘New Approaches to Bayesian Data Science: Tackling Challenges from the Health Sciences’ (EP/R018561/1). 
PDWK was supported by the UK Medical Research
Council (MC\_UU\_00040/05).

\printbibliography

@article{roberts2015surprising,
  title={Surprising convergence properties of some simple Gibbs samplers under various scans},
  author={Roberts, Gareth O and Rosenthal, Jeffrey S},
  journal={International Journal of Statistics and Probability},
  volume={5},
  number={1},
  pages={51--60},
  year={2015}
}

@article{cancer2012tcga,
  title={Comprehensive molecular portraits of human breast tumours},
  author={{Cancer Genome Atlas Network}},
  journal={Nature},
  volume={490},
  number={7418},
  pages={61--70},
  year={2012},
  doi={10.1038/nature11412},
  publisher={Nature Publishing Group}
}

@article{Lock2013,
    author = {Lock, Eric F. and Dunson, David B.},
    title = {Bayesian consensus clustering},
    journal = {Bioinformatics},
    volume = {29},
    number = {20},
    pages = {2610-2616},
    year = {2013},
    month = {08},
    abstract = {Motivation: In biomedical research a growing number of platforms and technologies are used to measure diverse but related information, and the task of clustering a set of objects based on multiple sources of data arises in several applications. Most current approaches to multisource clustering either independently determine a separate clustering for each data source or determine a single ‘joint’ clustering for all data sources. There is a need for more flexible approaches that simultaneously model the dependence and the heterogeneity of the data sources.Results: We propose an integrative statistical model that permits a separate clustering of the objects for each data source. These separate clusterings adhere loosely to an overall consensus clustering, and hence they are not independent. We describe a computationally scalable Bayesian framework for simultaneous estimation of both the consensus clustering and the source-specific clusterings. We demonstrate that this flexible approach is more robust than joint clustering of all data sources, and is more powerful than clustering each data source independently. We present an application to subtype identification of breast cancer tumor samples using publicly available data from The Cancer Genome Atlas.Availability: R code with instructions and examples is available at http://people.duke.edu/\%7Eel113/software.html.Contact:  Eric.Lock@duke.eduSupplementary information:  Supplementary data are available at Bioinformatics online.},
    issn = {1367-4803},
    doi = {10.1093/bioinformatics/btt425},
    url = {https://doi.org/10.1093/bioinformatics/btt425},
    eprint = {https://academic.oup.com/bioinformatics/article-pdf/29/20/2610/48891353/bioinformatics_29_20_2610.pdf},
}

@article{Ascolani2024,
abstract = {The Gibbs sampler (a.k.a. Glauber dynamics and heat-bath algorithm) is a popular Markov Chain Monte Carlo algorithm which iteratively samples from the conditional distributions of a probability measure $\pi$ of interest. Under the assumption that $\pi$ is strongly log-concave, we show that the random scan Gibbs sampler contracts in relative entropy and provide a sharp characterization of the associated contraction rate. Assuming that evaluating conditionals is cheap compared to evaluating the joint density, our results imply that the number of full evaluations of $\pi$ needed for the Gibbs sampler to mix grows linearly with the condition number and is independent of the dimension. If $\pi$ is non-strongly log-concave, the convergence rate in entropy degrades from exponential to polynomial. Our techniques are versatile and extend to Metropolis-within-Gibbs schemes and the Hit-and-Run algorithm. A comparison with gradient-based schemes and the connection with the optimization literature are also discussed.},
archivePrefix = {arXiv},
arxivId = {2410.00858},
author = {Ascolani, Filippo and Lavenant, Hugo and Zanella, Giacomo},
eprint = {2410.00858},
title = {{Entropy contraction of the Gibbs sampler under log-concavity}},
url = {https://arxiv.org/abs/2410.00858v1},
year = {2024}
}

@article{Goyal2025,
abstract = {We establish bounds on the conductance for the systematic-scan and random-scan Gibbs samplers when the target distribution satisfies a Poincare or log-Sobolev inequality and possesses sufficiently regular conditional distributions. These bounds lead to mixing time guarantees that extend beyond the log-concave setting, offering new insights into the convergence behavior of Gibbs sampling in broader regimes. Moreover, we demonstrate that our results remain valid for log-Lipschitz and log-smooth target distributions. Our approach relies on novel three-set isoperimetric inequalities and a sequential coupling argument for the Gibbs sampler.},
archivePrefix = {arXiv},
arxivId = {2506.22258},
author = {Goyal, Alexander and Deligiannidis, George and Kantas, Nikolas},
eprint = {2506.22258},
title = {{Mixing Time Bounds for the Gibbs Sampler under Isoperimetry}},
url = {http://arxiv.org/abs/2506.22258},
year = {2025}
}

@article{he2016scan,
  title={Scan order in Gibbs sampling: Models in which it matters and bounds on how much},
  author={He, Bryan D and De Sa, Christopher M and Mitliagkas, Ioannis and R{\'e}, Christopher},
  journal={Advances in neural information processing systems},
  volume={29},
  year={2016}
}

@article{andrieu2008tutorial,
  title={A tutorial on adaptive {MCMC}},
  author={Andrieu, Christophe and Thoms, Johannes},
  journal={Statistics and Computing},
  volume={18},
  number={4},
  pages={343--373},
  year={2008},
  publisher={Springer},
  doi={10.1007/s11222-008-9110-y}
}

@article{fiorentini2014efficient,
  title={Efficient {MCMC} sampling in dynamic mixture models},
  author={Fiorentini, Gabriele and Planas, Christophe and Rossi, Alessandro},
  journal={Statistics and Computing},
  volume={24},
  number={1},
  pages={77--89},
  year={2014},
  publisher={Springer},
  doi={10.1007/s11222-012-9354-4}
}

@book{fruhwirth2006finite,
  title={Finite Mixture and {M}arkov Switching Models},
  author={Fr{\"u}hwirth-Schnatter, Sylvia},
  year={2006},
  publisher={Springer},
  address={New York},
  isbn={978-0-387-32909-3},
  doi={10.1007/978-0-387-35768-3}
}

@book{pathria2011statistical,
  title={Statistical Mechanics},
  author={Pathria, R. K. and Beale, Paul D.},
  edition={3rd},
  year={2011},
  publisher={Academic Press},
  address={Boston},
  isbn={978-0-12-382188-1}
}

@article{rousseau2011,
author = {Rousseau, Judith and Mengersen, Kerrie},
title = {Asymptotic behaviour of the posterior distribution in overfitted mixture models},
journal = {Journal of the Royal Statistical Society: Series B (Statistical Methodology)},
volume = {73},
number = {5},
pages = {689-710},
keywords = {Asymptotic behaviour, Bayesian methods, Mixture models, Overfitting, Posterior concentration},
doi = {https://doi.org/10.1111/j.1467-9868.2011.00781.x},
url = {https://rss.onlinelibrary.wiley.com/doi/abs/10.1111/j.1467-9868.2011.00781.x},
eprint = {https://rss.onlinelibrary.wiley.com/doi/pdf/10.1111/j.1467-9868.2011.00781.x},
abstract = {Summary. We study the asymptotic behaviour of the posterior distribution in a mixture model when the number of components in the mixture is larger than the true number of components: a situation which is commonly referred to as an overfitted mixture. We prove in particular that quite generally the posterior distribution has a stable and interesting behaviour, since it tends to empty the extra components. This stability is achieved under some restriction on the prior, which can be used as a guideline for choosing the prior. Some simulations are presented to illustrate this behaviour.},
year = {2011}
}

@article{stephens2000label,
  title={Dealing with label switching in mixture models},
  author={Stephens, Matthew},
  journal={Journal of the Royal Statistical Society: Series B (Statistical Methodology)},
  volume={62},
  number={4},
  pages={795--809},
  year={2000},
  publisher={Wiley},
  doi={10.1111/1467-9868.00265}
}

@article{bardenet2015adaptive,
  title={Adaptive {MCMC} with online relabeling},
  author={Bardenet, R{\'e}mi and Capp{\'e}, Olivier and Fort, Gersende and K{\'e}gl, Bal{\'a}zs},
  journal={Bernoulli},
  volume={21},
  number={3},
  pages={1304--1340},
  year={2015},
  publisher={Bernoulli Society for Mathematical Statistics and Probability},
  doi={10.3150/13-BEJ578}
}

@article{Pearson1894,
author = {Pearson, Karl },
title = {Contributions to the mathematical theory of evolution},
journal = {Philosophical Transactions of the Royal Society of London. (A.)},
volume = {185},
number = {},
pages = {71-110},
year = {1894}
}

@book{Titterington1985,
  title={Statistical Analysis of Finite Mixture Distributions},
  author={Titterington, D.M. and Smith, A.F.M. and Makov, U.E.},
  isbn={9780471907633},
  lccn={85006434},
  series={Applied section},
  url={https://books.google.co.uk/books?id=hZ0QAQAAIAAJ},
  year={1985},
  publisher={Wiley}
}

@article{Richardson1997,
author = {Richardson, Sylvia. and Green, Peter J.},
title = {On Bayesian Analysis of Mixtures with an Unknown Number of Components (with discussion)},
journal = {Journal of the Royal Statistical Society: Series B (Statistical Methodology)},
volume = {59},
number = {4},
pages = {731-792},
year = {1997}
}

@book{Everitt1981,
  title={Finite Mixture Distributions},
  author={Everitt, B. and Hand, D.J.},
  isbn={9780412224201},
  lccn={80041131},
  series={Monographs on Statistics and Applied Probability},
  year={1981},
  publisher={Springer Netherlands},
  doi = {10.1007/978-94-009-5897-5}
}

@book{McLachlan1987,
  title={Mixture Models},
  author={Geoffrey J. McLachlan and Basford, K.E.},
  isbn={9780824776916},
  lccn={87018931},
  series={Statistics:  A Series of Textbooks and Monographs},
  year={1987},
  publisher={Taylor \& Francis}
}

@article{Andrieu2005,
abstract = {In this paper we address the problem of the stability and convergence of the stochastic approximation procedure \[ \theta_{n+1} = \theta_n + \gamma_{n+1} [h(\theta_n)+\xi_{n+1}]. \] The stability o...},
author = {Andrieu, Christophe and Moulines, {\'{E}}ric and Priouret, Pierre},
doi = {10.1137/S0363012902417267},
issn = {03630129},
journal = {SIAM Journal on Control and Optimization},
keywords = {62L20,90C15,adaptive Markov chainMonte Carlo,randomly varying truncation,state-dependent noise,stochastic approximation},
number = {1},
pages = {283--312},
publisher = {Society for Industrial and Applied Mathematics},
title = {{Stability of Stochastic Approximation under Verifiable Conditions}},
volume = {44},
year = {2005}
}

@article{Hobert1998,
abstract = {The members of a set of conditional probability density functions are called compatible if there exists a joint probability density function that generates them. We generalize this concept by calling the conditionals functionally compatible if there exists a non-negative function that behaves like a joint density as far as generating the conditionals according to the probability calculus, but whose integral over the whole space is not necessarily finite. A necessary and sufficient condition for functional compatibility is given that provides a method of calculating this function, if it exists. A Markov transition function is then constructed using a set of functionally compatible conditional densities and it is shown, using the compatibility results, that the associated Markov chain is positive recurrent if and only if the conditionals are compatible. A Gibbs Markov chain, constructed via “Gibbs conditionals” from a hierarchical model with an improper posterior, is a special case. Therefore, the results of this article can be used to evaluate the consequences of applying the Gibbs sampler when the posterior's impropriety is unknown to the user. Our results cannot, however, be used to detect improper posteriors. Monte Carlo approximations based on Gibbs chains are shown to have undesirable limiting behavior when the posterior is improper. The results are applied to a Bayesian hierarchical one-way random effects model with an improper posterior distribution. The model is simple, but also quite similar to some models with improper posteriors that have been used in conjunction with the Gibbs sampler in the literature. {\textcopyright} 1998 American Statistical Association, Institute of Mathematical Statistics and Intelface Foundation of North America.},
author = {Hobert, James P. and Casella, George},
doi = {10.1080/10618600.1998.10474760},
issn = {15372715},
journal = {Journal of Computational and Graphical Statistics},
keywords = {Bayesian hierarchical model,Compatible conditional densities,Improper prior,Markov transition function,Monte Carlo,Null Markov chain},
number = {1},
pages = {42--60},
title = {{Functional compatibility, markov chains, and gibbs sampling with improper posteriors}},
volume = {7},
year = {1998}
}

@article{Roberts2001,
abstract = {We define a notion of de-initializing Markov chains. We prove that to analyse convergence of Markov chains to stationarity, it suffices to analyse convergence of a de-initializing chain. Applications are given to Markov chain Monte Carlo algorithms and to convergence diagnostics.},
author = {Roberts, Gareth O. and Rosenthal, Jeffrey S.},
doi = {10.1111/1467-9469.00250},
issn = {03036898},
journal = {Scandinavian Journal of Statistics},
keywords = {Convergence rate,De-initializing process,Markov chain,Markov chain Monte Carlo,Stochastic process},
number = {3},
pages = {489--504},
title = {{Markov chains and de-initializing processes}},
volume = {28},
year = {2001}
}

@book{Meyn1993a,
address = {London},
author = {Meyn, S.P. and Tweedie, R.L},
pages = {568},
publisher = {Springer-Verlag},
title = {{Markov Chains and Stochastic Stability}},
year = {1993}
}

@article{latuszynski2013adaptive,
  title={Adaptive Gibbs samplers and related MCMC methods},
  author={{\L}atuszy{\'n}ski, Krzysztof and Roberts, Gareth O and Rosenthal, Jeffrey S},
  journal={The Annals of Applied Probability},
  volume={23},
  number={1},
  pages={66--98},
  year={2013},
  publisher={Institute of Mathematical Statistics}
}

@article{diebolt1994estimation,
  title={Estimation of finite mixture distributions through Bayesian sampling},
  author={Diebolt, Jean and Robert, Christian P},
  journal={Journal of the Royal Statistical Society: Series B (Methodological)},
  volume={56},
  number={2},
  pages={363--375},
  year={1994},
  publisher={Wiley Online Library},
  doi = {10.1111/j.2517-6161.1994.tb01985.x}
}

@article{nobile2004,
 ISSN = {00905364},
 URL = {http://www.jstor.org/stable/3448564},
 author = {Nobile, A.},
 journal = {The Annals of Statistics},
 number = {5},
 pages = {2044--2073},
 publisher = {Institute of Mathematical Statistics},
 title = {On the Posterior Distribution of the Number of Components in a Finite Mixture},
 urldate = {2022-10-24},
 volume = {32},
 year = {2004}
}

@article{miller2018,
author = {Jeffrey W. Miller and Matthew T. Harrison},
title = {Mixture Models With a Prior on the Number of Components},
journal = {Journal of the American Statistical Association},
volume = {113},
number = {521},
pages = {340-356},
year  = {2018},
publisher = {Taylor & Francis},
doi = {10.1080/01621459.2016.1255636},
note ={PMID: 29983475},
URL = {https://doi.org/10.1080/01621459.2016.1255636},
eprint = {https://doi.org/10.1080/01621459.2016.1255636}
}

@article{maceachern1994,
author = { Steven N. MacEachern },
title = {Estimating normal means with a conjugate style dirichlet process prior},
journal = {Communications in Statistics - Simulation and Computation},
volume = {23},
number = {3},
pages = {727-741},
year  = {1994},
publisher = {Taylor & Francis},
doi = {10.1080/03610919408813196},
URL = {https://doi.org/10.1080/03610919408813196},
eprint = {https://doi.org/10.1080/03610919408813196}
}

@article{escobar1995,
 ISSN = {01621459},
 URL = {http://www.jstor.org/stable/2291069},
 author = {Michael D. Escobar and Mike West},
 journal = {Journal of the American Statistical Association},
 number = {430},
 pages = {577--588},
 publisher = {[American Statistical Association, Taylor & Francis, Ltd.]},
 title = {Bayesian Density Estimation and Inference Using Mixtures},
 urldate = {2022-10-24},
 volume = {90},
 year = {1995},
 doi = {10.2307/2291069}
}

@article{neal2000,
 ISSN = {10618600},
 URL = {http://www.jstor.org/stable/1390653},
 author = {Radford M. Neal},
 journal = {Journal of Computational and Graphical Statistics},
 number = {2},
 pages = {249--265},
 publisher = {[American Statistical Association; Taylor & Francis, Ltd.; Institute of Mathematical Statistics; Interface Foundation of America]},
 title = {Markov Chain Sampling Methods for Dirichlet Process Mixture Models},
 urldate = {2022-10-24},
 volume = {9},
 year = {2000}
}

@article{tsallis1988possible,
  title={Possible generalization of Boltzmann-Gibbs statistics},
  author={Tsallis, Constantino},
  journal={Journal of statistical physics},
  volume={52},
  pages={479--487},
  year={1988},
  publisher={Springer}
}

@article{fraley2007bayesian,
  title={Bayesian regularization for normal mixture estimation and model-based clustering},
  author={Fraley, Chris and Raftery, Adrian E},
  journal={Journal of classification},
  volume={24},
  number={2},
  pages={155--181},
  year={2007},
  publisher={Springer}
}

@article{geman1984stochastic,
  title={Stochastic relaxation, Gibbs distributions, and the Bayesian restoration of images},
  author={Geman, Stuart and Geman, Donald},
  journal={IEEE Transactions on pattern analysis and machine intelligence},
  number={6},
  pages={721--741},
  year={1984},
  publisher={IEEE}
}

@article{roberts2009examples,
  title={Examples of adaptive MCMC},
  author={Roberts, Gareth O and Rosenthal, Jeffrey S},
  journal={Journal of computational and graphical statistics},
  volume={18},
  number={2},
  pages={349--367},
  year={2009},
  publisher={Taylor \& Francis}
}

@incollection{geweke1991evaluating,
    author = {Geweke, John},
    isbn = {9780198522669},
    title = {Evaluating the Accuracy of Sampling-Based Approaches to the Calculation of Posterior Moments},
    booktitle = {Bayesian Statistics 4: Proceedings of the Fourth Valencia International Meeting, Dedicated to the memory of Morris H. DeGroot, 1931–1989},
    publisher = {Oxford University Press},
    year = {1992},
    month = {08},
    abstract = {Data augmentation and Gibbs sampling are two closely related, sampling-based approaches to the calculation of posterior moments. The fact that each produces a sample whose constituents are neither independent nor identically distributed complicates the assessment of convergence and numerical accuracy of the approximations to the expected value of functions of interest under the posterior. In this paper methods from spectral analysis are used to evaluate numerical accuracy formally and construct diagnostics for convergence. These methods are illustrated in the normal linear model with informative priors, and in the Tobit censored regression model.},
    doi = {10.1093/oso/9780198522669.003.0010},
    url = {https://doi.org/10.1093/oso/9780198522669.003.0010},
    eprint = {https://academic.oup.com/book/0/chapter/422209572/chapter-pdf/52447184/isbn-9780198522669-book-part-10.pdf},
}

@article{zeisel2015cell,
  title={Cell types in the mouse cortex and hippocampus revealed by single-cell RNA-seq},
  author={Zeisel, Amit and Mu{\~n}oz-Manchado, Ana B and Codeluppi, Simone and L{\"o}nnerberg, Peter and La Manno, Gioele and Jur{\'e}us, Anna and Marques, Sueli and Munguba, Hermany and He, Liqun and Betsholtz, Christer and others},
  journal={Science},
  volume={347},
  number={6226},
  pages={1138--1142},
  year={2015},
  publisher={American Association for the Advancement of Science}
}

@inproceedings{prabhakaran2016dirichlet,
  title={Dirichlet process mixture model for correcting technical variation in single-cell gene expression data},
  author={Prabhakaran, Sandhya and Azizi, Elham and Carr, Ambrose and Pe’er, Dana},
  booktitle={International conference on machine learning},
  pages={1070--1079},
  year={2016},
  organization={PMLR}
}

@article{perou2000molecular,
  title={Molecular portraits of human breast tumours},
  author={Perou, Charles M and S{\o}rlie, Therese and Eisen, Michael B and Van De Rijn, Matt and Jeffrey, Stefanie S and Rees, Christian A and Pollack, Jonathan R and Ross, Douglas T and Johnsen, Hilde and Akslen, Lars A and others},
  journal={nature},
  volume={406},
  number={6797},
  pages={747--752},
  year={2000},
  publisher={Nature Publishing Group UK London}
}

@article{sorlie2001gene,
  title={Gene expression patterns of breast carcinomas distinguish tumor subclasses with clinical implications},
  author={S{\o}rlie, Therese and Perou, Charles M and Tibshirani, Robert and Aas, Turid and Geisler, Stephanie and Johnsen, Hilde and Hastie, Trevor and Eisen, Michael B and Van De Rijn, Matt and Jeffrey, Stefanie S and others},
  journal={Proceedings of the National Academy of Sciences},
  volume={98},
  number={19},
  pages={10869--10874},
  year={2001},
  publisher={National Acad Sciences}
}

@article{parker2009supervised,
  title={Supervised risk predictor of breast cancer based on intrinsic subtypes},
  author={Parker, Joel S and Mullins, Michael and Cheang, Maggie CU and Leung, Samuel and Voduc, David and Vickery, Tammi and Davies, Sherri and Fauron, Christiane and He, Xiaping and Hu, Zhiyuan and others},
  journal={Journal of clinical oncology},
  volume={27},
  number={8},
  pages={1160--1167},
  year={2009},
  publisher={American Society of Clinical Oncology}
}

@article{hu2006molecular,
  title={The molecular portraits of breast tumors are conserved across microarray platforms},
  author={Hu, Zhiyuan and Fan, Cheng and Oh, Daniel S and Marron, JS and He, Xiaping and Qaqish, Bahjat F and Livasy, Chad and Carey, Lisa A and Reynolds, Evangeline and Dressler, Lynn and others},
  journal={BMC genomics},
  volume={7},
  pages={1--12},
  year={2006},
  publisher={Springer}
}

@article{nielsen2010comparison,
  title={A comparison of PAM50 intrinsic subtyping with immunohistochemistry and clinical prognostic factors in tamoxifen-treated estrogen receptor--positive breast cancer},
  author={Nielsen, Torsten O and Parker, Joel S and Leung, Samuel and Voduc, David and Ebbert, Mark and Vickery, Tammi and Davies, Sherri R and Snider, Jacqueline and Stijleman, Inge J and Reed, Jerry and others},
  journal={Clinical cancer research},
  volume={16},
  number={21},
  pages={5222--5232},
  year={2010},
  publisher={AACR}
}

@article{cheang2012responsiveness,
  title={Responsiveness of intrinsic subtypes to adjuvant anthracycline substitution in the NCIC. CTG MA. 5 randomized trial},
  author={Cheang, Maggie CU and Voduc, K David and Tu, Dongsheng and Jiang, Shan and Leung, Samuel and Chia, Stephen K and Shepherd, Lois E and Levine, Mark N and Pritchard, Kathleen I and Davies, Sherri and others},
  journal={Clinical Cancer Research},
  volume={18},
  number={8},
  pages={2402--2412},
  year={2012},
  publisher={AACR},
  doi = {10.1158/1078-0432.CCR-11-2956}
}

@article{sorlie2003repeated,
  title={Repeated observation of breast tumor subtypes in independent gene expression data sets},
  author={S{\o}rlie, Therese and Tibshirani, Robert and Parker, Joel and Hastie, Trevor and Marron, James Stephen and Nobel, Andrew and Deng, Shibing and Johnsen, Hilde and Pesich, Robert and Geisler, Stephanie and others},
  journal={Proceedings of the national academy of sciences},
  volume={100},
  number={14},
  pages={8418--8423},
  year={2003},
  publisher={National Acad Sciences}
}

@article{broet2002bayesian,
  title={Bayesian hierarchical model for identifying changes in gene expression from microarray experiments},
  author={Bro{\"e}t, Philippe and Richardson, Sylvia and Radvanyi, Fran{\c{c}}ois},
  journal={Journal of Computational Biology},
  volume={9},
  number={4},
  pages={671--683},
  year={2002},
  publisher={Mary Ann Liebert, Inc.},
  doi = {10.1089/106652702760277381}
}

@article{fraley2002model,
  title={Model-based clustering, discriminant analysis, and density estimation},
  author={Fraley, Chris and Raftery, Adrian E},
  journal={Journal of the American statistical Association},
  volume={97},
  number={458},
  pages={611--631},
  year={2002},
  publisher={Taylor \& Francis}
}

@article{schlattmann1993mixture,
  title={Mixture models and disease mapping},
  author={Schlattmann, Peter and B{\"o}hning, Dankmar},
  journal={Statistics in medicine},
  volume={12},
  number={19-20},
  pages={1943--1950},
  year={1993},
  publisher={Wiley Online Library}
}

@article{green2002hidden,
  title={Hidden Markov models and disease mapping},
  author={Green, Peter J and Richardson, Sylvia},
  journal={Journal of the American statistical association},
  volume={97},
  number={460},
  pages={1055--1070},
  year={2002},
  publisher={Taylor \& Francis}
}

@article{jedidi1997finite,
  title={Finite-mixture structural equation models for response-based segmentation and unobserved heterogeneity},
  author={Jedidi, Kamel and Jagpal, Harsharanjeet S and DeSarbo, Wayne S},
  journal={Marketing Science},
  volume={16},
  number={1},
  pages={39--59},
  year={1997},
  publisher={INFORMS}
}

@article{allenby1999dynamic,
  title={A dynamic model of purchase timing with application to direct marketing},
  author={Allenby, Greg M and Leone, Robert P and Jen, Lichung},
  journal={Journal of the American Statistical Association},
  volume={94},
  number={446},
  pages={365--374},
  year={1999},
  publisher={Taylor \& Francis},
  doi = {10.1080/01621459.1999.10474127}
}

@article{hamilton1989new,
  title={A new approach to the economic analysis of nonstationary time series and the business cycle},
  author={Hamilton, James D},
  journal={Econometrica: Journal of the econometric society},
  pages={357--384},
  year={1989},
  publisher={JSTOR}
}

@article{fruhwirth2001markov,
  title={Markov chain Monte Carlo estimation of classical and dynamic switching and mixture models},
  author={Fr{\"u}hwirth-Schnatter, Sylvia},
  journal={Journal of the American Statistical Association},
  volume={96},
  number={453},
  pages={194--209},
  year={2001},
  publisher={Taylor \& Francis}
}

@article{weigend2000predicting,
  title={Predicting daily probability distributions of S\&P500 returns},
  author={Weigend, Andreas S and Shi, Shanming},
  journal={Journal of Forecasting},
  volume={19},
  number={4},
  pages={375--392},
  year={2000},
  publisher={Wiley Online Library}
}

@article{kaufmann2002bayesian,
  title={Bayesian analysis of switching ARCH models},
  author={Kaufmann, Sylvia and Fr{\"u}hwirth-Schnatter, Sylvia},
  journal={Journal of Time Series Analysis},
  volume={23},
  number={4},
  pages={425--458},
  year={2002},
  publisher={Wiley Online Library}
}

@article{gelfand1990sampling,
  title={Sampling-based approaches to calculating marginal densities},
  author={Gelfand, Alan E and Smith, Adrian FM},
  journal={Journal of the American statistical association},
  volume={85},
  number={410},
  pages={398--409},
  year={1990},
  publisher={Taylor \& Francis}
}

@article{amit1991comparing,
  title={Comparing sweep strategies for stochastic relaxation},
  author={Amit, Yali and Grenander, Ulf},
  journal={Journal of multivariate analysis},
  volume={37},
  number={2},
  pages={197--222},
  year={1991},
  publisher={Elsevier},
  doi = {10.1016/0047-259X(91)90080-L}
}

@article{fishman1996coordinate,
  title={Coordinate selection rules for Gibbs sampling},
  author={Fishman, George S},
  journal={The Annals of Applied Probability},
  volume={6},
  number={2},
  pages={444--465},
  year={1996},
  publisher={Institute of Mathematical Statistics},
  doi = {10.1214/aoap/1034968139}
}

@article{roberts1997updating,
  title={Updating schemes, correlation structure, blocking and parameterization for the Gibbs sampler},
  author={Roberts, Gareth O and Sahu, Sujit K},
  journal={Journal of the Royal Statistical Society Series B: Statistical Methodology},
  volume={59},
  number={2},
  pages={291--317},
  year={1997},
  publisher={Oxford University Press}
}

@article{haario2001adaptive,
  title={An adaptive Metropolis algorithm},
  author={Haario, Heikki and Saksman, Eero and Tamminen, Johanna},
  year={2001}
}

@article{roberts2007coupling,
  title={Coupling and ergodicity of adaptive Markov chain Monte Carlo algorithms},
  author={Roberts, Gareth O and Rosenthal, Jeffrey S},
  journal={Journal of applied probability},
  volume={44},
  number={2},
  pages={458--475},
  year={2007},
  publisher={Cambridge University Press}
}

@article{carpenter2017stan,
  title={Stan: A probabilistic programming language},
  author={Carpenter, Bob and Gelman, Andrew and Hoffman, Matthew D and Lee, Daniel and Goodrich, Ben and Betancourt, Michael and Brubaker, Marcus and Guo, Jiqiang and Li, Peter and Riddell, Allen},
  journal={Journal of statistical software},
  volume={76},
  pages={1--32},
  year={2017}
}

@article{chaumeny2022bayesian,
  title={Bayesian nonparametric mixture inconsistency for the number of components: How worried should we be in practice?},
  author={Chaumeny, Yannis and Moris, Johan van der Molen and Davison, Anthony C and Kirk, Paul DW},
  journal={arXiv preprint arXiv:2207.14717},
  year={2022}
}

@article{dowsett2013comparison,
  title={Comparison of PAM50 risk of recurrence score with onco type DX and IHC4 for predicting risk of distant recurrence after endocrine therapy},
  author={Dowsett, Mitch and Sestak, Ivana and Lopez-Knowles, Elena and Sidhu, Kalvinder and Dunbier, Anita K and Cowens, J Wayne and Ferree, Sean and Storhoff, James and Schaper, Carl and Cuzick, Jack},
  journal={Journal of Clinical Oncology},
  volume={31},
  number={22},
  pages={2783--2790},
  year={2013},
  publisher={American Society of Clinical Oncology}
}

@article{gnant2014predicting,
  title={Predicting distant recurrence in receptor-positive breast cancer patients with limited clinicopathological risk: using the PAM50 Risk of Recurrence score in 1478 postmenopausal patients of the ABCSG-8 trial treated with adjuvant endocrine therapy alone},
  author={Gnant, M and Filipits, M and Greil, Richard and Stoeger, H and Rudas, M and Bago-Horvath, Z and Mlineritsch, Brigitte and Kwasny, W and Knauer, M and Singer, C and others},
  journal={Annals of oncology},
  volume={25},
  number={2},
  pages={339--345},
  year={2014},
  publisher={Elsevier}
}

@article{prat2015clinical,
  title={Clinical implications of the intrinsic molecular subtypes of breast cancer},
  author={Prat, Aleix and Pineda, Estela and Adamo, Barbara and Galv{\'a}n, Patricia and Fern{\'a}ndez, Aranzazu and Gaba, Lydia and D{\'\i}ez, Marc and Viladot, Margarita and Arance, Ana and Mu{\~n}oz, Montserrat},
  journal={The Breast},
  volume={24},
  pages={S26--S35},
  year={2015},
  publisher={Elsevier}
}

@article{nielsen2014analytical,
  title={Analytical validation of the PAM50-based Prosigna Breast Cancer Prognostic Gene Signature Assay and nCounter Analysis System using formalin-fixed paraffin-embedded breast tumor specimens},
  author={Nielsen, Torsten and Wallden, Brett and Schaper, Carl and Ferree, Sean and Liu, Shuzhen and Gao, Dongxia and Barry, Garrett and Dowidar, Naeem and Maysuria, Malini and Storhoff, James},
  journal={BMC cancer},
  volume={14},
  number={1},
  pages={177},
  year={2014},
  publisher={Springer}
}

@article{hubert1985comparing,
  title={Comparing partitions},
  author={Hubert, Lawrence and Arabie, Phipps},
  journal={Journal of classification},
  volume={2},
  number={1},
  pages={193--218},
  year={1985},
  publisher={Springer}
}

\newpage

\appendix

\section{Alternative discomfort functions} \label{alternatives}

In this section, we discuss various options for the discomfort functions implemented in the model. As highlighted in the manuscript, these functions need to be continuous and exhibit a decreasing trend: lower probabilities of assignment to a given component correspond to higher discomfort values, and vice versa.

\subsection{Generalized entropy}

\citet{tsallis1988possible} introduced a generalised form of the entropy function, defined as follows:

\begin{equation}
    D_{it}^q = \frac{1-\sum_{k = 1}^K p_{i,k,t}^q}{q-1} \, ,
    \label{eq:genEnt}
\end{equation}

\noindent where $q > 0$ and the probability $p_{i,k,t}$ represents the posterior probability that observation $i$ belongs to component $k$ at iteration $t$. In Equation \eqref{eq:genEnt}, for $q = 0$ we obtain the classical entropy function \citep{tsallis1988possible}. Instead of calculating discomfort based on the sum of probabilities across all component, we can focus solely on the component assigned to the observation at time $t$. In this approach, the discomfort of observation $i$ depends exclusively on the probability of $i$ belonging to the assigned component $z_i$ at time $t$:
\begin{equation}
    D_{it}^q = \frac{1- p_{i,z_i,t}^q}{q-1} \, , 
    \label{eq:parEnt}
\end{equation}
\noindent where $p_{i, z_i, t}$ represents the posterior probability that observation $i$ is allocated to its current component $z_i$ at iteration $t$. In Equation \eqref{eq:parEnt}, for $q = 0$ we obtain the ``partial'' entropy function.

\subsection{Pareto} \label{Pareto}
The Pareto distribution models phenomena where a small proportion of causes lead to a large proportion of effects, such as income distribution or extreme events. Defined for $p_{i,z_i,t} \geq p_m$ with scale $p_m$ and shape $\alpha > 0$, its probability density function is:
\begin{align*}
    D^{(p_m, \alpha)}_{it} =
    \begin{cases} 
    \alpha \frac{p_m^\alpha}{p_{i,z_i,t}^{\alpha + 1}} & \text{for } p_{i,z_i,t} \geq p_m, \\
    0 & \text{for } p_{i,z_i,t} < p_m,
    \end{cases}
\end{align*}
characterizing heavy tails that suit modeling rare but impactful occurrences.

\subsection{Weibull} \label{Weibull}

The Weibull distribution is widely used to model lifetimes, reliability, and failure rates. Defined by a shape parameter $k > 0$ and a scale parameter $\lambda > 0$, its probability density function is:

\begin{equation*}
    D_{it}^{(\lambda, k)} = \frac{k}{\lambda} \left( \frac{p_{i,z_i,t}}{\lambda} \right)^{k-1} e^{-\left( \frac{p_{i,z_i,t}}{\lambda} \right)^k},
\end{equation*}
This distribution allows to model various failure behaviors: $k < 1$ indicates decreasing failure rates, $k = 1$ corresponds to the exponential distribution, and $k > 1$ implies increasing failure rates. In our case, we can use either $k = 1$ or $k < 1$.

\subsection{Hyperbolic scaling}

The function: 

\begin{equation*}
    D_{it}^{\lambda} = \left(\frac{1}{p_{i,z_i,t}}\right)^{\lambda} \,  ,
\end{equation*}

\noindent assigning weights or scaling values based on the inverse of a probability  $p_{i, z_i, t}$---higher $\lambda > 0$ values amplify the effect of small probabilities, while lower values reduce this emphasis.

\section{Practical tuning of \texorpdfstring{$\Lambda$}{Lambda}}
\label{app:tuning}

To clarify how the tuning parameter $\Lambda$ can be selected in practice, we outline the procedure using a simple illustrative example. Specifically, consider a mixture of three clusters with means $\mu_1 = (18.3, \, 20.0)$, $\mu_2 = (16.7, \, 13.3)$, and $\mu_3 = (20.0, \, 13.3)$, and diagonal precision matrices with diagonals $\mbox{diag}(\Sigma_1) = (1/2,\, 1/2)$, $\mbox{diag}(\Sigma_2) = (1, \, 1)$, and $\mbox{diag}(\Sigma_3) = (1/2, \, 1/2)$. 
We generate a sample of $n = 1000$ observations from this mixture and run DIG over 20 independent replicas.

Figure~\ref{fig:lambdaBoundary} displays two plots showing the behaviour of the $\lambda$ parameter under two different upper bounds: $\Lambda = 50$ and $\Lambda = 100$, based on the previously described synthetic dataset with $K = 3$. In the case where $\Lambda = 50$ (Figure~\ref{fig:lambdaBoundary_50}), we observe that before the intersection point $s$, the parameter $\lambda$ consistently reaches the maximum allowed value, remaining fixed at $\lambda = 50$. By design, after the intersection point $s$, $\lambda$ is set to 1. 

On the other hand, in the (RHS) plot corresponding to a higher bound $\Lambda = 100$, the $\lambda$ parameter is able to take on different values from the upper bound. 
Note that the algorithm gives a warning if $\lambda$ touches the upper bound $\Lambda$ for more than 25\% of the iterations of the adaptation phase.

\begin{figure}[htbp]
    \centering
    \begin{subfigure}[b]{0.48\textwidth}
        \includegraphics[width=\textwidth]{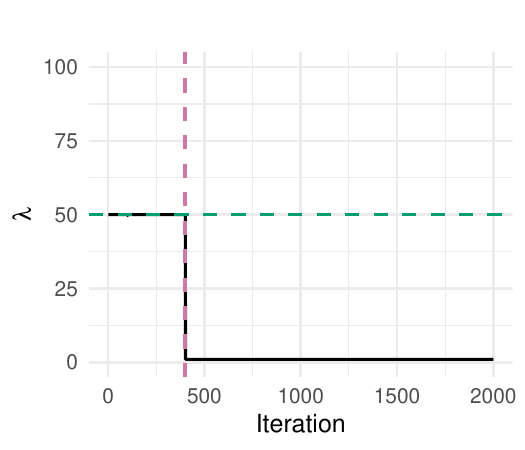}
        \caption{Upper bound $\Lambda = 50$.}
        \label{fig:lambdaBoundary_50}
    \end{subfigure}
    \hfill
    \begin{subfigure}[b]{0.48\textwidth}
        \includegraphics[width=\textwidth]{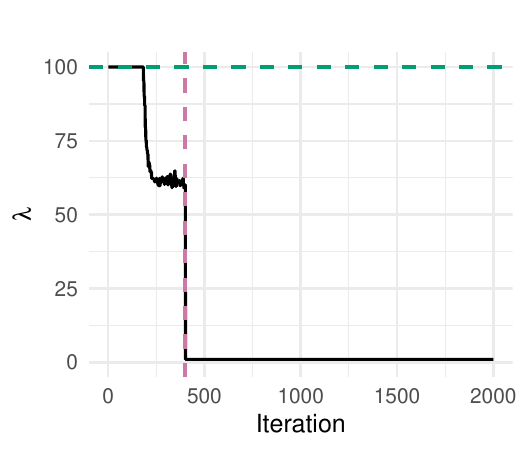}
        \caption{Upper bound $\Lambda = 100$.}
        \label{fig:lambdaBoundary_100}
    \end{subfigure}
    \caption{Trace plots of the $\lambda$ parameter (black line) over the MCMC iterations, based on data simulated from a well-specified 3-component Gaussian mixture with $n = 1000$, $d = 2$, and $K = 3$ (Appendix \ref{app:tuning}). The green dashed horizontal line marks the fixed upper bound $\Lambda = 50, 100$ (LHS, RHS). The magenta dashed vertical line indicates the iteration $s$ at which $\lambda$ is set to 1.}
    \label{fig:lambdaBoundary}
\end{figure}

\section{Free parameters and sensitivity analysis} \label{app:free}

Our algorithm contains several free parameters that need to be specified. These parameters include: $m$, which determines the number of observations to be updated at each MCMC iteration; $\xi$, which specifies how often the allocation probability matrix is updated; the decay parameter $\lambda$; $s$, representing the intersection point of the weight functions; and $a$, which controls how “close” the hyperbolic weight functions are.

The parameter $m \in \{1,2,\ldots,n \}$, can be adapted to the problem. It could simply be taken as a multiple of the number of cores and/or GPUs available. Or else it could be chosen by counting the number of individuals whose discomfort score is above a user-defined threshold, such that on average we update exactly the number of individuals that are likely to switch to a different cluster. In this work, we choose to set $m$ as a proportion of the total number of observations $n$. As discussed in Section \ref{Hyper}, selecting $m$ too small relative to $n$ can lead to slower convergence compared to the SSG method. On the other hand, when dealing with large-scale problems, comparing with SSG becomes more meaningful, since SSG updates all allocations at each iteration. Conversely, choosing $m$ too large with respect to $n$ would eliminate the computational advantages of our method, bringing its cost close to that of SSG. 

We conducted a sensitivity analysis on the schedule used to update both the allocation probability matrix ($\xi$) and the frequency at which the parameter $\lambda$ is updated, since the latter is adjusted every time the former is updated. Specifically, we explored five different schedules: $1-5-10$, $3-6-10$, $5-8-10$, $7-10-15$, and $10-15-20$. The results for each of these configurations are reported in Table \ref{sensXi}. For all simulations, we used the dataset introduced by \citet{miller2018}, varying the sample size n and the dimension d, while keeping the remaining parameters consistent with those presented in Section \ref{simStud}.

\begin{table}[htbp]
\centering
\renewcommand{\arraystretch}{1.4}
\begin{tabular}{ccccccc}
\hline
         & \multicolumn{2}{c}{$n = 1000, \, d = 2$} & \multicolumn{2}{c}{$n = 3000,\,  d = 5$} & \multicolumn{2}{c}{$n = 5000, \, d = 10$} \\ \hline
            & T2C            & CLL             & T2C               & CLL            & T2C                & CLL             \\
$1$-$5$-$10$    & 0.03 (0.04)   & -2587 (14)     &  0.06 (0.03)     & -18653 (13)    &  0.26 (0.16)      &  -64671 (16)      \\
$3$-$6$-$10$    & \textbf{0.01} (0.01)   & -2597 (2)     &  \textbf{0.05} (0.01)     & -18652 (14)    &  \textbf{0.12} (0.01)      &  -64672 (12)       \\
$5$-$8$-$10$    & 0.02 (0.03)   & -2590 (11)     &  \textbf{0.05} (0.01)     & -18652 (10)    &  0.13 (0.03)      &  -64668 (20)                 \\
$7$-$10$-$15$   & \textbf{0.01} (0.01)   & -2590 (14)     &  0.85 (2.52)     & -18673 (45)    &  1.32 (3.78)      & -64679 (20)                   \\
$10$-$15$-$20$  &  0.02 (0.02)  & -2587 (13)     &  0.84 (2.39)     & -18685 (90)    &  0.15 (0.03)      &  -64663 (14)      \\ \hline
\end{tabular}
\caption{Miller-Harrison dataset sensitivity analysis under different values of $n$ and $d$: Mean Complete Log-Likelihood (CLL) and Mean Time to Convergence (T2C) over 20 replicas (with standard errors reported in parentheses) across different schedules. T2C values have been approximated to two decimal places, and CLL values have been rounded to the nearest integer to improve readability. The best performance in terms of T2C is highlighted in bold.}
\label{sensXi}
\end{table}

In Table \ref{sensXi} we observe that the impact on the convergence speed is primarily determined by the first value in the schedule, as it is assumed that after the initial iterations, the observations are assigned to the correct components. Table \ref{sensXi} shows that the algorithm remains fairly robust to changes in the schedule.

The inverse temperature $\lambda$ can instead be updated according to the equation below:

\begin{equation}
\label{eq:lambda_schedule2}
\lambda_t = 
\begin{cases} 
\left(\frac{1}{L + 1}\sum_{l=1}^L \lambda_{t-l} + \hat{\lambda}_t\right) \vee 1, & t \leq s \\
1, & t > s
\end{cases}
\end{equation}
where $\vee$ denotes the binary maximum operation, and $\hat{\lambda}_t$ solves $\text{ESS}(\lambda_t) = c \cdot m$. Here $c$ is another potential regularization parameter. The higher we set $c$, the higher the effective number of particles / components of $\bm \alpha$ will be. In the equation above, we update $\lambda_t$ for $t<s$ according to a moving average that includes components up to lag $L$ in the past. With $L>0$, Equation \eqref{eq:lambda_schedule2} yields a $\lambda_t$ that is smoother than it would normally be. We verified empirically that the algorithm does not seem sensitive to the value of $L$ and $c$, providing that they are selected to be small. However, by Occam's Razor, we do not use Equation \eqref{eq:lambda_schedule2}. In our work we use the much simpler Equation \eqref{eq:lambda_schedule}.

The parameters $s$ is selected as the $99 \%$ quantile of the Normal approximation to a Negative Binomial distribution, as described in Section \ref{sec:s}. Other choices include values from the symmetrical confidence region
\begin{equation} \label{interval}
    \left[\mathbf{1}\left(\mu - \sigma Z_{1-\alpha/2} \geq 0\right) \cdot \left(\mu - \sigma Z_{1-\alpha/2}\right) \hspace{0.1cm} ,  \hspace{0.2cm} \mu + \sigma Z_{1-\alpha/2} \right]
\end{equation}
where $\mathbf{1}$ represents the indicator function, with an arbitrary level of confidence, not necessarily $.01$.

Indeed, the user can add---or remove---an arbitrary number of standard deviations from the mean, in order to find a value that works best for their dataset. However, we do recommend to not start collapsing the weights until the algorithm has reached equilibrium, otherwise it may take a long time for the algorithm to converge to the right clustering stricture. Therefore we urge caution especially when decreasing the value of $s$.

\section{Weight functions} \label{sec:weFunction}

This section analyzes the choice of mixed weights versus using a single functional form for all iterations. An initial option involves adopting the following weight functions:
\begin{equation} \label{polfun1}
f(t) = \frac{t}{t+1} \quad , \quad g(t) = \frac{1}{t+1} \, .
\end{equation}
However, this formulation allows the function $g(t)$ to influence the Gibbs sampler for only one iteration, as shown in the left panel of Figure \ref{fig:hyperpoly}.\\
To mitigate this issue, a modification of the intersection point between the two curves is considered, introducing a parameter $s$ that regulates its position. The weights are then redefined as:
\begin{equation} \label{polfuns}
f(t) = \frac{t}{t+s} \quad , \quad g(t) = \frac{s}{t+s} \, .
\end{equation}
Adopting a sufficiently large value for $s$ allows the function $g(t)$ to guide the MCMC for a greater number of iterations, as shown in the left panel of Figure \ref{fig:hyperpoly}. However, this choice entails a potential delay in convergence, as the functions $f(t)$ and $g(t)$ tend to approach the asymptotes slowly, risking an excessive prolongation of the process.

Another alternative is the use of weight functions based on the hyperbolic tangent, defined as
\begin{equation*}
    f(t) = \frac{1}{2} \left(1+\tanh{\left(\frac{t-s}{a} \right)}\right) \hspace{0.1cm}, \hspace{0.3cm} g(t) = \frac{1}{2} \left(1-\tanh{\left(\frac{t-s}{a} \right)}\right) \, ,
\end{equation*}
and shown in the right panel of Figure \ref{fig:hyperpoly}.

\begin{figure}[htbp]
    \centering
    \begin{subfigure}[b]{0.48\textwidth}
        \includegraphics[width=\textwidth]{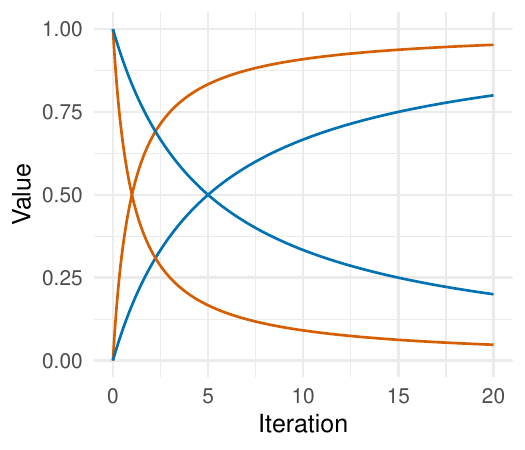}
        \caption{Visualization of the polynomial weight functions from Equations \eqref{polfun1} and \eqref{polfuns} for $s = 1$ (blue curves) and $s = 5$ (orange curves), showing how the parameter $s$ affects their dynamics across iterations.}
    \end{subfigure}
    \hfill
    \begin{subfigure}[b]{0.48\textwidth}
        \includegraphics[width=\textwidth]{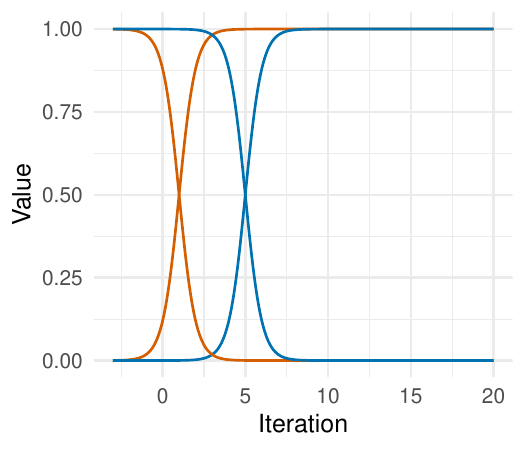}
        \caption{Visualization of the functions from Equation \eqref{eq:wFunHyper} for $s = 1$ (blue curves) and $s = 5$ (orange curves), with $a = 1$, showing how the parameter $s$ influences their behavior over iterations.}
    \end{subfigure}
    \caption{Behavior of polynomial (left) and hyperbolic (right) weight functions from Equations \eqref{polfun1} and \eqref{polfuns}, plotted for $s \in {1, 5}$ with $a = 1$, showing the role of $s$ in the transition dynamics.}
    \label{fig:hyperpoly}
\end{figure}

The issue with the hyperbolic weights defined in Equation \eqref{eq:wFunHyper} is twofold: on one hand, they are incompatible with the stochastic approximation framework; on the other, their rapid saturation towards asymptotes can cause the algorithm to become prematurely stuck in suboptimal configurations.

At the same time, relying solely on polynomial weights would result in an initially excessively slow dynamic, reducing the efficiency of exploration and delaying convergence.

To overcome these limitations, common to both weighting schemes, we adopted a hybrid strategy: for $t \leq s$, the algorithm is guided by hyperbolic weights, while for $t > s$, polynomial weights are employed, as detailed in the manuscript.

In defining the polynomial component, we deliberately avoided using the functions in Equation \eqref{polfuns}, since an intersection point $s$ set too far ahead would cause a similar issue as with the hyperbolic weights---a rapid stiffening of the dynamic that hinders effective exploration of the configuration space.

Therefore, we calibrated the initial derivative of the polynomial functions at the intersection point $s$ to match that of the functions in Equation \eqref{polfun1}. This choice ensures a more gradual and controlled decay, allowing the algorithm sufficient time to modify assignments and progress toward convergence, preventing premature stabilization of the dynamics.

\newpage
\section{Complete DIG algorithm}\label{completeAlgorithm}

\begin{algorithm}[ht!]
\SetKwInOut{Input}{input}\SetKwInOut{Output}{output}
\SetAlgoLined
\Input{Initial sampling weights $ \boldsymbol{\alpha}_0= (\alpha_{0,1}, \ldots, \alpha_{0,n}) $; total iterations $T$; subset size $m$; upper bound $\Lambda$; transition parameter $a$.}

\BlankLine

Compute transition point $s = \mu + \sigma Z_{0.99}$ according to Equation \eqref{eq:s}\;
Initialise allocation probability matrix\;

\For{$t = 1, \ldots, T$}{
\If{allocation matrix update is due at iteration $t$}{
Update allocation probability matrix\;
}

\eIf {$t \leq s$} {
Set $\lambda_t = \hat \lambda_t$\;
Compute hyperbolic tangent weights $f(t), g(t)$ using Equation \eqref{eq:wFunHyper}\;
}{
Set $\lambda_t = 1$\;
Compute polynomial weights $f(t), g(t)$ using Equation \eqref{eq:wFunPol}\;
}

Update $\boldsymbol{\alpha}_t = f(t) \cdot \boldsymbol{\alpha}_{t-1} + g(t) \cdot  \boldsymbol{D}_t^{\lambda_t}$\;
Normalise: $\alpha_{i,t}' = \alpha_{i,t} / \sum_{j=1}^n \alpha_{j,t}$ for all $i$\;

Sample $m$ distinct indices $\{i_1, \ldots, i_m\}$ without replacement using probabilities $\boldsymbol{\alpha}_t'$\;

\For{$j = 1, \ldots, m$}{
$z_{i_j,t} \sim p(z_{i_j}\mid\boldsymbol{\Theta}_{t-1} , \boldsymbol{\pi}_{t-1}, \boldsymbol{X})$\;
}
$\boldsymbol{\pi}_{t} \sim p(\boldsymbol{\pi}\mid\boldsymbol{\Theta}_{t-1} ,\boldsymbol{z}_{t} , \boldsymbol{X})$\;
$\boldsymbol{\Theta}_{t} \sim p(\boldsymbol{\Theta}\mid\boldsymbol{\pi}_{t} ,\boldsymbol{z}_{t} , \boldsymbol{X})$\;
} 

\caption{DIG: Discomfort-informed Gibbs sampler for finite mixture models}
\label{alg:ARSG4}
\end{algorithm}

Note that the allocation matrix update schedule $\xi$ follows the adaptive schedule described in Section \ref{allUp} (every 3 iterations for the first 25\% of the run, every 6 iterations for 25-50\%, and every 10 iterations for the final 50\%).

\section{Complete log-likelihood from Section \ref{simStud} }

\begin{table}[htbp]
    \centering
\renewcommand{\arraystretch}{1.2}
\begin{tabular}{@{}clcccc@{}}
\toprule
& & \multicolumn{4}{c}{\textbf{Mean Complete Log-Likelihood}} \\
\cmidrule(lr){3-6}
\textbf{Sample Size} & \textbf{Method} & $d = 2$ & $d = 5$ & $d = 10$ & $d = 20$ \\
\midrule
\multirow{3}{*}{$n = 1000$} 
& SSG & -2596 (4.915) & -6213 (4.015) & -12873 (3.900) & -26844 (4.109) \\
& RSG & \textbf{-2591} (3.781) & \textbf{-6212} (3.825) & \textbf{-12871} (3.185) & -26844 (3.123) \\
& DIG & -2597 (2.254) & -6213 (3.791) & -12874 (3.311) & \textbf{-26842} (3.523) \\
\midrule
\multirow{3}{*}{$n = 5000$}
& SSG & -12755 (9.748) & -31034 (8.814) & -64815 (8.679) & -134371 (8.448) \\
& RSG & -12754 (9.293) & \textbf{-31029} (7.010) & -645817 (7.402) & -134371 (9.439) \\
& DIG & \textbf{-12751} (11.368) & -31032 (8.493) & \textbf{-64814} (7.537) & \textbf{-134365} (7.880) \\
\midrule
\multirow{3}{*}{$n = 10000$}
& SSG & -25343 (13.977) & -62037 (12.616) & -129190 (11.902) & -268514 (11.893) \\
& RSG & \textbf{-25334} (13.689) & \textbf{-62036} (11.198) & \textbf{-129186} (9.154) & -268512 (9.323) \\
& DIG & -25336 (13.562) & -62038 (13.592) & -129192 (9.115) & \textbf{-268510} (11.848) \\
\bottomrule
\end{tabular}
    \caption{Miller-Harrison synthetic datasets across different scenarios: Mean Complete Log-Likelihood (CLL) computed over 20 replications, with standard errors reported in parentheses, for different scenarios and algorithms. Mean CLL values have been rounded to the nearest thousandth to improve readability. The best performance in terms of CLL is highlighted in bold.}
    \label{resultsARI}
\end{table}

The CLL for the experiments on Gaussian mixtures from Section \ref{simStud} is shown below. The CLL values are averaged over 20 replicas for the SSG, RSG, and DIG algorithms, across different values of sample size $n$ and dimensionality $d$. All the values are very similar, suggesting that all the algorithms converge to the same configuration.

\newpage

\section{Heat maps from Section \ref{sec:real}} \label{app:heatmaps}

\subsection{Section \ref{sec:real-soma}}

Figure \ref{fig:matrixGENE} below shows the PSM and heat map from the analysis of gene expression data from single cell RNA sequencing of somatosensory cells based on \citet{zeisel2015cell}, from Section \ref{sec:real-soma}. 
This dataset has $d=19972$ genes in $n=3005$ cells, but following \citet{chaumeny2022bayesian}, we limit the analysis to the $d = 10$ genes with the highest standard deviations. 
We model the data assuming $K^{\ast} = K=7$, and a spherical covariance structure.

\begin{figure}[!htb]
    \centering
    \begin{subfigure}[t]{\textwidth}
        \includegraphics[scale = 0.42]{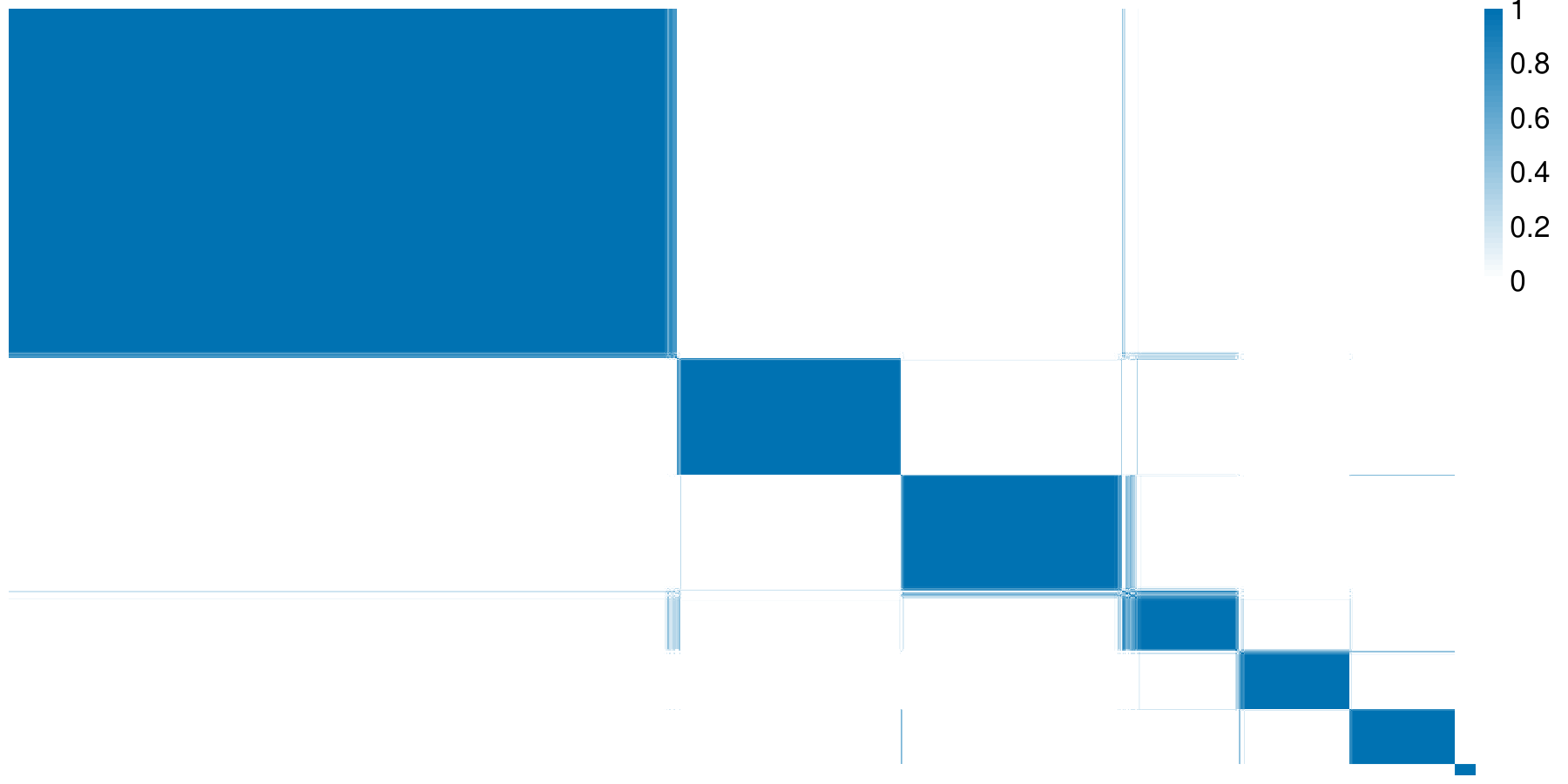} 
        \caption{Posterior similarity matrix.}
        \label{fig:psm_gene}
    \end{subfigure}
    
    \vspace{5em}  
    
    \begin{subfigure}[t]{\textwidth}
        \includegraphics[scale = 0.48]{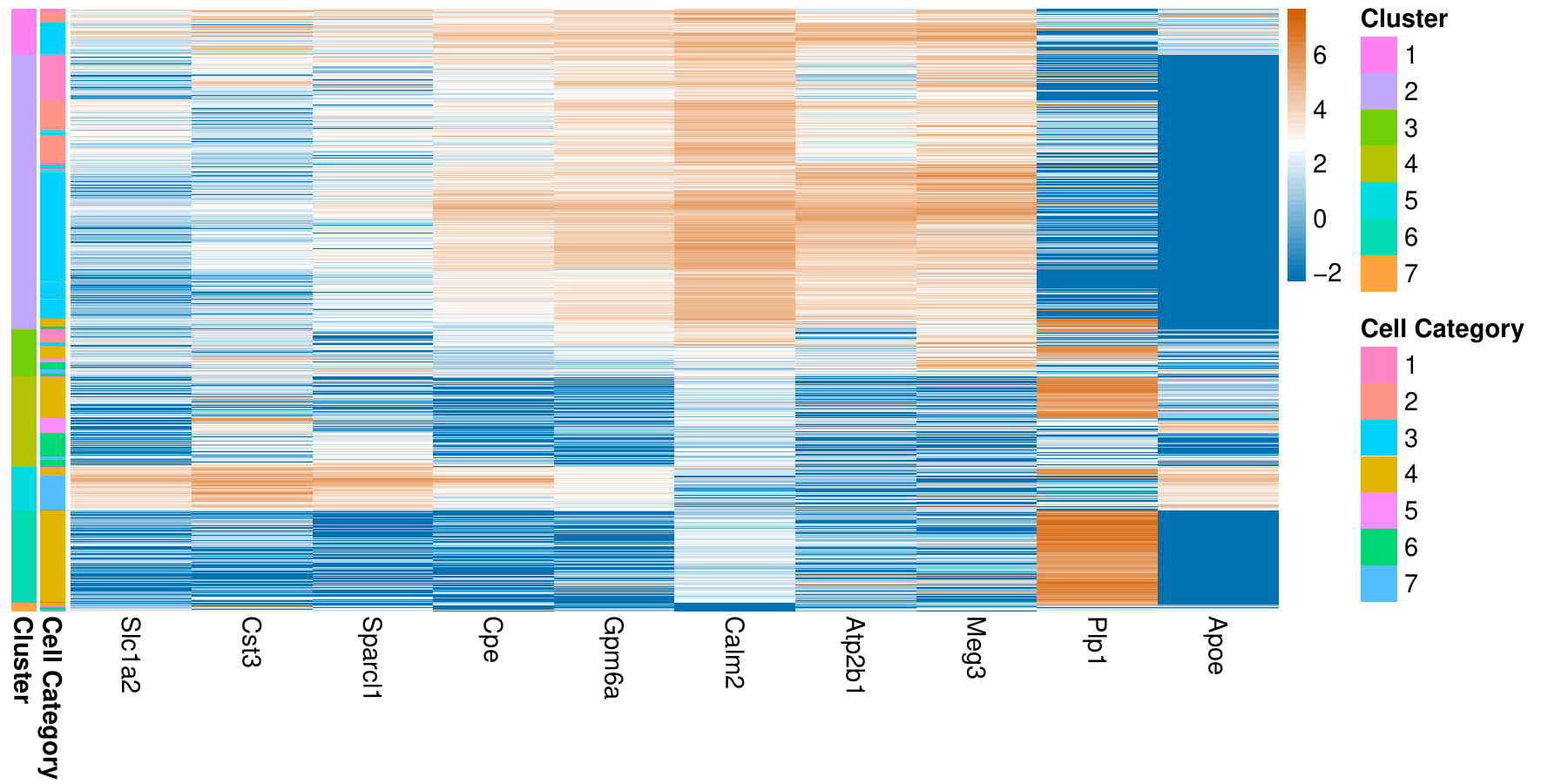}
        \caption{Gene expression heatmap across seven major brain cell categories: somatosensory pyramidal neurons (1), CA1~pyramidal~neurons (2), interneurons (3), oligodendrocytes (4), astrocytes (5), microglia (6), and endothelial cells (7).}
        \label{fig:pheat_gene}
    \end{subfigure}

    \vspace{4em}
    
    \caption{Mouse Cortex gene expression clustering results (n = 3005 cells, d = 10 genes). (a) Posterior Similarity Matrix (PSM), showing the estimated probability that pairs of cells belong to the same cluster. The block-diagonal structure indicates the presence of seven well-separated clusters (two large and five small equal clusters). (b) Heatmap comparing the inferred clustering from the same MCMC chain to the known cell type annotations from the scRNA-seq dataset.}
    \label{fig:matrixGENE}
\end{figure}

\subsection{Section \ref{sec:real-pam50}}

In Figure \ref{fig:PAMData} below we show the PSM and heat map obtained from our analysis of the PAM50 data. PAM50 is a 50-gene signature used to classify breast cancer into five intrinsic molecular subtypes: \emph{Luminal A}, \emph{Luminal B}, \emph{HER2-enriched}, \emph{Basal-like}, and \emph{Normal-like} \citep{perou2000molecular, sorlie2001gene, parker2009supervised}. The dataset has $n = 348$ samples and $d = 50$ genes. We model it with a mixture with $K^{\ast} = K=5$ and spherical covariance matrix.

\begin{figure}[!htb]
    \centering
    \begin{subfigure}[t]{\textwidth}
        \includegraphics[scale = 0.42]{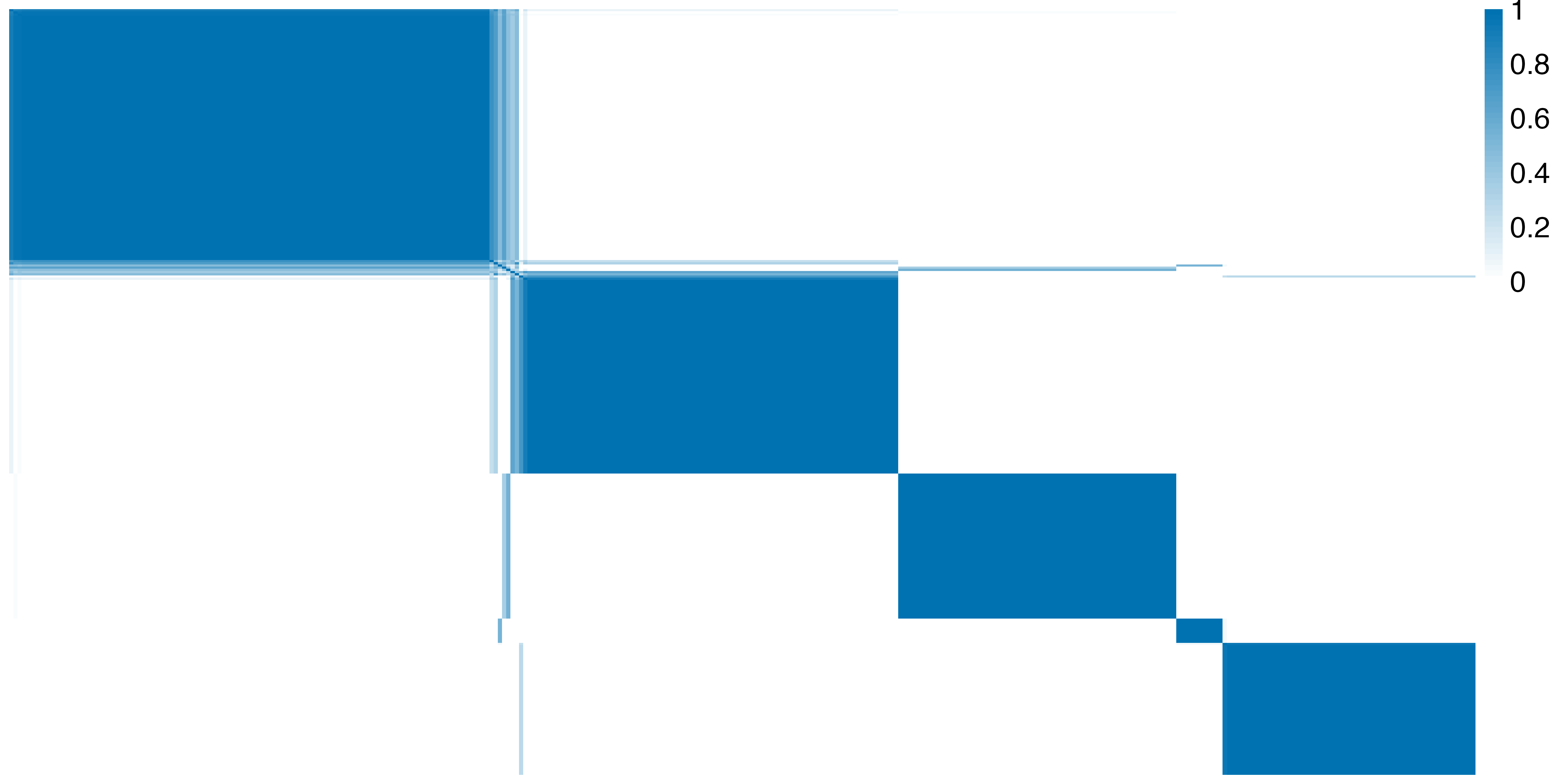} 
        \caption{Posterior similarity matrix.}
        \label{fig:psm_pam}
    \end{subfigure}
    
    \vspace{5em}
    
    \begin{subfigure}[t]{\textwidth}
        \includegraphics[scale = 0.48]{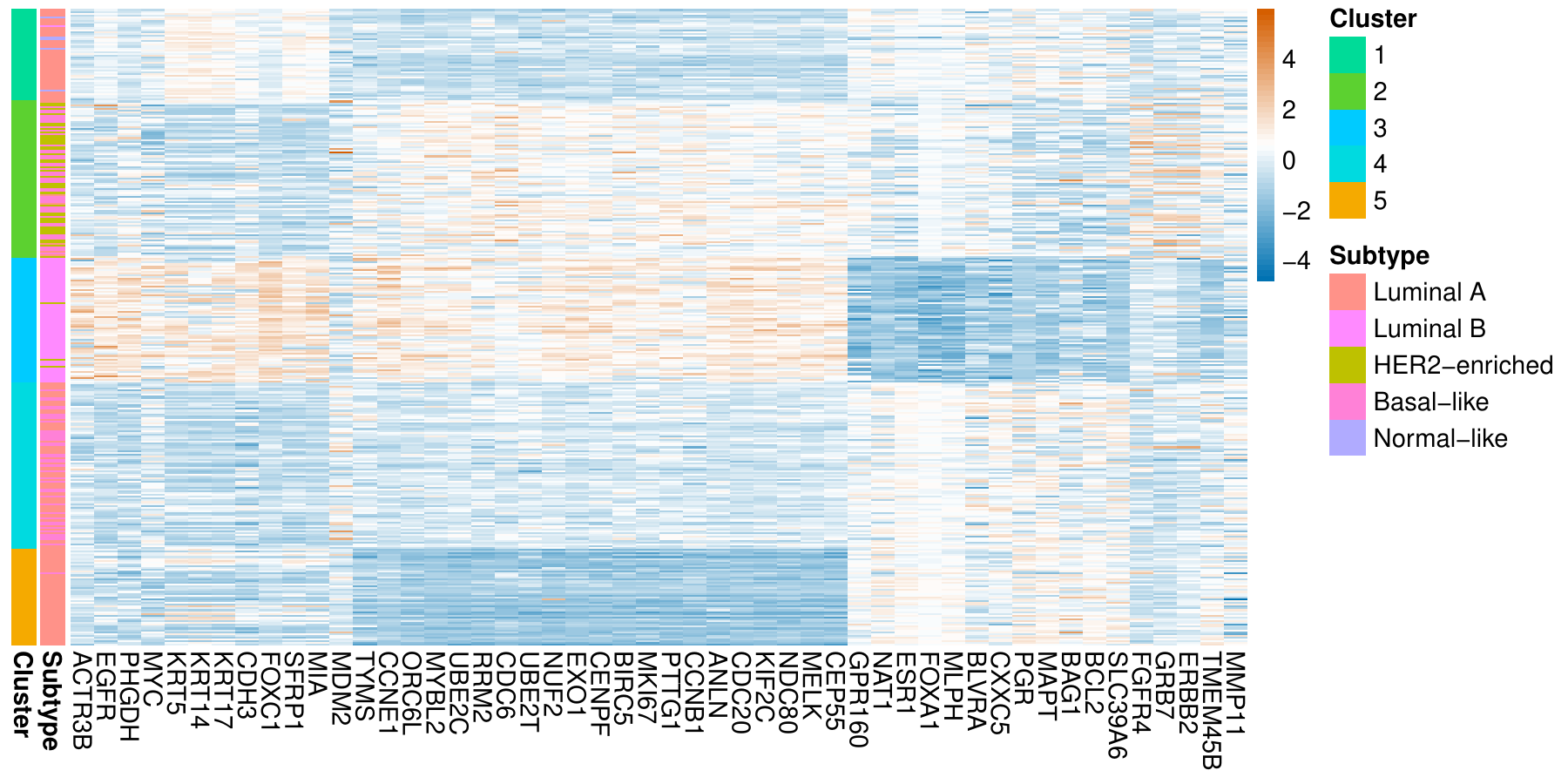} 
        \caption{Gene expression heatmap.}
        \label{fig:pheat_pam}
    \end{subfigure}

    \vspace{4em}
    
    \caption{PAM50 breast cancer gene expression clustering results (n = 348 tumor samples, d = 50 genes). (a) Posterior Similarity Matrix (PSM), representing the estimated probability that pairs of samples are assigned to the same cluster (four large and one small). The clear block-diagonal pattern suggests five well-separated clusters, corresponding to the intrinsic subtype structure. (b) Heatmap comparing the inferred clustering from the same MCMC chain to the known PAM50 molecular subtypes.}
    \label{fig:PAMData}
\end{figure}

\newpage

\section{Stochastic approximation proof}\label{app:SA}

We recast our adaptive scheme into the stochastic approximation framework of \citet{Andrieu2005}. In particular, we analyse the algorithm from iteration $s$ onwards, where $\lambda_t$ is no longer being adapted and has fixed value $\lambda_t \equiv 1$. Then our algorithm can be written as:
\begin{equation*}
        \boldsymbol{\alpha}_t = \boldsymbol{\alpha}_{t-1} + \rho_{t} H(\boldsymbol{\alpha}_{t-1}, (\boldsymbol{\Theta}_{t-1}, \boldsymbol{\pi}_{t-1}, \boldsymbol{Z}_{t-1})),
\end{equation*}
where we make the following definitions:
\begin{equation*}
    \begin{split}
        H(\boldsymbol{\alpha}, (\boldsymbol{\Theta, \pi, Z}))&:= \boldsymbol{D}(\mathbf{\Theta, \pi, Z}) - \boldsymbol{\alpha},\\
        h(\boldsymbol{\alpha})&:= \mathbb E_{\nu^{\ast}}[\boldsymbol{D}]-\boldsymbol{\alpha},\\
        \xi_t &:= \boldsymbol{D}(\boldsymbol{\Theta}_{t-1}, \boldsymbol{\pi}_{t-1}, \boldsymbol{Z}_{t-1}) -\mathbb E_{\nu^{\ast}}[\boldsymbol{D}],\quad \text{for all } t \in \mathbb N.
    \end{split}
\end{equation*}
In this setting, we observe that the function $h$ is in fact affine, and hence can be expressed as
\begin{equation}
        h(\balpha)=-\nabla J(\balpha),\quad J(\balpha)=-\frac{1}{2}\|\mathbb E_{\nu^{\ast}}[\boldsymbol D] - \boldsymbol\alpha\|^2.
        \label{eq:h_repn}
\end{equation}
Then looking at the assumption \cite[(A1)]{Andrieu2005}, we clearly see that the set of stationary points for $\balpha$ is the singleton
\begin{equation*}
    \mathcal L = \mathbb E_{\nu^{\ast}}[\boldsymbol{D}] \subset \mathbb R^n  .
\end{equation*}
And thus we expect this to be the almost-sure limit of $\balpha_t$.\\
To prove Theorem~\ref{thm:SA_conv} we will make use of \cite[Theorem~5.5]{Andrieu2005}. In our situation, it is considerably simpler than the full generality of \cite[Theorem~5.5]{Andrieu2005}, since $\balpha$ is constrained to a compact set; see Lemma~\ref{lemma:alpha_compact} below.
We need to check that the conditions (A1)-(A4) of \cite[Theorem~5.5]{Andrieu2005} are satisfied.

\begin{lemma}
    For our adaptive Gibbs chain, conditions (A1), (A2) and (A4) of \citet{Andrieu2005} hold.
\end{lemma}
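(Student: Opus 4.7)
My plan is to verify the three conditions separately, exploiting the affine structure of $h$ for (A1), a routine series argument for (A2), and uniform ergodicity combined with a Poisson-equation construction for (A4). The iterates $\boldsymbol{\alpha}_t$ lie in a compact set (Lemma~\ref{lemma:alpha_compact}), so throughout the argument I can work with uniform bounds in $\boldsymbol{\alpha}$.

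For (A2), the step sizes $\rho_t = 1/(t-s+2)$ satisfy $\sum_t \rho_t = \infty$ and $\sum_t \rho_t^2 < \infty$ (and higher-power summability as needed) by direct comparison with the harmonic and generalised harmonic series, so (A2) is essentially free.

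For (A1), I would use the representation \eqref{eq:h_repn} and take the Lyapunov candidate $w(\boldsymbol{\alpha}) = \tfrac{1}{2}\|\boldsymbol{\alpha} - \mathbb{E}_{\nu^{\ast}}[\boldsymbol{D}]\|^2$. A one-line computation gives $\nabla w(\boldsymbol{\alpha}) = -h(\boldsymbol{\alpha})$, hence $\langle \nabla w(\boldsymbol{\alpha}), h(\boldsymbol{\alpha}) \rangle = -\|h(\boldsymbol{\alpha})\|^2 \le 0$ with equality only on $\mathcal{L} = \{\mathbb{E}_{\nu^{\ast}}[\boldsymbol{D}]\}$. Since $w$ is a coercive quadratic with Lipschitz gradient and $w(\mathcal{L})$ is a singleton (hence an empty-interior subset of $\mathbb{R}$), the remaining regularity requirements in (A1) follow immediately.

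For (A4), the standard Poisson-equation route applies. Because $\lambda_t \equiv 1$ for $t > s$, the discomfort vector $\boldsymbol{D}$ takes values in the compact set $[e^{-1},1]^n$, so both $H(\boldsymbol{\alpha},\cdot)$ and $h(\boldsymbol{\alpha})$ are uniformly bounded in $\boldsymbol{\alpha}$. For each $\boldsymbol{\alpha}$ the frozen Markov kernel $P_{\boldsymbol{\alpha}}$ on $(\boldsymbol{\Theta},\boldsymbol{\pi},\boldsymbol{z})$ is uniformly ergodic, by the argument in Section~\ref{sec:theory}: the allocation chain lives on the finite space $\{1,\ldots,K\}^n$ and the selection probabilities are bounded below away from zero uniformly on the admissible $\boldsymbol{\alpha}$-set. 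This yields a bounded Poisson solution $\hat{H}(\boldsymbol{\alpha},\cdot)$ solving $\hat{H} - P_{\boldsymbol{\alpha}}\hat{H} = H - h$, and the martingale-plus-remainder decomposition of $\sum_t \rho_t \xi_t$ required by (A4) follows in the usual way.

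The main obstacle will be in (A4), specifically establishing the $\boldsymbol{\alpha}$-regularity of $\hat{H}$ uniformly on the compact admissible set. The $\boldsymbol{\alpha}$-dependence of $P_{\boldsymbol{\alpha}}$ enters only through the $m$-fold sampling-without-replacement step, while the conditional updates of $\boldsymbol{\Theta}$, $\boldsymbol{\pi}$, and $z_{i_j}$ are $\boldsymbol{\alpha}$-free. I would exploit this decomposition to show $\|P_{\boldsymbol{\alpha}} - P_{\boldsymbol{\alpha}'}\|_{\mathrm{TV}} \le C\,\|\boldsymbol{\alpha} - \boldsymbol{\alpha}'\|$ for some constant $C$, and then combine this Lipschitz kernel bound with the uniform geometric contraction rate (via a resolvent identity) to obtain the required Lipschitz-in-$\boldsymbol{\alpha}$ bound on $\hat{H}$. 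Modulo this regularity step, (A4) follows directly from the uniform boundedness of $H$ and $\hat{H}$.
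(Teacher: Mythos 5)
Your verification of (A1) via the quadratic Lyapunov function $w(\boldsymbol{\alpha})=\tfrac12\|\boldsymbol{\alpha}-\mathbb E_{\nu^{\ast}}[\boldsymbol D]\|^2$ is exactly the paper's argument, and your step-size computation is also needed. However, you have misassigned the conditions of \citet{Andrieu2005}, and this creates a genuine gap. In that reference, (A2) is the requirement that each frozen kernel $P_{\boldsymbol{\alpha}}$ admits the correct invariant distribution (so that $h(\boldsymbol{\alpha})=\int H(\boldsymbol{\alpha},\cdot)\,d\nu^{\ast}$ is the right mean field), while (A4) is the condition on the deterministic step-size sequence $\{\rho_t\}$; the noise-control condition involving uniform ergodicity, the Poisson equation, and Lipschitz regularity of $\hat H$ in $\boldsymbol{\alpha}$ is (A3), which is \emph{not} part of this lemma --- the paper establishes it in a separate lemma by verifying (DRI), using the two-component/duality structure of the Gibbs sampler and compactness of the $\boldsymbol{\alpha}$-set. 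As written, your proposal never verifies the actual (A2): you need to observe that for every admissible $\boldsymbol{\alpha}$ (which by construction has all components bounded away from zero), the kernel $P_{\boldsymbol{\alpha}}$ is a valid Gibbs sweep whose invariant distribution is always the target posterior $\nu^{\ast}$. That one-line observation is the entire content of (A2) in the paper's proof, and it is missing from yours.

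The material you develop under the heading of (A4) --- uniform ergodicity of the frozen chains, boundedness of the Poisson solution, and the proposed Lipschitz bound $\|P_{\boldsymbol{\alpha}}-P_{\boldsymbol{\alpha}'}\|_{\mathrm{TV}}\le C\|\boldsymbol{\alpha}-\boldsymbol{\alpha}'\|$ via the sampling-without-replacement step --- is not wrong, but it belongs to the verification of (A3)/(DRI) and so falls outside the scope of the stated lemma. The condition (A4) itself reduces to checking that $\rho_t=1/(t-s+2)$ is positive, non-increasing, with $\sum_t\rho_t=\infty$ and $\sum_t\rho_t^2<\infty$, which you do correctly but label as (A2). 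If you relabel your step-size argument as (A4), add the invariance observation for (A2), and move the Poisson-equation discussion to the separate (A3)/(DRI) lemma, your proof aligns with the paper's.
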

\begin{proof}
    By the above discussion, \eqref{eq:h_repn}, and the comments of \citet{Andrieu2005} immediately following (A1), the condition (A1) holds. (A2) holds since for each $\balpha$ (which by construction cannot be 0 in any component), the invariant distribution of a given Gibbs sampler $P_{\balpha}$ for $(\boldsymbol{\Theta}, \boldsymbol\pi, \boldsymbol{Z})$ is always $\nu^{\ast}$. 
    Finally (A4) is automatically satisfied by our choice of $\rho_t$; see the discussion in \citet{Andrieu2005} immediately following (A4).
\end{proof}

\noindent In order to establish (A3), we first state a standard fact of finite Gaussian mixture models.
\begin{lemma}\label{lemma:cond_indep}
    For our finite mixture model, it holds that $\bm{\Theta}\mid\bm{Z},\bm{X}$ and $\bm{\pi}\mid\bm{Z},\bm{X}$ are conditionally independent.
\end{lemma}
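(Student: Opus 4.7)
The plan is to derive the conditional independence directly from the hierarchical structure of the finite mixture model stated in Section~\ref{sec:fmm}, by factorising the joint density and then invoking the standard criterion that a product factorisation of a conditional density in disjoint variables implies conditional independence.

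First I would write down the full joint density. From the directed structure specified in the model, namely $\bm{\pi}\to\bm{Z}$, $\bm{\Theta}\to\bm{X}$, $\bm{Z}\to\bm{X}$, together with the independent priors $\bm{\pi}\sim\mathrm{Dirichlet}(a/K,\ldots,a/K)$ and $\bm{\Theta}\sim G^{(0)}$, the joint density factorises as
\begin{equation*}
p(\bm{X},\bm{Z},\bm{\Theta},\bm{\pi}) \;=\; p(\bm{X}\mid\bm{Z},\bm{\Theta})\, p(\bm{Z}\mid\bm{\pi})\, p(\bm{\Theta})\, p(\bm{\pi}),
\end{equation*}
since $\bm{X}$ is conditionally independent of $\bm{\pi}$ given $(\bm{Z},\bm{\Theta})$, and $\bm{Z}$ is conditionally independent of $\bm{\Theta}$ given $\bm{\pi}$.

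Next I would condition on $(\bm{Z},\bm{X})$. Dividing by the marginal $p(\bm{Z},\bm{X})$ and grouping terms according to whether they depend on $\bm{\Theta}$ or on $\bm{\pi}$, we obtain
\begin{equation*}
p(\bm{\Theta},\bm{\pi}\mid\bm{Z},\bm{X}) \;\propto\; \bigl[p(\bm{X}\mid\bm{Z},\bm{\Theta})\, p(\bm{\Theta})\bigr]\cdot\bigl[p(\bm{Z}\mid\bm{\pi})\, p(\bm{\pi})\bigr],
\end{equation*}
where the proportionality constant depends only on $(\bm{Z},\bm{X})$. The right-hand side is a product of a function of $(\bm{\Theta},\bm{Z},\bm{X})$ and a function of $(\bm{\pi},\bm{Z})$, with no term coupling $\bm{\Theta}$ and $\bm{\pi}$.

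Finally, I would conclude by the standard factorisation criterion for conditional independence: since $p(\bm{\Theta},\bm{\pi}\mid\bm{Z},\bm{X})$ splits as a product in $\bm{\Theta}$ and $\bm{\pi}$, the two marginals must be the respective conditionals $p(\bm{\Theta}\mid\bm{Z},\bm{X})$ and $p(\bm{\pi}\mid\bm{Z},\bm{X})$ (up to normalisation), and therefore $\bm{\Theta}\perp\!\!\!\perp\bm{\pi}\mid(\bm{Z},\bm{X})$. Equivalently, this is an immediate consequence of d-separation in the DAG: once $\bm{Z}$ is observed, every directed path between $\bm{\Theta}$ and $\bm{\pi}$ is blocked. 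There is no real obstacle here; the main care required is simply to verify that the model specification in Section~\ref{sec:fmm} really does make $\bm{X}$ conditionally independent of $\bm{\pi}$ given $(\bm{Z},\bm{\Theta})$, which is immediate from the likelihood $\bm{x}_i\mid z_i,\bm{\Theta}\sim F_x(\bm{\theta}_{z_i})$.
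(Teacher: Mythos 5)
Your proposal is correct and is essentially the paper's own argument: the paper simply asserts that the full conditional $p(\bm{\Theta},\bm{\pi}\mid\bm{Z},\bm{X})$ factorises into the product of the two conditional marginals, which is exactly the factorisation you derive explicitly from the hierarchical structure. Your write-up just fills in the routine details (the joint-density factorisation and the d-separation remark) that the paper leaves implicit.
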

\noindent This can be easily verified as the full conditional distribution $\bm{\Theta},\boldsymbol{\pi}\mid\bm{Z},\bm{X}$ can be factorised into the product of the two (conditional) marginals.
For our Gibbs sampler, we have the important following corollary.

\begin{lemma}\label{lemma:two_comp}
    For a fixed $\boldsymbol{\alpha}$, our Gibbs sampler $P_{\balpha}$ on $(\boldsymbol{Z},\boldsymbol{\Theta}, \boldsymbol{\pi})$ can be interpreted as a \textit{two-component} Gibbs sampler.
\end{lemma}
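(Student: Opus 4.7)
The plan is to reduce the apparent three-block sweep on $(\boldsymbol{Z},\boldsymbol{\Theta},\boldsymbol{\pi})$ to a two-block sweep by invoking the conditional independence in Lemma~\ref{lemma:cond_indep}. Concretely, inspecting the DIG update in Algorithm~\ref{alg:ARSG4}, once $\boldsymbol{Z}_t$ has been (partially) updated, the sampler draws $\boldsymbol{\pi}_t \sim p(\boldsymbol{\pi}\mid\boldsymbol{\Theta}_{t-1},\boldsymbol{z}_t,\boldsymbol{X})$ followed by $\boldsymbol{\Theta}_t \sim p(\boldsymbol{\Theta}\mid\boldsymbol{\pi}_t,\boldsymbol{z}_t,\boldsymbol{X})$. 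The first step I would take is to observe that, by Lemma~\ref{lemma:cond_indep}, $p(\boldsymbol{\pi}\mid\boldsymbol{\Theta}_{t-1},\boldsymbol{z}_t,\boldsymbol{X})$ does not in fact depend on $\boldsymbol{\Theta}_{t-1}$ and coincides with the marginal conditional $p(\boldsymbol{\pi}\mid\boldsymbol{z}_t,\boldsymbol{X})$, and analogously $p(\boldsymbol{\Theta}\mid\boldsymbol{\pi}_t,\boldsymbol{z}_t,\boldsymbol{X})=p(\boldsymbol{\Theta}\mid\boldsymbol{z}_t,\boldsymbol{X})$.

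Next I would combine these two draws into a single joint draw. Since $\boldsymbol{\pi}$ and $\boldsymbol{\Theta}$ are conditionally independent given $(\boldsymbol{z}_t,\boldsymbol{X})$, sampling them separately from their respective marginals produces a pair distributed exactly as $p(\boldsymbol{\Theta},\boldsymbol{\pi}\mid\boldsymbol{z}_t,\boldsymbol{X})$. In other words, the product of the two univariate Gibbs kernels equals the joint conditional kernel of the block $(\boldsymbol{\Theta},\boldsymbol{\pi})$, and the order of the two draws is immaterial. Together with the preceding $\boldsymbol{Z}$-update, this realises one full sweep as two successive conditional draws: one of $\boldsymbol{Z}$ given $(\boldsymbol{\Theta},\boldsymbol{\pi})$, and one of $(\boldsymbol{\Theta},\boldsymbol{\pi})$ given $\boldsymbol{Z}$, i.e.\ precisely a two-component Gibbs sampler with blocks $\boldsymbol{Z}$ and $(\boldsymbol{\Theta},\boldsymbol{\pi})$.

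A small point to clarify is the role of $\balpha$: for fixed $\balpha$, the index selection in the $\boldsymbol{Z}$-update is an auxiliary randomisation that leaves $\nu^{\ast}$ invariant and does not interact with the $(\boldsymbol{\Theta},\boldsymbol{\pi})$-block, so the two-component identification is unaffected. I do not anticipate any substantive obstacle; the only thing to handle carefully is the book-keeping that the composition of the $\boldsymbol{\pi}$ and $\boldsymbol{\Theta}$ kernels equals the joint conditional kernel, which is immediate from the factorisation implied by Lemma~\ref{lemma:cond_indep}.
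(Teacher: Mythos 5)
Your proposal is correct and follows essentially the same route as the paper: both invoke the conditional independence of $\boldsymbol{\Theta}\mid\boldsymbol{Z},\boldsymbol{X}$ and $\boldsymbol{\pi}\mid\boldsymbol{Z},\boldsymbol{X}$ (Lemma~\ref{lemma:cond_indep}) to reinterpret the two sequential draws of $\boldsymbol{\pi}$ and $\boldsymbol{\Theta}$ as a single joint draw from $p(\boldsymbol{\pi},\boldsymbol{\Theta}\mid\boldsymbol{Z},\boldsymbol{X})$, yielding a two-block sampler on $\boldsymbol{Z}$ and $(\boldsymbol{\Theta},\boldsymbol{\pi})$. You merely spell out the intermediate step (that each conditional coincides with its marginal conditional) more explicitly than the paper does.
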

\begin{proof}
    Consider one full update, as in steps 4-9 of Algorithm~\ref{alg:ARSG4}. Since we have conditional independence of $\bm{\Theta}\mid\bm{Z},\bm{X}$ and $\bm{\pi}\mid\bm{Z},\bm{X}$ from Lemma~\ref{lemma:cond_indep}, we can interpret steps~8-9 of Algorithm~\ref{alg:ARSG4} as jointly sampling $\boldsymbol{\pi}, \boldsymbol{\Theta}$ from the joint distribution $\boldsymbol{\pi,\Theta}\mid\boldsymbol{Z},\boldsymbol{X}$.
\end{proof}

\begin{lemma}
    Condition (DRI) of \citet{Andrieu2005} holds for our adaptive Gibbs chain. In turn, this implies that (A3) holds.
\end{lemma}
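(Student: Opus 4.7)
The plan is to check each clause of condition (DRI) from \citet{Andrieu2005} directly, exploiting three structural features already established for our sampler: the compactness of the parameter space $\mathcal{Y}$ (Lemma~\ref{lemma:alpha_compact}), the finiteness of the latent allocation space, and the two-component Gibbs representation from Lemma~\ref{lemma:two_comp}. Condition (DRI) asks for a drift/minorisation property of the family $\{P_{\balpha}\}$, a regularity bound on the mean-field noise $H$, and a Lipschitz-in-$\balpha$ bound on both $H$ and $P_{\balpha}$. Because $\balpha$ is constrained to a compact subset of the simplex whose components are uniformly bounded away from zero by some $\alpha_{\min} > 0$, I expect each clause to be verifiable with a constant that is uniform in $\balpha$, so that a constant drift function $V \equiv 1$ suffices.

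First I would establish a uniform minorisation for $\{P_{\balpha}\}_{\balpha \in \mathcal{Y}}$. Using Lemma~\ref{lemma:two_comp}, one iteration of $P_{\balpha}$ amounts to a two-component Gibbs step on $(\boldsymbol{z}, (\boldsymbol{\pi}, \boldsymbol{\Theta}))$ preceded by a selection of $m$ distinct indices with probabilities $\balpha$. Since $\boldsymbol{z}$ lives on the finite set $\{1,\ldots,K\}^n$ and every index has selection probability at least $\alpha_{\min}$, one can reach any target configuration in at most $n_0 = \lceil n/m \rceil$ steps with a probability bounded below by a constant $\delta > 0$ that depends only on $\alpha_{\min}$, $K$, $n$, and on a lower bound for the full conditional $p(z_i \mid \cdots)$ (which exists because $\boldsymbol{\Theta}$ and $\boldsymbol{\pi}$ marginalise against $G^{(0)}$ and the Dirichlet prior). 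This yields a Doeblin condition $P_{\balpha}^{n_0}(x,\cdot) \geq \delta\, \nu(\cdot)$ uniform in $\balpha$ and in $x$, giving the drift clause of (DRI) with $V \equiv 1$.

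Next I would verify the regularity of $H$. After iteration $s$ we have $\lambda_t \equiv 1$, so each component $D_i = \exp(-p_{i,z_i})$ lies in $[e^{-1}, 1]$; hence $\boldsymbol{D}$ and $H(\balpha, \cdot) = \boldsymbol{D}(\cdot) - \balpha$ are uniformly bounded, and $\|H(\balpha,\cdot) - H(\balpha',\cdot)\|_{\infty} = \|\balpha - \balpha'\|_{\infty}$ is trivially Lipschitz. The required moment bounds of $H$ against $\nu^{\ast}$ and against $V \equiv 1$ are then immediate.

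The step I expect to be the genuine obstacle is the Lipschitz continuity of $\balpha \mapsto P_{\balpha}$ in total variation, since the kernel involves sampling $m$ indices without replacement with probabilities $\balpha$. I would couple the without-replacement draws under two vectors $\balpha, \balpha' \in \mathcal{Y}$ using a maximal coupling on each successive draw, bound the probability of disagreement draw-by-draw by the total variation between the renormalised residual distributions, and use $\alpha_{\min} > 0$ to control the normalising constants, obtaining $\|P_{\balpha}(x,\cdot) - P_{\balpha'}(x,\cdot)\|_{\mathrm{TV}} \leq C \|\balpha - \balpha'\|_1$ with $C = C(m, \alpha_{\min})$; once the index sets agree, a synchronous coupling of the subsequent updates of $\boldsymbol{z}, \boldsymbol{\pi}, \boldsymbol{\Theta}$ keeps the two chains identical. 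Assembling the three clauses yields (DRI), from which (A3) follows as in \citet[Propositions~6.1 and~6.2]{Andrieu2005}, completing the conditions required to invoke their Theorem~5.5 in the proof of Theorem~\ref{thm:SA_conv}.
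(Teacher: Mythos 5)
Your overall architecture matches the paper's: reduce (DRI) to per-$\balpha$ drift and minorisation using the compactness of the $\balpha$-space (Lemma~\ref{lemma:alpha_compact}), exploit the two-component Gibbs structure of Lemma~\ref{lemma:two_comp} together with the finiteness of $\{1,\ldots,K\}^n$, and conclude (A3) from Proposition~6.1 of \citet{Andrieu2005}. You are in fact more explicit than the paper on the continuity clauses of (DRI): the maximal-coupling argument for the without-replacement index selection, giving $\|P_{\balpha}(x,\cdot)-P_{\balpha'}(x,\cdot)\|_{\mathrm{TV}}\le C\|\balpha-\balpha'\|_1$ with $C=C(m,\alpha_{\min})$, is a worthwhile addition that the paper subsumes under a brief appeal to compactness and the Feller property.

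However, your minorisation step has a genuine gap. You bound the probability of reaching any target allocation by a constant $\delta>0$ depending on ``a lower bound for the full conditional $p(z_i\mid\cdots)$''. That full conditional is $\pi_k f_x(\bm x_i\mid\bm\theta_k)/\sum_j \pi_j f_x(\bm x_i\mid\bm\theta_j)$, conditional on the \emph{current} $(\boldsymbol{\Theta},\boldsymbol{\pi})$, and it admits no lower bound that is uniform over the unbounded Gaussian parameter space: sending $\bm\mu_k$ off to infinity drives $p(z_i=k\mid\cdots)$ to zero, so the claimed Doeblin bound $P_{\balpha}^{n_0}(x,\cdot)\ge\delta\,\nu(\cdot)$ is not uniform in the starting state $x=(\boldsymbol{z},\boldsymbol{\Theta},\boldsymbol{\pi})$ as written. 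Your parenthetical about marginalising $\boldsymbol{\Theta}$ and $\boldsymbol{\pi}$ against their priors gestures at the correct fix, but marginalisation is incompatible with conditioning on them in the same expression. The repair is exactly the paper's route: by Lemma~\ref{lemma:two_comp} the chain is a two-component Gibbs sampler, so $(\boldsymbol{\Theta}_t,\boldsymbol{\pi}_t)$ is de-initialised by $\boldsymbol{z}_t$ \citep[see][]{diebolt1994estimation,Roberts2001}; the marginal $\boldsymbol{z}$-chain, whose transition probabilities \emph{are} the integrated quantities and are strictly positive on a finite space, is uniformly ergodic; and uniform ergodicity of the joint chain---hence the drift/minorisation clause with your $V\equiv 1$---follows because the dependence on an arbitrary $(\boldsymbol{\Theta}_0,\boldsymbol{\pi}_0)$ only enters through the first step. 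With that substitution your argument goes through.
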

\begin{proof}
    Proposition~6.1 of \citet{Andrieu2005} shows that (DRI) implies (A3), so we focus on (DRI). Since $\balpha$ is confined to a compact state space (since $D$ is bounded), and in our Gaussian mixture case we have the Feller property (see, e.g. \citet{Hobert1998}), it is sufficient to establish drift and minorisation for a given $P_{\balpha}$.\\
    We establish this using the fact that the chain $P_{\balpha}$ alternately samples from two fixed conditional kernels by Lemma~\ref{lemma:two_comp}. This property implies that the convergence properties of the overall chain are the same as those of the marginal chain. See for instance \citet{diebolt1994estimation}, where this is referred to \textit{duality} and \citet[Section~3.3]{Roberts2001} where it is referred to as \textit{de-initialising}.\\
    In particular, since the $\boldsymbol{Z}$ variables are on a compact state space, their corresponding marginal chain is uniformly ergodic. Hence the overall joint chain is uniformly ergodic, and a drift condition must hold, e.g. by \citet[Theorem~15.0.1]{Meyn1993a}.
\end{proof}

\noindent Taken together, we have established the prerequisites of \cite[Theorem~5.5]{Andrieu2005}, from which we conclude that the desired convergence holds.

\begin{lemma}
    The vectors $\bm{\alpha}$ take values in a fixed compact subset of $\mathbb R^d$.
    \label{lemma:alpha_compact}
\end{lemma}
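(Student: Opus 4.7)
The plan is to exploit the fact that every ingredient in the update of $\boldsymbol{\alpha}_t$ takes values in a fixed bounded range, so that a simple induction plus the normalisation step confines $\boldsymbol{\alpha}_t$ to a compact subset of the simplex in $\mathbb R^n$.

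First I would control the discomfort vectors. By construction the posterior component probability satisfies $p_{i,z_i^t}^t \in [0,1]$, and by the schedule \eqref{eq:lambda_schedule} the decay parameter is constrained to the compact interval $\lambda_t \in [1,\Lambda]$. Hence each entry of $\boldsymbol{D}_t^{\lambda_t}$ satisfies
\begin{equation*}
D_{i,t}^{\lambda_t} = \exp(-\lambda_t\, p_{i,z_i^t}^t) \in [e^{-\Lambda},\,1],
\end{equation*}
uniformly in $i$ and $t$. This is the crucial bound; the choice of $\Lambda$ (and the fact that $\lambda_t$ is bounded away from $0$) is precisely what makes this range independent of $t$.

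Next I would show by induction on $t$ that the unnormalised iterates also lie in $[e^{-\Lambda},1]^n$. Assuming the initialisation is chosen so that $\alpha_{0,i} \in [e^{-\Lambda},1]$ for every $i$ (and the paper normalises to a probability vector, so one may take $\boldsymbol{\alpha}_0 = (1/n,\dots,1/n)$, which lies in this range when $\Lambda \geq \log n$; otherwise a harmless clipping at initialisation suffices), the update \eqref{eq:adaptrule} is a convex combination with weights $f(t),g(t) \geq 0$ summing to $1$, and both inputs lie in $[e^{-\Lambda},1]$. Hence $\alpha_{i,t} \in [e^{-\Lambda},1]$ as well. The same property is preserved across the subsequent phase ($t > s$) where the polynomial weights again satisfy $f(t)+g(t)=1$ with $f(t),g(t)\in[0,1]$.

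Finally I would normalise. Writing $S_t := \sum_{j=1}^n \alpha_{j,t}$, the bounds above give $S_t \in [ne^{-\Lambda},\,n]$, so each normalised entry satisfies
\begin{equation*}
\alpha_{i,t}' \;=\; \frac{\alpha_{i,t}}{S_t} \;\in\; \left[\frac{e^{-\Lambda}}{n},\; \frac{e^{\Lambda}}{n}\right].
\end{equation*}
Thus $\boldsymbol{\alpha}_t'$ lies in the intersection of the standard $(n-1)$-simplex with the box $[e^{-\Lambda}/n, e^{\Lambda}/n]^n$, a closed and bounded subset of $\mathbb R^n$, hence compact, and manifestly independent of $t$. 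This is the claimed fixed compact subset (and, incidentally, it is a subset of the interior of the simplex, which is exactly what was needed for the containment argument of Section~\ref{sec:theory}). The argument is essentially bookkeeping; there is no real obstacle, and the only subtle point is ensuring the lower bound on $\lambda_t$ in the schedule \eqref{eq:lambda_schedule}, which is guaranteed by the constraint $\lambda_t \in [1,\Lambda]$.
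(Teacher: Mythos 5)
Your proof is correct and follows essentially the same route as the paper's: bound each entry of $\bm{D}_t^{\lambda_t}$ in $[\exp(-\Lambda),1]$ using $p_{i,z_i^t}^t\in[0,1]$ and $\lambda_t\in[1,\Lambda]$, then exploit the fact that the update \eqref{eq:adaptrule} is a convex combination. The only real difference is that the paper needs no range assumption (or clipping) on $\bm{\alpha}_0$: it unrolls the recursion into explicit convex weights $\omega_{t,k}$ and places $\bm{\alpha}_t$ in the convex hull of $\{\bm{\alpha}_0, \bm{D}_1^{\lambda_1},\ldots,\bm{D}_t^{\lambda_t}\}$, which is automatically contained in a fixed compact set.
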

\begin{proof} We aim to prove that the sequence of vectors \(\bm{\alpha}_t \in \mathbb{R}^n\), defined recursively, lies in a compact subset of \(\mathbb{R}^n\): we show that the sequence \(\bm{\alpha}_t\) lives in a bounded set. Let \(\bm{\alpha}_t \in \mathbb{R}^n\) be defined recursively as:
\begin{equation}
    \bm{\alpha}_t = f(t) \bm{\alpha}_{t-1} + g(t) \bm{D}_t^{\lambda_t},
    \label{eq:recurrence}
\end{equation}
where \(\bm{D}_t^{\lambda_t} \in \mathbb{R}^n\) is a vector that depends on a probability allocation \(\bm{p}_t \in [0,1]^n\) and the decay parameter \(\lambda_t \in [1, \Lambda] \subset \mathbb{R}_+\), with \(\Lambda > 1\) a fixed user-defined upper bound. In particular, each component of \(\bm{D}_t^{\lambda_t}\) is defined as $D_{i,t}^{\lambda_t} = \exp\left(-\lambda_t p_{i,t}\right)$, which implies that \(\bm{D}_t^{\lambda_t} \in [\exp(-\Lambda), 1]^n\). Therefore, for every \(t\), the vector \(\bm{D}_t^{\lambda_t}\) lies in a compact subset of \(\mathbb{R}^n\). \\
The functions \(f(t)\) and \(g(t)\) are defined according to two distinct regimes:
\begin{equation*}
f(t) =
\begin{cases}
\frac{1}{2} \left(1 + \tanh\left(\frac{t - s}{a}\right)\right), & t \leq s \\
\frac{t - s + 1}{t - s + 2}, & t > s
\end{cases}, \qquad
g(t) =
\begin{cases}
\frac{1}{2} \left(1 - \tanh\left(\frac{t - s}{a}\right)\right), & t \leq s \\
\frac{1}{t - s + 2}, & t > s
\end{cases}
\end{equation*}

\noindent where \(s \in \mathbb{N}\) and $a \in \mathbb{R}_{+}$. The functions \(f(t)\) and \(g(t)\) are scalar weights that depend on \(t\) and satisfy the condition \(f(t) + g(t) = 1\) for all \(t\). As a result, each update \(\bm{\alpha}_t\) in Equation~\eqref{eq:recurrence} is a convex combination of \(\bm{\alpha}_{t-1}\) and \(\bm{D}_t^{\lambda_t}\). By unrolling the recursion, we obtain for any \(t \geq 1\):
\[
\bm{\alpha}_t = \omega_{t,0} \bm{\alpha}_0 + \sum_{k=1}^t \omega_{t,k} \bm{D}_k^{\lambda_k},
\]
where the weights \(\omega_{t,k}\) are defined recursively as:
\begin{equation*}
    \omega_{t,k} = g(k) \prod_{j=k+1}^{t} f(j), \quad \text{for } 1 \leq k \leq t \quad \text{and } \, 
    \omega_{t,0} = \prod_{j=1}^{t} f(j).
\end{equation*}
Each \(\omega_{t,k} \geq 0\), and the sum satisfies:
\[
\sum_{k=0}^t \omega_{t,k} = 1,
\]
so the vector \(\bm{\alpha}_t\) is a convex combination of \(\bm{\alpha}_0\) and the vectors \(\bm{D}_k^{\lambda_k}\), for \(k = 1, \ldots, t\). Since $\bm{\alpha}_t \in \text{conv} \left( \{ \bm{\alpha}_0, \bm{D}_1^{\lambda_1}, \ldots, \bm{D}_t^{\lambda_t} \} \right)$, and each \(\bm{D}_k^{\lambda_k} \in [\exp(-\Lambda), 1]^n\), which is a compact set in \(\mathbb{R}^n\), we conclude that \(\bm{\alpha}_t\) lies in the convex hull of a finite number of points from a compact set---hence, \(\bm{\alpha}_t\) itself lies in a compact set.
This holds both in the regime \(t \leq s\), where weights are smoothly varying, and in the regime \(t > s\), where weights follow a polynomial decay.
\end{proof}


Figure \ref{fig:resultConv} provides empirical evidence for this convergence, showing the spatial pattern of convergence differences diminishing over time for the Miller-Harrison dataset.

\begin{figure}[!ht]
    \centering
    \includegraphics[scale = 0.82]{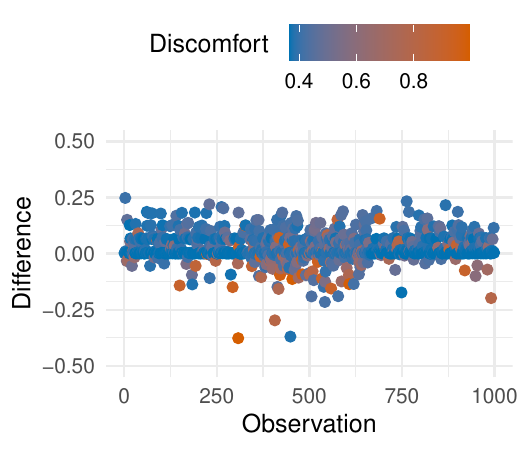}
    \includegraphics[scale = 0.82]{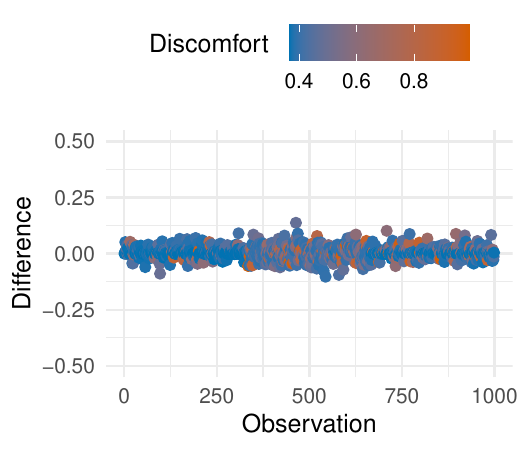}
    \caption{Miller-Harrison dataset with $n = 1000$ and $d = 2$: Empirical convergence of the selection probabilities. Left panel shows the elementwise differences between $\boldsymbol{\alpha}_t$ and $\mathbb E_{\nu^{\ast}}[\boldsymbol{D}^1]$ at iteration 1000; right panel shows the same at iteration 10000. The convergence is evident as the differences approach zero.}
    \label{fig:resultConv}
\end{figure}

\end{document}